\documentclass[12pt]{article}

\usepackage{natbib}
\usepackage{setspace}
\usepackage[left=1in,top=1in,right=1in,bottom=1in]{geometry}
\usepackage{graphicx}
\usepackage{amsthm}
\usepackage{amssymb}
\usepackage{amsmath}

\usepackage{version}
\usepackage[dvipsnames]{xcolor}
\usepackage[normalem]{ulem}
\usepackage{multirow}
\usepackage{multicol}
\usepackage{epstopdf}
\usepackage{xcolor}
\usepackage{subcaption}
\usepackage{floatrow}
\usepackage{enumitem}
\usepackage{tikz}
\usetikzlibrary{arrows,arrows.meta,decorations,decorations.pathreplacing,calc,matrix}

\definecolor{Red}{rgb}{1,0,0}
\definecolor{Blue}{rgb}{0,0,1}
\definecolor{Green}{rgb}{0,1,0}
\definecolor{magenta}{rgb}{1,0,.6}
\definecolor{lightblue}{rgb}{0,.5,1}
\definecolor{lightpurple}{rgb}{.6,.4,1}
\definecolor{gold}{rgb}{.6,.5,0}
\definecolor{orange}{rgb}{1,0.4,0}
\definecolor{hotpink}{rgb}{1,0,0.5}
\definecolor{newcolor2}{rgb}{.5,.3,.5}
\definecolor{newcolor}{rgb}{0,.3,1}
\definecolor{newcolor3}{rgb}{1,0,.35}
\definecolor{darkgreen1}{rgb}{0, .35, 0}
\definecolor{darkgreen}{rgb}{0, .6, 0}
\definecolor{darkred}{rgb}{.75,0,0}
\definecolor{lightgrey}{rgb}{.7,.7,.7}

\definecolor{clemson-orange}{RGB}{234,106,32}
\definecolor{chicago-maroon}{RGB}{128,0,0}
\definecolor{northwestern-purple}{RGB}{82,0,99}
\definecolor{cornell-red}{RGB}{179,27,27}
\definecolor{sauder-green}{RGB}{171,180,0}
\definecolor{lawngreen}{RGB}{0,250,154}

\usepackage[unicode=true,
 bookmarks=true,bookmarksnumbered=true,bookmarksopen=true,bookmarksopenlevel=4,
 breaklinks=false,pdfborder={0 0 1},backref=false,colorlinks=true]
 {hyperref}
\hypersetup{ pdfborderstyle={},urlcolor=chicago-maroon,linkcolor=chicago-maroon,citecolor=chicago-maroon}

\DeclareMathOperator*{\argmax}{\textrm\upshape{arg\,max}}

\usepackage{times}
\newtheorem{theorem}{Theorem}

\newtheorem{assumption}{Assumption}

\newtheorem{corollary}{Corollary}

\newtheorem{definition}{Definition}
\newtheorem{example}{Example}

\newtheorem{lemma}{Lemma}

\newtheorem{proposition}{Proposition}

\newtheorem{remark}{Remark}

\usepackage[nameinlink]{cleveref}
\crefname{assumption}{Assumption}{Assumptions}
\crefname{lemma}{Lemma}{Lemmas}
\crefname{theorem}{Theorem}{Theorems}
\crefname{corollary}{Corollary}{Corollaries}
\crefname{proposition}{Proposition}{Propositions}
\crefname{claim}{Claim}{Claims}
\crefname{procedure}{Procedure}{Procedures}
\crefname{algorithm}{Algorithm}{Algorithms}
\crefname{figure}{Figure}{Figures}
\crefname{remark}{Remark}{Remarks}
\crefname{section}{Section}{Sections}
\crefname{procedure}{Procedure}{Procedures}
\crefname{example}{Example}{Examples}
\crefname{definition}{Definition}{Definitions}
\crefname{table}{Table}{Tables}
\crefname{equation}{}{}
\crefname{enumi}{}{}
\crefname{conjecture}{Conjecture}{Conjectures}
\crefname{step}{Step}{Steps}
\crefname{appendix}{Appendix}{Appendices}
\crefname{footnote}{Footnote}{Footnotes}

\begin{document}

\title{Nested Search\footnote{This paper grew out of a conversation with Eddie Dekel. I am grateful to him and to Asher Wolinsky for continuous guidance and support. I thank Arjada Bardhi, Vivek Bhattacharya,  Hersh Chopra, Piotr Dworczak, Stephan Lauermann, Shuo Liu, Yijun Liu, Suraj Malladi, Yoon Sang Moon,  
Ilya Morozov, Alessandro Pavan, Harry Pei, Harrison Ridland, Fanqi Shi, Marciano Siniscalchi, Bruno Strulovici, Siqi Zheng, Shusheng Zhong, Yangfan Zhou for helpful comments. I am especially thankful to Wojciech Olszewski for continuous discussion through all stages of the project. All errors are my own.}
\author{Yutong Zhang\thanks{Department of Economics, Northwestern University. Email: zhangyutong2017@u.northwestern.edu}\\\textit{Preliminary and Incomplete}}}
\date{\today}
\maketitle

\begin{abstract}
I introduce and study a nested search problem modeled as a tree structure that generalizes \cite{weitzman1979optimal} in two ways: (1) search progresses incrementally, reflecting real-life scenarios where agents gradually acquire information about the prizes; and (2) the realization of prizes can be correlated, capturing similarities among them. I derive the optimal policy, which takes the form of an index solution. I apply this result to study monopolistic competition in a market with two stages of product inspection. My application illustrates that regulations on drip pricing lower equilibrium price and raise consumer surplus.

\end{abstract}

\textit{Keywords}: search; index solution; monopolistic competition; drip pricing

\thispagestyle{empty}



\let \markeverypar \everypar
\newtoks \everypar
\everypar \markeverypar
\markeverypar{\the \everypar \looseness=-2\relax}
\newpage
\setcounter{page}{1}

\section{Introduction}
\cite{weitzman1979optimal} studies a dynamic decision-making problem in which an agent, faced with a set of boxes containing unknown and independently realized prizes, sequentially searches through them at some cost. I study a nested search model that generalizes this problem in two ways. First, search progresses incrementally, reflecting real-life scenarios where agents gradually acquire information about the prizes. Second, the realization of prizes may be correlated, capturing similarities among them.

I employ a tree structure that captures both features simultaneously. Before search begins, the agent is positioned at the tree’s root, and each terminal node represents a prize. Except for the root, each node is associated with a random variable and a cost; by paying that cost to visit the node, the agent observes the realized value of its random variable. The prize at any terminal node is determined by the realization of the random variables of its ancestors and itself. At each step, the agent decides either (i) which uninspected child node of any visited node to inspect next or (ii) whether to stop searching and claim a box whose inspections are complete (if there is any). The agent cannot inspect a child node without first inspecting all its ancestors.

This tree structure accommodates the first, incremental search, feature by gradually revealing prize information as the agent moves from the root toward a terminal node, so the prize’s true realization becomes known only after all inspection stages are finished. The second feature, correlated payoffs, is  built into the tree: terminal nodes sharing a longer path from the root exhibit greater similarity.

In my problem, an index policy maximizes the agent's expected payoff. It assigns an index to every node except the root, where each node’s index is a random variable depending solely on the realized signals of its ancestors and on the distributions and search costs of the subtree rooted at that node. The policy is to visit the node with the highest index, provided this index exceeds the highest prize among fully inspected boxes.

From a theoretical point of view, my work contributes to the directed search literature in the following three aspects: First, it enriches the standard search model by considering multi-dimensional attributes of each prize, as each ancestor of a terminal point can be interpreted as one of its attributes. Second, the tree structure allows me to introduce correlation to the model and flexibly capture different levels of similarities  among boxes. Lastly, it is a middle ground between sequential search, where the agent decides which box to inspect at each step, pays the associated cost, observes the revealed prize, and then chooses the next box, and simultaneous search, where the agent chooses a subset of the boxes to inspect, and the realization of all prizes is revealed only after paying all the search cost. In my model, only partial information of the prize is revealed to the agent to guide future search decision.

From a practical point of view, my model naturally applies to numerous real-world contexts that cannot be captured by \cite{weitzman1979optimal} and numerous applied papers based on it: 
\begin{enumerate}
    \item R\&D: A pharmaceutical company aims to develop a new drug for sale. The research process unfolds in a nested, step-by-step manner guided by scientific protocols—early studies inform subsequent development stages, culminating in a final, market-ready product. In this framework, the terminal nodes represent the potential final drugs the company could bring to market. Progressing through the tree requires costly experiments at each stage to evaluate the viability of different drug candidates.
    \item Housing market: A buyer is searching for a house to purchase. The buyer begins by inspecting various neighborhoods to gather preliminary information on available houses. Next, the search narrows to specific streets within a chosen neighborhood. Finally, individual houses are inspected in greater detail to make a final decision.
    \item Degree-seeking: A student pursuing a college degree begins by completing prerequisite courses common to multiple majors, gathering broad insights. Next, he  narrows his focus and explores advanced classes to learn about potentials of different degrees and finally decides on which degree to get.
    \item Online search: A buyer is looking for a mug on an e-commerce platform, such as Amazon. After entering keywords ``mug", initial information—such as prices, monthly sales, and ratings—appears on the screen. To learn more detail, including sizes, colors, the buyer must click on a product page and be redirected to a new page with additional information.\footnote{In this example, products with different designs are treated as distinct products.} The buyer can always exit the page and click on a different product page if he does not find the products match his preference.
\end{enumerate}

\paragraph{Results} The first part of the paper (Sections \ref{s2}-\ref{s3}) deals with the general model and characterizes the optimal search policy. The key result in this part is the optimality of an index policy (Theorem \ref{opt}). I define the indexes in Definition \ref{cap} and formally state the index policy in Definition \ref{defindex}. To prove its optimality, I proceed in three steps. I first prove an upper-bound on the agent's expected payoff for any policy (Lemma \ref{upperbound}). Then I show that this upper-bound is tight for the index policy (Lemma \ref{nonexposure}). Finally, I show that the index policy also maximizes the upper-bound (Lemma \ref{maxbound}), thus maximizing the agent's expected payoff. An important by-product of Theorem \ref{opt} is a discrete choice formulation, which proves useful for analyzing consumer surplus and demand in the application I study.

The second part of the paper applies the theoretical results to an application. In Section \ref{s4}, I study monopolistic competition in a market with two stages of product inspection. There is an infinite number of firms in the market and each firm offers one product for sale. The buyer can pay a positive cost to acquire partial information about his match value for a firm’s product and, based on that information, either incur an additional cost to learn the true realization of the match value or stop and visit another firm. I characterize the equilibrium price under scenarios where prices are revealed at the first stage (Proposition \ref{equi1}) or at the second stage (Proposition \ref{equi2}) and compare it with the benchmark case with a single stage of product inspection. I find that when prices are revealed at the first stage, the equilibrium price is always lower than that in the benchmark (Theorem \ref{lowerprice}). I identify two market forces that lead to this decrease. First, the option to stop after observing partial information about the match value makes it more worthwhile to sample another unvisited firm, which intensifies competition (Lemma \ref{option}); second, when only partial information is revealed at the first stage, the buyer responds more aggressively as he leaves the firm immediately after seeing an unfavorable signal, without bothering to check the remaining information, which could still turn out to be favorable (Lemma \ref{elasticity}). Furthermore, since the buyer also benefits from the option value generated by gradual resolution of uncertainty, the consumer surplus is higher than that of the benchmark case (Corollary \ref{highercs}). 

When prices are revealed at the second stage, the equilibrium price is higher than that when it is revealed at the first stage (Proposition \ref{stage2higher}). However, comparison of prices with the benchmark is ambiguous. I identify two markets forces that work in opposite directions. On the one hand, similar as before, the option to stop halfway intensifies competition and tends to drag down the equilibrium price; on the other hand, for a buyer who enters the second stage, the firm knows that he must have observed a favorable signal at the previous stage and thus has an incentive to raise the price. Consumer surplus is likewise ambiguous as the gain from the option value may be fully offset (or even outweighed) by the higher equilibrium price. 

I then endogenize the timing of price revelation as part of the firms' strategy and study the equilibrium price revelation policy (when to reveal and what to reveal) in markets with or without drip pricing regulation. Under the regulation, if a firm discloses a price at stage 1, the price disclosed at stage 2 must be weakly lower. I find that in unregulated markets, since the price revealed at stage 1 is only a cheap talk message, any equilibrium is outcome equivalent to one in which each firm reveals its price at stage 2, and this price coincides with the equilibrium price that arises when the revelation policy is exogenously fixed to disclosure at stage 2 (Proposition \ref{unique2}). In a regulated market, by contrast, the regulation enables firms to indeed commit to the price revealed at stage 1 and this price coincides with the equilibrium price that arises when the revelation policy is exogenously fixed to disclosure at stage 1 (Proposition \ref{unique1}). Since the market price is always lower when prices are revealed early (Proposition \ref{stage2higher}), consumers indeed benefit from such regulation (Corollary \ref{effregulation}).

\paragraph{Literature review}
This paper is related to several strands of literature. I discuss them separately.

\bigskip
\noindent\textit{Pandora's problem and branching bandit}

My work is closely related to research on Pandora’s problem, originally introduced by \cite{weitzman1979optimal}. Since then, various extensions have been studied, including nonobligatory inspection in \cite{doval2018whether}, simultaneous search in \cite{chade2006simultaneous}, general utility functions in \cite{olszewski2015more} and search with learning in \cite{adam2001learning}. My work generalizes this model by allowing for gradual resolution of uncertainty and correlation embedded in a tree structure. For the first feature, \cite{kleinberg2016descending} and \cite{bowers2024matching} consider Pandora's problem with multiple stages of inspection for each box and establish the optimality of Gittins index policy following a proof in the spirit of \cite{weber1992gittins}. For the second feature, the computer science literature also relaxes the independence assumption, but it typically combines both correlation and learning about the distribution of the prizes. The focus is also more on the performance of different algorithms, rather than economic applications. For further reference, see Section 2.4 in \cite{beyhaghi2024recent}.




My model is a special case of a branching bandit problem, for which the optimality of an index solution was established long ago by \cite{weiss1988branching}. Several other economics papers who are also special cases of it include \cite{fershtman2021searching} and \cite{greminger2022optimal}. The mostly closely related paper is \cite{keller2003branching}, who also formulate their problem using a tree structure and give a non-constructive proof of the optimality of an index policy. In contrast, I provide a constructive proof in the spirit of \cite{weber1992gittins} that yields explicitly characterizations of the agent's expected payoff and choice behavior, which cannot be obtained from \cite{keller2003branching}.

\bigskip
\noindent\textit{Price competition}

\cite{weitzman1979optimal} is typically used in IO theory literature to study price competition when consumers need to incur positive search cost to inspect each firm's product match value or price (or both). A large literature studies how search frictions shape equilibrium pricing and market power. The most seminal papers are \cite{wolinsky1986true} and \cite{anderson1999pricing}, who study price competition in the presence of search cost and product differentiation. My model differs in that I consider markets with gradual resolution of uncertainty and study how it affects price competition and consumer surplus, which cannot be studied in their model with one shot resolution of uncertainty.

\bigskip
\noindent\textit{Drip pricing}

One strand of the theoretical literature shows that drip pricing can arise from behavioral bias (e.g., \cite{gabaix2006shrouded}). Another strand shows that hidden fees and obfuscation can arise even with fully rational consumers in the presence of search cost. For example, in models of add-on pricing and obfuscation, firms may strategically delay or complicate the revelation of relevant price components, which softens price competition and raises profit (e.g., \cite{ellison2012search}, \cite{ellison2005model}). My work falls in the second strand but differs from existing work in that I study a two-stage search model and explicitly model drip pricing policy as what and when to reveal the price. Also, the economic force that drives my result is new: regulation forbids a higher price at stage 2, therefore giving firms commitment power over stage 1 price. Competition thus happens at stage 1, when agents are not locked in by a high realization of match value.


\paragraph{Outline} The rest of the paper is organized as follows: Section \ref{s2} describes the model setup; Section \ref{s3} defines an index policy and proves its optimality; Sections \ref{s4}  explores an application of nested search, using the index solution as a tool; Section \ref{s5} provides a discussion of the results; Section \ref{s6} concludes. Supporting analysis and omitted proofs are relegated to the Appendix.

\section{Model}\label{s2}
I lay out the model in Sections \ref{s21}-\ref{s23}.  In Section \ref{s25}, I discuss and justify modeling choices using real-life examples.
\subsection{(Nested) tree structure}\label{s21}
An agent is presented with a set of boxes, each containing an unknown prize. He can take at most one prize or obtain a fixed outside option of 0. Unlike the setup in \cite{weitzman1979optimal}, the prize distributions here may be correlated. This correlation, interpreted as  similarities among boxes, is modeled by a tree structure where each terminal node represents a box. Let \(R\) denote the root node. Its child nodes are \(S_1, S_2, \cdots, S_n\), representing the first child node, the second child node and so on. For \(S_i\), its child nodes are \(S_{i,1},S_{i,2},\cdots,S_{i,k}\), representing the first child node of \(S_i\), the second child node of \(S_i\) and so on. The tree is recursively defined in this way. Two nodes that share the same parent node are called siblings. A node is called a terminal node if it has no child nodes. Let \(\mathcal{T}\) denote the set of terminal nodes, i.e., boxes. I use \(e_{i_1,\cdots,i_k}\) to denote the edge \((S_{i_1,i_2,\cdots,i_{k-1}},S_{i_1,i_2,\cdots,i_{k-1},i_{k}})\). There is a random variable \(X_{i_1,i_2,\cdots,i_{k}}\), distributed according to \(F_{i_1,\cdots,i_k}\), attached to each edge \(e_{i_1,\cdots,i_k}\), with \(x_{i_1,i_2,\cdots,i_{k}}\) representing its realization. The realization of the final prize for a box \(S_{i_1,\cdots,i_{n}}\) is determined by all the edges connecting it to the root through the function \(v^{i_1,\cdots,i_n}(x_{i_1},x_{i_1,i_2},\cdots,x_{i_1,\cdots,i_n})\). Therefore, the more common edges shared by two boxes, the more similarity they have. Since \(v\) depends on the whole path, even for two boxes sharing many edges and thus realizations of \(X\), their true prizes  can be very different, even though they are indeed more correlated. For the node \(S_{i_1,\cdots,i_k}\), let \(\mathcal{D}_{i_1,\cdots,i_k}\) denote its descendants, including itself.

\subsection{Search}\label{s22}
The realization of each random variable is unknown ex-ante; the agent has to incur some search cost to learn its true realization. Specifically, the search cost associated with edge \(e_{i_1,\cdots,i_k}\) is represented by \(c_{i_1,i_2,\cdots,i_{k}}\) and is assumed to be a constant. The agent has to pay this cost to learn \(x_{i_1,\cdots,i_k}\). There is a one-to-one correspondence between the edge \(e_{i_1,\cdots,i_k}\) and the node \(S_{i_1,\cdots,i_k}\), so inspecting \(e_{i_1,\cdots,i_k}\) and inspecting node \(S_{i_1,\cdots,i_k}\) have the same meaning and I use them interchangably. 
When searching, the agent has to follow the order specified by the tree structure, i.e., the agent cannot inspect a node without first inspecting all its ancestors. However, at any point, he may choose not to inspect the child nodes of the node he just probed and jump to a previously inspected  node to examine its uninspected child node, or cease searching and select a prize among the boxes for which all inspection stages have been completed (if there is any). Importantly, the agent can claim a box only after he has completed all inspection stages, i.e., inspecting all the edges connecting it to the root. Search is with free recall, i.e., the agent can always go back to a previous node or box without paying additional cost. 

I assume the tree structure, the distribution of each random variable \(X_{i_1,\cdots,i_k}\), the search cost \(c_{i_1,\cdots,i_k}\) and the value function \(v^{i_1,\cdots,i_n}\) are known to the agent before search. Additionally, I assume that all correlations are captured by the tree structure, as stated below:
\begin{assumption}\label{independence}
Conditional on \(X_{i_1},\cdots,X_{i_1,\cdots,i_{n-1}}\), \(X_{i_1,\cdots,i_n}\) is independent of the random variables associated with all other nodes except its descendants.
\end{assumption}
This assumption requires that, given any node \(S_{i_1,\cdots,i_k}\), once conditioned on the values of the random variables corresponding to the edges connecting it to the root, the realization of \(X_{i_1,\cdots,i_k}\) is independent of those associated with \(S_{i_1,\cdots,i_k}\)'s  siblings and the siblings of its ancestors. In \cite{weitzman1979optimal} where every terminal node is a child of the root, Assumption \ref{independence} reduces to the classic independence assumption.
\begin{remark}
There are two ways to understand the function \(v\). First, one may interpret the random variables as attributes that contribute to the final payoff through the function \(v\), as in the hedonic pricing model commonly used to estimate real estate values. Alternatively, the random variables can be viewed as a sequence of informative signals about the true valuation. In this interpretation, for a terminal node \(S_{i_1,\cdots,i_n}\), the payoff is given by \(v^{i_1,\cdots,i_n} = x_{i_1,\cdots,i_n}\), while the other random variables along the path, although not entering \(v^{i_1,\cdots,i_n}\) directly, are correlated with \(X_{i_1,\cdots,i_n}\).

\end{remark}

\subsection{Policy}\label{s23}
At each stage, given the set of nodes the agent has inspected and the set of boxes for which all inspection stages are complete, a policy either inspects an uninspected node \(S_{j_1,\ldots,j_k}\) that is a child of some already-inspected node, or stops searching and takes the larger of the highest realized prize and the outside option.  Given a policy \(\chi\), let \(\mathcal{O}(\chi)\) denote the random set of boxes whose inspection stages are fully completed, and let \(\mathcal{S}(\chi)\) be the set of nodes inspected under \(\chi\).\footnote{\(\mathcal{S}(\chi)\) is a sufficient statistic for \(\mathcal{O}(\chi)\). I introduce \(\mathcal{O}(\chi)\) for ease of exposition.}

My goal is to identify the optimal policy that maximizes the agent's expected payoff:
\begin{align*}
    \mathbb{E}\left[\max \Biggl\{\max_{S_{i_1,\cdots,i_n} \in \mathcal{O}(\chi)}v^{i_1,\cdots,i_n}(X_{i_1},\cdots,X_{i_1,\cdots,i_n}),0\Biggr\} -\sum_{S_{j_1,\cdots,j_k} \in \mathcal{S}(\chi)} c_{j_1,\cdots,j_k} \right].
\end{align*}

\subsection{Discussion of modeling choices}\label{s25}
I discuss the key assumptions used throughout the paper.
\paragraph{Precedence constraint} I assume the agent must follow the search order specified by the tree structure. This assumption is especially appropriate when the search process has a natural logical progression. For instance, in many R\&D contexts—such as drug design—the research process typically proceeds in a structured, step-by-step manner dictated by scientific protocols. This reinforces the assumption that the search must follow a predetermined order. In the degree-seeking example, students must first complete prerequisite courses before enrolling in more advanced classes. Furthermore, in the context of online search, platforms can often impose a fixed inspection order on users, who are typically unable to skip steps during the process.

\paragraph{Obligatory inspection}
I assume an obligatory inspection procedure, meaning the agent cannot claim any box without completing all its inspection stages. This assumption is supported by both theoretical considerations and practical relevance. Theoretically, it is a standard assumption embedded in most search models; without it, as shown by \cite{doval2018whether}, the problem becomes prohibitively complex. Practically, the assumption aligns well with many real-world applications. It is particularly reasonable in R\&D settings where a firm must finalize every stage of product development before bringing a salable product to market. Degree-seeking is another example, where a student must complete all courses, both preliminary and advanced, to fulfill the requirements and obtain a degree. In online search environments, platforms or firms often structure the process so that the purchase decision is deferred to the very end until all inspection steps have been completed.

\paragraph{A priori known tree structure}
I assume that the tree structure is fully known to the agent prior to the search. This assumption is reasonable in several settings: in R\&D, where research process is often highly stylized; in degree-seeking contexts, where all course requirements are specified in advance; and in online search environments, where the sequence of steps is typically predetermined. Nonetheless, it is certainly of interest to consider an extension of the model in which the tree structure is random.

\section{Optimal policy}\label{s3}
In this section, I define an index policy and prove its optimality. In Section \ref{s31}, I calculate an index for each node (except for the root) and formalize the index solution. To prove its optimality in Section \ref{s32}, I first provide an upper-bound on the agent's expected payoff. Then I prove that under the index policy, the upper-bound is obtained with equality. Finally, I show that this upper-bound is maximized by the index policy. 

\subsection{Index policy}\label{s31}
Let \(N_0\) be the number of child nodes of the root and \(N_{i_1,\cdots,i_k}\) be the number of child nodes of \(S_{i_1,\cdots,i_k}\). For any terminal node \(S_{i_1,\cdots,i_n}\), set \(N_{i_1,\cdots,i_n}=1\).\footnote{Terminal nodes have no child nodes, but I set the number to 1 so that \(\kappa\) is well defined in Definition \ref{cap} for terminal nodes.} I define for each node in the tree (except for the root) two variables, capped value \(\kappa\) and index \(\sigma\). My definition generalizes Definition 4 in \cite{bowers2024matching} and incorporates it as a special case:
\begin{definition}\label{cap}
The index for the node \(S_{i_1,\cdots,i_k}\), which is denoted \(\sigma_{i_1,\cdots,i_k}\), is a function of \(X_{i_1},\cdots,X_{i_1,\cdots,i_{k-1}}\) that uniquely solves
\begin{align*}
    \mathbb{E}\left[\left(\max_{1 \leq l \leq N_{i_1,\cdots,i_k}} \kappa_{i_1,\cdots,i_{k},l}-\sigma_{i_1,\cdots,i_k}\right)^{+}\Big|X_{i_1},\cdots, X_{i_{k-1}}\right]=c_{i_1,\cdots,i_{k}},
\end{align*}
where the capped value is defined as \(\kappa_{i_1,\cdots,i_k}=\min \left\{\sigma_{i_1,\cdots,i_k},\max_{1 \leq l \leq N_{i_1,\cdots,i_k}} \kappa_{i_1,\cdots,i_k,l}\right\}\), and for a terminal node \(S_{i_1,\cdots,i_n}\), define  \(\sigma_{i_1,\cdots,i_n,1}=\kappa_{i_1,\cdots,i_n,1}=v^{i_1,\cdots,i_n}(x_{i_1},\cdots,x_{i_1,\cdots,i_n})\).
\end{definition}
To get some intuition on how the capped values and indexes are defined, note that for \(S_{i_1,\cdots,i_n} \in \mathcal{T}\), conditional on the realization of \(X_{i_1},\cdots,X_{i_1,\cdots,i_n}\), its index is exactly the reservation value defined in \cite{weitzman1979optimal}, which is the maximum value of the outside option that makes the agent indifferent between stopping immediately and inspecting the box. The indexes in my definition can be defined in a similar manner. To see this , Theorem 1 in \cite{kleinberg2016descending} says that Pandora's problem can be reformulated  into a discrete choice problem where there is no search cost and the payoff from any box is its capped value \(\kappa_{i_1,\cdots,i_n}=\min \left\{\sigma_{i_1,\cdots,i_n},v^{i_1,\cdots,i_n}(x_{i_1},\cdots,x_{i_n})\right\}\). Then, when deciding whether to inspect its parent node \(S_{i_1,\cdots,i_{n-1}}\) or not, the payoff from inspecting it comes from being able to claim a prize from one of its child nodes. So the maximum payoff is \(\max_{1 \leq l \leq N_{i_1,\cdots,i_{n-1}}} \left\{\kappa_{i_1,\cdots,i_{n-1},l}\right\}\) and the maximum value of the outside option that makes the agent indifferent between stopping immediately and inspecting \(S_{i_1,\cdots,i_n}\) is \(\sigma_{i_1,\cdots,i_{n-1}}\) defined in Definition \ref{cap}. The indexes and capped values are defined in such a recursive way. Following this definition, all indexes and capped values can be obtained by going backwards from the terminal nodes.

The index policy I consider is similar to the ones in \cite{weitzman1979optimal}, \cite{kleinberg2016descending} and \cite{bowers2024matching}. The formal definition is as follows:
\begin{definition}\label{defindex}
Given any \(\mathcal{S}\), the index policy \(\chi^*\) proceeds as follows: at each decision point, it inspects the child node (of an already inspected node) with the highest index \(\sigma\), provided that this index exceeds both the highest realized prize among the boxes in \(\mathcal{O}\) and the outside option of 0; otherwise, the policy takes the highest realized prize or the outside option (whichever is higher). Ties are broken randomly.
\end{definition}
It is worth noting that except for child nodes of the root, the indexes are random variables whose true values are only revealed after inspecting all their ancestors. During the search process, the agent updates the indexes based on past observations, compares them, and then inspects the node with the highest index or take the highest prize.

The main result of the first part of the paper is the optimality of the index policy:
\begin{theorem}\label{opt}
The index policy maximizes the agent's expected payoff.
\end{theorem}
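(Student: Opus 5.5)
The proof follows the three-step Weber/Kleinberg--Waggoner--Weyl route already outlined in the introduction. Everything rests on the capped values of Definition \ref{cap}. For any policy $\chi$, define $\kappa_\chi$ as the capped value of the best fully inspected box: for a box $S_{i_1,\cdots,i_n}\in\mathcal{O}(\chi)$, its capped value is $\min$ over all its ancestors' indexes and its prize, namely $\min\{\sigma_{i_1},\sigma_{i_1,i_2},\cdots,\sigma_{i_1,\cdots,i_n},v^{i_1,\cdots,i_n}\}$. This nested minimum is obtained by unrolling the recursion $\kappa_{i_1,\cdots,i_k}=\min\{\sigma_{i_1,\cdots,i_k},\max_l\kappa_{i_1,\cdots,i_k,l}\}$ along a single path. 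The goal is the chain
\[
\mathbb{E}[\text{payoff}(\chi)]\le \mathbb{E}\Bigl[\max\bigl\{\max_{S\in\mathcal{O}(\chi)}\kappa_S,0\bigr\}\Bigr]\le \mathbb{E}\Bigl[\max\bigl\{\max_{S\in\mathcal{T}}\kappa_S,0\bigr\}\Bigr]=\mathbb{E}[\text{payoff}(\chi^*)].
\]

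\emph{Step 1 (upper bound, Lemma \ref{upperbound}).} The key identity is this: for each node $S_{i_1,\cdots,i_k}$, conditional on its ancestors' realizations, the cost satisfies $c_{i_1,\cdots,i_k}=\mathbb{E}[(M_{i_1,\cdots,i_k}-\sigma_{i_1,\cdots,i_k})^+\mid\cdot]$, where $M_{i_1,\cdots,i_k}=\max_l\kappa_{i_1,\cdots,i_k,l}$. The decision to open a node is measurable with respect to the ancestors' information. By Assumption \ref{independence}, the subtree's randomness is conditionally independent of that decision. Hence one can replace each paid cost by $(M-\sigma)^+$ in expectation, and the expected payoff of $\chi$ equals
\[
\mathbb{E}\Bigl[\max_{S\in\mathcal{O}}v_S^+-\sum_{\text{opened}}(M-\sigma)^+\Bigr].
\]
I then argue pathwise that for the chosen box $S^\dagger$,
\[
v_{S^\dagger}-\sum_{\text{ancestors }A\text{ of }S^\dagger}(M_A-\sigma_A)^+\le \kappa_{S^\dagger}.
\]
This is proved by induction from the leaf upward, using $\min\{\sigma,M\}=M-(M-\sigma)^+$. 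It holds because the box's own prize is bounded by $M$ at its parent's level, and $M$ at each level is at least the child's capped value along the path. Dropping the nonnegative terms of the opened nodes not on that path gives the first inequality. The second inequality is trivial.

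\emph{Step 2 (tightness for $\chi^*$, Lemma \ref{nonexposure}).} Equality in Step 1 requires two things. First, whenever $\chi^*$ opens a node $A$ with $M_A>\sigma_A$, it must go on to fully inspect down to a box and claim a box achieving $M_A$; that is, there is no "exposure" in which a node with a high realized value is abandoned. Second, for every opened node off the chosen path, $(M_A-\sigma_A)^+=0$. I will show this by the standard argument: once $A$ is opened and $M_A>\sigma_A$, every descendant on the maximizing branch has index at least $\sigma_A$. This is because the indexes of descendants "cap" below, and the recursion gives $\sigma$ along the argmax branch $\ge\kappa\ge\sigma_A$. Meanwhile, every other frontier node has index at most $\sigma_A$ at the time $A$ was chosen, and indexes of nodes not yet opened do not change. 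So $\chi^*$ keeps descending that branch and finishes with a box whose prize is at least $\sigma_A$, beating everything else. I expect this step to be the main obstacle. It needs a careful invariant for trees of arbitrary depth: at any time, the current frontier threshold is non-increasing, and it must be checked that a partially explored subtree is always completed. Ties and the random tie-breaking need separate handling, so I would first prove it assuming almost-sure absence of ties.

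\emph{Step 3 (maximization of the bound, Lemma \ref{maxbound}).} I must show that $\chi^*$ ends with the box of maximal capped value over all of $\mathcal{T}$, or with 0 if that maximum is negative. Equivalently, $\chi^*$ is a "descending" procedure: it opens nodes in decreasing order of the running threshold $\min$(ancestor indexes), and it stops when the best completed prize exceeds every remaining frontier index. Any unopened box has capped value at most its frontier ancestor's index, which is below the stopping value. I will prove this using the same invariant as in Step 2. Combining the three steps then yields Theorem \ref{opt}, together with the discrete-choice formulation $\mathbb{E}[\max\{\max_{S\in\mathcal{T}}\kappa_S,0\}]$ for the value.
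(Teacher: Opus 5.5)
\textbf{Verdict: right route, but Step 2 has a gap.} Your three steps are the paper's Lemmas \ref{upperbound}, \ref{nonexposure} and \ref{maxbound}. You organize them around the box-level capped value $\kappa_S=\min\{\sigma_{A_1},\ldots,\sigma_{A_n},v_S\}$ rather than the root-child values $\kappa_l$; Lemma \ref{maxminpath} links the two.

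\textbf{Step 1 is sound.} For every $j$ on the path, $v_{S^\dagger}-\sum_{k\ge j}(M_{A_k}-\sigma_{A_k})^+\le\kappa_{A_j}\le\sigma_{A_j}$. So your pathwise bound holds, and it is slightly sharper than the paper's. As a result you only need no-exposure at path nodes with $M_A>\sigma_A$, whereas the paper's chain needs the claimed box under a $\kappa$-maximal child at every level. One wording fix: the decision to open a node is not measurable with respect to its ancestors. It is conditionally independent of that node's subtree given them, and that is all Lemma \ref{condind} uses.

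\textbf{The gap is in Step 2.} Your argument shows that once $A$ is opened with $M_A>\sigma_A$, $\chi^*$ never leaves $\mathcal{D}_A$ and ends with a box there whose prize exceeds $\sigma_A$. It need not keep descending the maximizing branch: a sibling branch with larger raw indexes may be explored first, and nothing you say rules out the search ending there. That is not enough for equality. Suppose that at a path node $A_j$ with $M_{A_j}>\sigma_{A_j}$, the claimed box's capped value computed from $A_{j+1}$ downward is $P<M_{A_j}$. Then $v_{S^\dagger}-\sum_{k\ge j}(M_{A_k}-\sigma_{A_k})^+\le P-(M_{A_j}-\sigma_{A_j})<\min\{\sigma_{A_j},P\}$, and this strict inequality propagates up the path. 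You must show that the claimed box \emph{attains} $M_A$.

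\textbf{The missing ingredient is the descending invariant stated from an arbitrary time $t$.} For a box $S$, let $\pi_t(S)$ be the minimum of $v_S$ and the indexes of the nodes on its path still unopened at $t$. The claim is that the box claimed by $\chi^*$ maximizes $\pi_t$, and nothing is claimed only if $\max_S\pi_t(S)\le 0$.

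To see this, suppose $\pi_t(S)>\pi_t(S^\dagger)$. If $\pi_t(S^\dagger)=v_{S^\dagger}$, look at the stopping time. Otherwise look at the moment the node $A'$ on $S^\dagger$'s path with $\sigma_{A'}=\pi_t(S^\dagger)$ was opened. At that moment $S$ was either already completed with prize at least $\pi_t(S)$, or had a frontier node with index at least $\pi_t(S)$. That frontier node is not $A'$, since otherwise $\pi_t(S)\le\sigma_{A'}$. Either case contradicts the index rule.

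Now take $t$ just after $A$ is opened. Boxes outside $\mathcal{D}_A$ have $\pi_t\le\sigma_A$. The subtree version of Lemma \ref{maxminpath} gives $\max_{S\in\mathcal{D}_A}\pi_t(S)=M_A>\sigma_A$, so the claimed box attains $M_A$. Taking $t=0$ gives Step 3.

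This also repairs the last sentence of your Step 3, which compares unopened boxes with the stopping prize rather than with $\kappa_{S^\dagger}$, which can be strictly smaller. Relatedly, the quantity that is monotone along the opening sequence is a node's index minimized over its ancestors. It is not the largest frontier index, which can rise when a revealed child has a higher index.

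Part 2 of the paper's proof of Lemma \ref{nonexposure} plays this role. There too, Lemma \ref{maxminpath} with full-path minima yields $\min\{\sigma_{i_1},\ldots,\sigma_{i_1,\cdots,i_k},\kappa_{i_1,\cdots,i_k,l'}\}$ rather than $\kappa_{i_1,\cdots,i_k,l'}$. So it is the subtree-relative version above that actually closes the argument.
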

The proof of the theorem comprises three key steps, each detailed in the next subsection.
\subsection{Optimality of the index policy}\label{s32}
For each node \(S_{i_1,\cdots,i_k}\), let \(\mathbb{I}_{i_1,\cdots,i_k}\) be the indicator that the agent inspects this node. For \(S_{i_1,\cdots,i_n} \in \mathcal{T}\), let \(\mathbb{A}_{i_1,\cdots,i_n}\) be the indicator that the agent claims it. The agent's expected payoff depends on both his inspection decision and his decision on which box to claim. However, for any policy, there exists an upper-bound on the agent's expected payoff that only involves \(\mathbb{A}\), as stated in the lemma below:
\begin{lemma}\label{upperbound}
The agent's expected payoff is bounded by the following:
\begin{align*}
\mathbb{E}\left[\sum_{l=1}^{N_0}\left(\sum_{S_{i_1,\cdots,i_n} \in \mathcal{T} \cap \mathcal{D}_l}\mathbb{A}_{i_1,\cdots,i_n}\right)\kappa_{l}\right].
\end{align*}
\end{lemma}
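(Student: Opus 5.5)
Under any policy $\chi$ the expected payoff is at most $\mathbb{E}\bigl[\sum_{\mathcal{T}}\mathbb{A}\,v-\sum_{\mathcal{S}(\chi)}c\bigr]$, with equality if the policy always claims the best completed box. The plan is to show, by backward induction from the terminal nodes toward the root, the following statement for each node $S_{i_1,\cdots,i_k}$:
\begin{align*}
\mathbb{E}\Biggl[\sum_{S\in\mathcal{T}\cap\mathcal{D}_{i_1,\cdots,i_k}}\mathbb{A}_S\, v^S-\sum_{S'\in\mathcal{D}_{i_1,\cdots,i_k}}\mathbb{I}_{S'}c_{S'}\Biggr]\le \mathbb{E}\Biggl[\Bigl(\sum_{S\in\mathcal{T}\cap\mathcal{D}_{i_1,\cdots,i_k}}\mathbb{A}_S\Bigr)\kappa_{i_1,\cdots,i_k}\Biggr].
\end{align*}
Summing this over the root's children $l=1,\dots,N_0$ gives the lemma, because each inspected node lies in exactly one subtree $\mathcal{D}_l$.

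\textbf{Inductive step.} Fix a node $S=S_{i_1,\cdots,i_k}$, with index $\sigma$ and $M=\max_l\kappa_{i_1,\cdots,i_k,l}$.
\begin{enumerate}
\item Apply the induction hypothesis to each child and sum. This bounds the contribution of everything strictly below $S$ by $\mathbb{E}[\sum_l A_l\kappa_{i_1,\cdots,i_k,l}]$, where $A_l$ is the number of claims in child subtree $l$.
\item Since at most one box is claimed, $\sum_l A_l=A_S\le\mathbb{I}_S$. Hence $\sum_l A_l\kappa_{i_1,\cdots,i_k,l}\le A_S M$.
\item Write $M=\min\{\sigma,M\}+(M-\sigma)^+=\kappa_S+(M-\sigma)^+$. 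This gives $A_S M\le A_S\kappa_S+\mathbb{I}_S(M-\sigma)^+$.
\item The subtracted cost term for $S$ itself is $\mathbb{I}_S c_S$. By Definition \ref{cap}, $c_S=\mathbb{E}[(M-\sigma)^+\mid X_{i_1},\cdots,X_{i_1,\cdots,i_{k-1}}]$, so it remains to show $\mathbb{E}[\mathbb{I}_S(M-\sigma)^+]=\mathbb{E}[\mathbb{I}_S c_S]$.
\item The decision to inspect $S$ is made before $X_{i_1,\cdots,i_k}$ or anything in $\mathcal{D}_{i_1,\cdots,i_k}$ is observed. It can depend on the ancestors' realizations and on other branches. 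By Assumption \ref{independence}, conditional on the ancestors, $(M-\sigma)^+$ (a function only of variables in $\mathcal{D}_{i_1,\cdots,i_k}$ and the ancestors) is independent of everything else the agent has seen at that moment. Conditioning on the information at the decision time therefore gives the equality.
\end{enumerate}

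\textbf{Base case.} For a terminal node, the "child" is the fictitious node with $\kappa=v$. Steps 2–5 apply with $M=v$: $\mathbb{A}_Sv\le \mathbb{A}_S\kappa_S+\mathbb{I}_S(v-\sigma)^+$, and $\mathbb{E}[\mathbb{I}_S(v-\sigma)^+]=\mathbb{E}[\mathbb{I}_Sc_S]$.

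\textbf{Main obstacle.} The conditional-independence argument in step 5 is where the care goes. $\mathbb{I}_S$ is a stopping-type event whose timing is random, so the cleanest route is to condition on the history at the (random) time $S$ is inspected. One must then argue that the history's dependence on $\mathcal{D}_{i_1,\cdots,i_k}$ is nil. This needs Assumption \ref{independence} applied repeatedly to the variables of other branches, so that the joint law of the subtree given the ancestors is unaffected by what has been observed elsewhere. A strong-Markov / optional-sampling formulation should handle this.

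A secondary subtlety is that the induction hypothesis for the children must hold in conditional form (conditional on ancestor realizations). The induction should therefore be stated as this conditional bound throughout.
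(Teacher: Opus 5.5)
Your proposal is correct and follows the paper's proof essentially step for step: the same backward induction on the subtree bound, the same identity $\mathbb{E}[\mathbb{I}_S c_S]=\mathbb{E}[\mathbb{I}_S(M-\sigma)^+]$ (the paper's Lemma~\ref{condind}), and the same chain $\sum_l A_l\kappa_{i_1,\cdots,i_k,l}\le A_S M\le A_S\kappa_S+\mathbb{I}_S(M-\sigma)^+$, which uses unit demand and obligatory inspection. The paper handles your ``main obstacle'' more simply than a strong-Markov argument: $\mathbb{I}_S$ is determined by variables outside $\mathcal{D}_{i_1,\cdots,i_k}$, so conditioning on the ancestors alone and invoking Assumption~\ref{independence} suffices, and the induction can stay in unconditional expectations, so the conditional form you mention is not needed.
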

This lemma is reminiscent of Lemma 2 in \cite{bowers2024matching} and includes it as a special case. It establishes that the agent’s expected payoff is bounded above by the payoff from an auxiliary discrete choice problem. In this problem, for each child node \(S_l\) of the root, the agent decides whether to claim a box from \(\mathcal{D}_{l}\), considering \(\kappa_l\) as the payoff for doing so. This upper-bound  applies to all policies, but it is tight for the index policy, as stated in the following lemma:
\begin{lemma}\label{nonexposure}
    The index policy achieves the exact upper-bound in Lemma \ref{upperbound}.
\end{lemma}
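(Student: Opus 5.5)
The proof retraces the argument behind Lemma \ref{upperbound}, which presumably bounds the payoff by induction on the depth of the tree, and checks that each inequality in it holds with equality under the index policy $\chi^*$. The plan is to prove a stronger, recursive statement for every node $S_{i_1,\cdots,i_k}$, conditional on the realizations along its path to the root:
\begin{align*}
\mathbb{E}\Bigl[\sum_{S\in\mathcal{T}\cap\mathcal{D}_{i_1,\cdots,i_k}}\mathbb{A}_S v^S-\sum_{S'\in\mathcal{D}_{i_1,\cdots,i_k}}\mathbb{I}_{S'}c_{S'}\Bigr]
=\mathbb{E}\Bigl[\Bigl(\sum_{S\in\mathcal{T}\cap\mathcal{D}_{i_1,\cdots,i_k}}\mathbb{A}_S\Bigr)\kappa_{i_1,\cdots,i_k}\Bigr].
\end{align*}
Under an arbitrary policy this holds with $\le$, which is the content of Lemma \ref{upperbound}. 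Applying it to the root's children $S_l$ and summing over $l$ gives the claim.

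The induction runs from the terminal nodes upward. Each step has two ingredients.

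\textbf{(a) Payment identity.} Conditional on the parent path, the definition of $\sigma_{i_1,\cdots,i_k}$ in Definition \ref{cap} gives $c_{i_1,\cdots,i_k}=\mathbb{E}[(M-\sigma)^+]$, where $M=\max_l \kappa_{i_1,\cdots,i_k,l}$. Inspection of the node is decided before $X_{i_1,\cdots,i_k}$ is revealed, and by Assumption \ref{independence} it is independent of the subtree's realizations given the path. Hence the expected cost $\mathbb{E}[\mathbb{I}c]$ can be replaced by $\mathbb{E}[\mathbb{I}(M-\sigma)^+]$. Using $M-(M-\sigma)^+=\min\{\sigma,M\}=\kappa$, the subtree's net contribution becomes
\[
\mathbb{E}\Bigl[\sum_l \Bigl(\textstyle\sum_{S\in\mathcal{D}_{i_1,\cdots,i_k,l}}\mathbb{A}_S\Bigr)\kappa_{i_1,\cdots,i_k,l}\Bigr]-\mathbb{E}\bigl[\mathbb{I}(M-\sigma)^+\bigr],
\]
where the first term uses the inductive hypothesis on the children.

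\textbf{(b) Tightness.} The slack in Lemma \ref{upperbound} comes from two sources. First, a box may be claimed from a child $l$ whose $\kappa_{\cdot,l}<M$. Second, when $M>\sigma$ the node can be inspected without the agent then claiming from its subtree, so the prepaid surplus $(M-\sigma)^+$ is lost. Equality therefore requires a \emph{non-exposure} property under $\chi^*$: whenever $S_{i_1,\cdots,i_k}$ is inspected and $M>\sigma_{i_1,\cdots,i_k}$, the agent eventually claims a box in $\mathcal{D}_{i_1,\cdots,i_k}$, and that box lies in a child achieving the maximum $M$ (recursively, along maximal $\kappa$'s). On this event $\kappa=\sigma$, and the two sides match. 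On the complementary event $M\le\sigma$ there is no slack, since $\kappa=M$, and only the recursive "claim through a maximal child" condition is needed.

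The main obstacle is proving non-exposure for $\chi^*$, i.e.\ showing that the index policy never abandons a subtree whose realized $M$ exceeds its index. The first approach to try is to establish, by induction, that along any path the indices are such that once $S_{i_1,\cdots,i_k}$ is opened with index $\sigma$ and $M>\sigma$, every descendant on the branch attaining $M$ has index at least $\sigma$, down to the terminal prize, which is at least $\kappa=\sigma$. Indeed, if $M$ is attained by child $l$ with $\kappa_l\ge\sigma$, then $\sigma_l\ge\kappa_l\ge\sigma$ since $\kappa_l=\min\{\sigma_l,\cdot\}$. Because $\chi^*$ opened this node, every competing available index and the current best prize were at most $\sigma$ at that time. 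Opening other nodes cannot create competitors above $\sigma$ without their own subtrees being resolved first, and this is where the same argument is applied recursively. So $\chi^*$ keeps descending this branch, up to ties broken harmlessly, and the final claimed prize has value at least $\sigma$, with a maximal-$\kappa$ chain. This mirrors the "descending" argument of \cite{kleinberg2016descending} and \cite{bowers2024matching}. Finally, the claiming rule of $\chi^*$ selects the highest prize, which is consistent with choosing the largest $\kappa$ and so closes the recursion.
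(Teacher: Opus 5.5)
Your skeleton matches the paper's. Part (a) is its Lemma \ref{condind}. The two sources of slack you name are exactly the two inequalities in Step 2 of Lemma \ref{upperbound}. Your non-exposure argument also matches Part 1 of the paper's proof: on $\{M>\sigma_{i_1,\cdots,i_k}\}$ some branch has all indices and its prize above $\sigma_{i_1,\cdots,i_k}$, everything outside was at most $\sigma_{i_1,\cdots,i_k}$ when the node was opened, so the policy claims inside $\mathcal{D}_{i_1,\cdots,i_k}$.

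The gap is the other condition: the claimed box must lie under a child with $\kappa_{i_1,\cdots,i_k,l}=M$. You need this on both events but never prove it. The descending argument only shows that \emph{some} box in the subtree is claimed. Moreover, ``$\chi^*$ keeps descending this branch'' is not accurate: the policy may first open a sibling with a higher index but lower $\kappa$ and find a large prize there, which is exactly why the which-child question is nontrivial. On $\{M\le\sigma\}$ you give no argument at all.

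Your closing sentence does not fill this in. The claiming rule compares raw prizes, whereas $\kappa_{i_1,\cdots,i_k,l}$ is capped by the indices along the branch. With two terminal children, $v''>v'$ and $\sigma_{T''}<\min\{\sigma_{T'},v'\}$ give $\kappa_{T''}<\kappa_{T'}$ even though $T''$ holds the higher prize. What prevents claiming $T''$ is the \emph{exploration} rule ($T'$ is opened first, and $T''$ is then never opened), not the claiming rule.

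The missing ingredient is the max--min characterization of capped values, which is the paper's Lemma \ref{maxminpath}. Unwinding $\kappa=\min\{\sigma,\max_l\kappa_{\cdot,l}\}$, a node's capped value equals the maximum, over its terminal descendants, of the minimum of the prize and the indices on the path from that node down. Use this version, measured from $S_{i_1,\cdots,i_k}$ downward; once the ancestors' indices are included, both children can be capped at the same level.

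With it the step is short. Write $\kappa_{l}$ for $\kappa_{i_1,\cdots,i_k,l}$. Suppose a box $T^*$ under child $l''$ is claimed although $\kappa_{l'}>\kappa_{l''}$. Pick a terminal $T'$ under $l'$ whose path-min below $S_{i_1,\cdots,i_k}$ equals $\kappa_{l'}$.
\begin{itemize}
\item At every moment after $S_{i_1,\cdots,i_k}$ is opened, either the first unopened node on the path to $T'$ is available with index at least $\kappa_{l'}$, or $T'$ is open with prize at least $\kappa_{l'}$.
\item Hence every node on the path to $T^*$ below $S_{i_1,\cdots,i_k}$ was opened with index at least $\kappa_{l'}$, and the prize of $T^*$ is at least $\kappa_{l'}$ when it is claimed.
\item So the path-min of $T^*$ is at least $\kappa_{l'}>\kappa_{l''}$, contradicting the characterization applied to $l''$.
\end{itemize}
This argument covers both events, and it is what your last paragraph needs in place of the appeal to the claiming rule.
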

Lemma \ref{nonexposure} does not establish the optimality of the index policy, as the bound in Lemma \ref{upperbound} depends on \(\mathbb{A}\), which can vary across different policies. There could exist a policy that fails to obtain the bound in Lemma \ref{upperbound} but still generates a higher payoff. I show that this is impossible as the upper-bound is maximized under the index policy in Lemma \ref{maxbound} below:

\begin{lemma}\label{maxbound}
The upper-bound in Lemma \ref{upperbound} is maximized by the index policy.   
\end{lemma}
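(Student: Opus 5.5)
We need to show that the index policy maximizes $\mathbb{E}[\sum_l (\sum_{\mathcal{T}\cap\mathcal{D}_l}\mathbb{A})\kappa_l]$ over all policies. Since the agent claims at most one box, $\sum_{l}\sum_{\mathcal{T}\cap\mathcal{D}_l}\mathbb{A}\le 1$, and each indicator vanishes on the event that no box in $\mathcal{D}_l$ is claimed. The bound is therefore at most $\mathbb{E}[\max\{\max_{l}\kappa_l,0\}]$, provided it makes sense to claim nothing and receive $0$. So the plan is to show that the index policy attains $\mathbb{E}[(\max_l \kappa_l)^+]$ by claiming a box from the subtree of $\arg\max_l \kappa_l$ whenever this maximum is positive, and by claiming nothing otherwise. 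Note that $\kappa_l$ is a random variable determined by the realization of the whole subtree $\mathcal{D}_l$. It does not depend on the policy, so the pointwise bound $\sum_l(\sum\mathbb{A})\kappa_l\le (\max_l\kappa_l)^+$ holds for every policy.

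The key structural fact to establish, by induction on depth from the leaves, is the following. Under the index policy, the agent completes a box in $\mathcal{D}_{i_1,\dots,i_k}$ exactly along the path of maximal capped values, and the value of the best box reached inside a subtree equals the subtree's capped value whenever that subtree is ``finished.'' More precisely, I would show the following. When the index policy runs restricted to a subtree rooted at $S_{i_1,\dots,i_k}$ against a threshold $t$, it ends up claiming a box in that subtree iff $\max_m \kappa_{i_1,\dots,i_k,m} > t$ (ties aside). When it does so, the box it claims lies in the child subtree with the largest $\kappa$. This mirrors the Weitzman/Kleinberg et al. argument: the policy descends into a node only when its index is highest. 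Once inside, the indices of deeper nodes are compared with the remaining frontier. By definition of $\kappa=\min\{\sigma,\max \kappa_{\text{child}}\}$, the subtree's ``effective value'' seen by competitors is $\kappa$. Thus the process behaves like a discrete choice among the $\kappa_l$ at the root level: the frontier is explored in descending order of $\sigma$, and a subtree is abandoned precisely when its revealed best capped value falls below another frontier index.

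Concretely, the steps are as follows. (i) Establish by backward induction that, for each node, the run of $\chi^*$ restricted to $\mathcal{D}_{i_1,\dots,i_k}$ inspects nodes in the order of a ``descending'' procedure. The running maximum of available indices/realized prizes in that subtree is nonincreasing over time, and it ends at $\max_m\kappa_{i_1,\dots,i_k,m}$. (ii) Show interleaving across sibling subtrees preserves this: the global frontier maximum is the max over subtrees of their local frontier maxima. Hence the subtree whose local value stays highest longest is the one finished, namely the argmax of $\kappa_l$. The claimed box has prize $\ge$ its ancestors' capped values chain, giving $\mathbb{A}$ supported on $\arg\max_l\kappa_l$. (iii) Handle the outside option $0$ as a fictitious subtree with $\kappa=0$, and break ties arbitrarily, which does not change the objective.

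The main obstacle is step (i)–(ii): proving rigorously that the index policy's claimed box lies in the subtree with maximal $\kappa_l$. This requires showing that a subtree's local frontier maximum, once it drops below $\kappa_l$, never rises again. That in turn uses $\kappa\le\sigma$ at every level and the fact that revealed children's indices are capped by the parent index in the definition of $\kappa$. I would try first to prove the invariant ``the local frontier maximum of $\mathcal{D}_l$ is $\ge \kappa_l$ until a box in $\mathcal{D}_l$ is completed with prize $\ge\kappa_l$'', by induction on the depth of the subtree.
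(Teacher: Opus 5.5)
Your reduction is correct and in one respect sharper than the paper's. Because \(\kappa_l\) depends on the full realization and not on the policy, every policy's bound is pointwise at most \((\max_l\kappa_l)^+\). So you rightly require two things: a claimed box must lie in \(\arg\max_l\kappa_l\), \emph{and} the index policy must claim something whenever \(\max_l\kappa_l>0\). The paper's proof argues only the first (the implication \(\kappa_{i'_1}>\kappa_{i''_1}\Rightarrow\cdots\)). The invariant in your last paragraph is also the right one.

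The problem is steps (i)--(ii), which rest on a false monotonicity. Raw indices are \emph{not} explored in descending order, and a subtree's frontier maximum is not nonincreasing, because a child's index can exceed its parent's. In the application of Section~\ref{s4}, the first-stage index is \(\sigma\) and the second-stage index is \(X_i+r\) (both net of price). Since \(c_X>0\) forces \(\sigma<\overline{x}+r\), with positive probability the policy inspects a node with index \(X_i+r>\sigma\) right after one with index \(\sigma\). The monotone quantity is the path minimum \(\min\{\sigma_{j_1},\ldots,\sigma_{j_1,\ldots,j_n},v^{j_1,\ldots,j_n}\}\). What your plan leaves implicit is the two-sided identity the paper isolates as Lemma~\ref{maxminpath}: \(\kappa_l\) equals the maximum of this path minimum over boxes in \(\mathcal{D}_l\).

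With that identity the argument closes without (i)--(ii), and the ``never rises again'' property you flag as the main obstacle is not needed.

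\textbf{The \(\ge\) direction and your invariant.} Some box in \(\mathcal{D}_l\) has path minimum \(\ge\kappa_l\); this follows by depth induction from \(\kappa\le\sigma\) and \(\kappa\le\max_m\kappa_{\cdot,m}\). It is exactly what yields your invariant: at every moment, \(\mathcal{D}_{l'}\) either has a frontier node with index \(\ge\kappa_{l'}\) or contains a completed box with prize \(\ge\kappa_{l'}\).

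\textbf{The claimed box.} Suppose the policy claims \(j^*\in\mathcal{D}_{l''}\) with prize \(v^*\). Each node on its path was inspected when its index was at least every frontier index and every completed prize, and \(v^*\) dominated both at stopping. By the invariant, all of these numbers are \(\ge\kappa_{l'}\), so the path minimum of \(j^*\) is \(\ge\kappa_{l'}\).

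\textbf{The \(\le\) direction.} Iterating \(\kappa\ge\min\{\sigma,\kappa_m\}\) down the path gives \(\kappa_{l''}\ge\) that path minimum, hence \(\kappa_{l''}\ge\kappa_{l'}\).

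\textbf{Participation.} Treat the outside option as a completed box with prize \(0\). The same reasoning shows the policy stops empty-handed only if every \(\kappa_l\le 0\).

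This is the paper's argument, with the frontier reasoning it compresses into ``this further implies'' spelled out.
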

Lemma \ref{upperbound}-\ref{maxbound} combine to establish the optimality of the index policy. For any other policy, the agent's expected payoff is bounded by the formula in Lemma \ref{upperbound}. Furthermore, the formula is maximized by the index policy. Lastly, the agent's expected payoff under the index policy attains this maximized value. This proves Theorem \ref{opt}.

\section{Application: pricing game with gradual resolution of uncertainty}\label{s4}

In this section, I apply theoretical results derived earlier to analyze the pricing game in a market featuring two stages of product inspection. I then focus on monopolistic competition (where there is an infinite number of firms) and compare the equilibrium price and consumer surplus to the benchmark case where total cost is the same, but all uncertainty is resolved in a single stage. Finally, I examine the impact of drip pricing regulation by solving for the equilibrium price in a modified pricing game that endogenizes firms’ price revelation policies.

\subsection{Setup}
There are \(n\) firms in the market, each offering a product for sale. The production cost is normalized to 0. There are \(L\) buyers who would like to purchase at most one product. Their outside option is normalized to 0. Let \(s=\frac{L}{n}>0\) denote the average market share. Each firm sets its price \(p_i\) simultaneously and it is unobservable before search. Each buyer's match value for firm \(i\)'s product is \(Z_i=X_i+Y_i\). Conditional on buying this product at price \(p_i\), the buyer's payoff is \(X_i+Y_i-p_i\). Both \(X_i\) and \(Y_i\) are unobservable before search. By paying cost \(c_X>0\), the buyer can learn a firm's \(X_i\). If additionally he pays \(c_Y>0\), he also learns \(Y_i\). For tractability, I assume  \(X_i\) and \(Y_i\) are i.i.d. across firms and buyers. 
\(X\) follows distribution \(F\), supported on an interval \([\underline{x},\overline{x}]\) with density \(f>0\). \(Y\) follows distribution \(G\), supported on an interval \([\underline{y},\overline{y}]\) with density \(g>0\). I assume \(\overline{x}+\overline{y}>0\); otherwise the problem is vacuous. Whenever a buyer is indifferent among firms, he breaks ties evenly. I study the symmetric pure strategy Nash equilibrium of this game that features active consumer search when (i) price is observed at stage 1; (ii) price is observed at stage 2.\footnote{There always exists an equilibrium in which each firm posts  a prohibitively high price and buyers never search.}\footnote{If the price is observed before stage 1 costlessly, then there is no pure strategy equilibrium, and the mixed strategy equilibrium is difficult to characterize, as discussed in \cite{armstrong2017ordered}.}

\subsection{Price revealed at stage 1}\label{s42}
In this subsection, I study the case where the price is revealed together with \(X\) at stage 1. Suppose all firms but one charges \(p^*\) and the remaining firm, say firm \(i\), charges \(p\). I assume that the buyer will hold the same belief about the prices charged by other firms even after observing a deviation from \(p^*\). To tell whether firm \(i\) has an incentive to deviate, one must determine its demand. There are many possible search trajectories that lead to a buyer finally purchasing firm \(i\)'s product. This challenge can be resolved by Lemma \ref{maxbound}, which gives a generalized version of the discrete choice theorem developed in \cite{choi2018consumer}. I slightly abuse notation by letting \(\kappa_l\) denote the lowest index of firm \(l\). Then \(\kappa_j= \min \{\sigma-p^*,X_j+r-p^*,X_j+Y_j-p^*\}\)  for \(j \neq i\) and \(\kappa_i=\min \{\sigma-p^*,X_i+r-p,X_i+Y_i-p\}\), where \(\mathbb{E}[(Y-r)^+]=c_Y\) and \(\mathbb{E}[(X+\min \{Y,r\}-\sigma)^+]=c_X\). Notice that the first index for firm \(i\) is \(\sigma-p^*\) because buyers do not observe its price before search and will hold a belief that firm \(i\) sets the putative equilibrium price. Firm \(i\)'s demand can be decomposed into two parts: first, when \(\kappa_i=\sigma-p^*\), once a buyer visits firm \(i\), he will complete both stages of inspection and purchase at firm \(i\), and this happens when \(\kappa_j<\sigma-p^*\) for all previously visited firm \(j\). Second, when \(\kappa_i<\sigma-p^*\), a buyer finally purchases at firm \(i\) if all other firms offer worse products, i.e., \(\kappa_j \leq \kappa_i<\sigma-p^*\), \(\forall j \neq i\). Define \(W_l=X_l+\min \{Y_l,r\}\) and let \(H\) be its distribution function. Since \(F\) and \(G\) each are supported on intervals with strictly positive densities, \(H\) is supported over \([\underline{x}+\underline{y},\overline{x}+r]\) and its density  \(h\) is well defined and strictly positive over this interval. Let \(D(p,p^*,n)\) be the demand for a firm when other firms set \(p^*\) and it sets \(p\). According to the analysis before, it can be calculated as follows:
\begin{align*}
     D(p,p^*,n)=[1-H(\sigma-p^*+p)]\frac{L}{n}\sum_{j=0}^{n-1}H(\sigma)^j+L\int_{p}^{\sigma-p^*+p}H(w-p+p^*)^{n-1}h(w)dw,
\end{align*}
where the two terms correspond to the two parts of demand respectively. First order condition gives a necessary condition for the optimal \(p\):
\begin{align*}
  \frac{\partial \pi(p,p^*,n)}{\partial p}&=D_p(p,p^*,n)[p-(-\frac{D(p,p^*,n)}{D_p(p,p^*,n)})]\\
    &=D_p(p,p^*,n)\left[p-\frac{\frac{L}{n}[1-H(\sigma-p^*+p)]\frac{1-H(\sigma)^n}{1-H(\sigma)}+L\int_{p}^{\sigma-p^*+p}H(w-p+p^*)^{n-1}h(w)dw}{\frac{L}{n}h(\sigma-p^*+p)\frac{1-H(\sigma)^n}{1-H(\sigma)}+L\int_{p}^{\sigma-p^*+p}H(w-p+p^*)^{n-1}h'(w)dw}\right]\\
    &=0.   
\end{align*}

And the firm has no incentive to deviate only if \(p=p^*\) satisfies this condition. I focus on monopolistic competition, i.e., the case when \(n=\infty\).\footnote{Previously, I consider  trees with a finite number of terminal nodes. Even there is an infinite number of terminal nodes under monopolistic competition, all the results in Section \ref{s3} go through.} In this case, the probability of the second event is 0 as a buyer never samples all firms and then returns to a previously visited firm. Then by setting \(p=p^*\), a candidate  equilibrium price is obtained, which is \(p^*=\frac{1-H(\sigma)}{h(\sigma)}\). As FOC is only  necessary for optimality, additional assumptions on \(H\) are required to ensure sufficiency. I impose the following assumption:

 \begin{assumption}\label{logconcave}
     \(f\) and \(g\) are log-concave.
 \end{assumption}
This is a technical assumption imposed on the two primitive distributions to ensure a desirable property of \(H\), the distribution of interest, as stated in the following lemma:

\begin{lemma}\label{inchazard}
Under Assumption \ref{logconcave}, \(H\) has increasing hazard rate.    
\end{lemma}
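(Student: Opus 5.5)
The plan is to show that \(H\) has a log-concave density \(h\), or at least that \(1-H\) is log-concave. Either property implies that the hazard rate \(h/(1-H)\) is nondecreasing: by the standard Bagnoli--Bergstrom result, a log-concave density gives a log-concave survival function, and log-concavity of \(1-H\) is equivalent to an increasing hazard rate.

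Recall \(W=X+\min\{Y,r\}\) with \(X\sim F\) and \(Y\sim G\) independent. The argument will proceed in three steps.

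\emph{Step 1: \(\min\{Y,r\}\) has a log-concave distribution.} Let \(V=\min\{Y,r\}\). Since \(\mathbb{E}[(Y-r)^+]=c_Y>0\), we have \(r<\overline{y}\), so \(V\) has an atom of size \(1-G(r)\) at \(r\) and density \(g\) on \([\underline{y},r)\). So \(V\) is not absolutely continuous, and "log-concave density" must be handled with care. The cleanest route avoids densities for \(V\) altogether. Since \(g\) is log-concave, \(G\) is log-concave, and the CDF of \(V\) is \(G(v)\) for \(v<r\) and \(1\) for \(v\ge r\). Truncating a log-concave function by capping it at its value \(1\) past \(r\) preserves log-concavity of the CDF as a function on \(\mathbb{R}\). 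The survival function of \(V\) is \(1-G(v)\) for \(v<r\) and \(0\) for \(v\ge r\); this is log-concave on its support, since \(\log 0=-\infty\) beyond an endpoint is allowed.

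Better still, I will work directly with the measure of \(V\) as a log-concave measure in the sense of Prékopa. Write \(V=\min\{Y,r\}\) as a monotone truncation: the law of \(V\) is the pushforward of a log-concave measure under the map \(y\mapsto\min\{y,r\}\). I will verify the Prékopa--Leindler inequality \(\mu(\lambda A+(1-\lambda)B)\ge\mu(A)^\lambda\mu(B)^{1-\lambda}\) directly for one-dimensional intervals. This reduces to checking that the survival function of \(V\) is log-concave and that intervals containing \(r\) behave well, using log-concavity of \(G\) and \(1-G\).

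\emph{Step 2: convolution.} \(X\) has a log-concave density, hence a log-concave measure. By Prékopa's theorem, the convolution of two log-concave measures on \(\mathbb{R}\) is log-concave. Therefore the law of \(W=X+V\) is log-concave. Moreover \(W\) has a density \(h(w)=\int f(w-v)\,dP_V(v)\), because \(X\) is absolutely continuous. A log-concave measure with a density on an interval has a log-concave density (Borell's characterization), so \(h\) is log-concave on \([\underline{x}+\underline{y},\overline{x}+r]\).

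To keep things elementary, I may skip the abstract route and show directly that \(1-H\) is log-concave. Write
\[
1-H(w)=\int (1-F(w-v))\,dP_V(v)=\int_{\underline y}^{r}(1-F(w-v))g(v)\,dv+(1-G(r))(1-F(w-r)).
\]
The first term is the integral of a jointly log-concave function of \((w,v)\), namely \((1-F(w-v))\,g(v)\mathbf 1\{v\le r\}\), with each factor log-concave. By Prékopa's marginal theorem, that integral is log-concave in \(w\). The full sum, however, need not be, which is why the measure-level statement is needed.

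\emph{Step 3: conclude.} Log-concavity of \(1-H\) follows from log-concavity of \(h\) by integration (Prékopa again), and it gives \(h/(1-H)=-(\log(1-H))'\) nondecreasing.

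The main obstacle is the atom at \(r\): the naive statement "a convolution of log-concave densities is log-concave" does not directly apply to \(V\). I would first try the explicit decomposition of \(h\),
\[
h(w)=\int_{\underline y}^{r} f(w-v)g(v)\,dv+(1-G(r))f(w-r).
\]
I would then show this is log-concave by viewing \(V\)'s law as the limit of laws with log-concave densities, \(g\) on \([\underline y,r)\) plus a steep log-concave bump approximating the atom. Log-concavity is preserved under weak limits, so it passes to the limit. If that approximation is delicate, I would fall back on proving the increasing hazard of \(W\) through the Prékopa--Leindler verification for \(V\) described in Step 1.
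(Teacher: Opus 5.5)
Your proof has a genuine gap. The main route rests on the claim that the law of $V=\min\{Y,r\}$ is a log-concave measure, and that claim is false whenever $\underline{y}<r<\overline{y}$.

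\begin{itemize}
\item By Borell's characterization, a log-concave probability measure on $\mathbb{R}$ is either a point mass or absolutely continuous with a log-concave density. $P_V$ is neither: it has an atom of mass $1-G(r)\in(0,1)$ together with a continuous part.
\item The Pr\'ekopa--Leindler inequality fails directly. Take $A=\{r\}$ and $B=[a,a+\epsilon]$ with $a+\epsilon<r$. The set $\lambda A+(1-\lambda)B$ lies below $r$ and has mass $O(\epsilon)$, while $P_V(A)^{\lambda}P_V(B)^{1-\lambda}$ is of order $\epsilon^{1-\lambda}$.
\item Consequently Step 2's conclusion that $h$ is log-concave, on which Step 3 rests, is false in general. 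In the paper's own uniform example (Section~\ref{s51}), $h(w)=w$ on $[0,r]$ and $h(w)=1$ on $(r,1]$. That is an upward jump at an interior point of the support, which no log-concave function can have.
\item The approximation fallback cannot work either. Log-concavity of measures is preserved under weak limits, so no sequence of log-concave laws converges to $P_V$.
\item Your remark that capping $G$ at $1$ past $r$ keeps the CDF of $V$ log-concave is also wrong, since that CDF jumps up at $r$. Only the survival function of $V$ is log-concave.
\end{itemize}

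The repair is already contained in your Step 1. The survival function $\bar F_V(v)=(1-G(v))\mathbf{1}\{v<r\}$ is log-concave on $\mathbb{R}$: it is the product of $1-G$, which is log-concave because $g$ is, and the indicator of a half-line. Your elementary attempt stalled only because you conditioned on $V$, which produces a sum containing the atom term. Condition on $X$ instead:
\[
1-H(w)=\int f(x)\,\bar F_V(w-x)\,dx .
\]
The integrand is jointly log-concave in $(w,x)$, so Pr\'ekopa's theorem gives log-concavity of $1-H$ directly, with no atom to handle. Equivalently,
\[
1-H(w)=\iint \mathbf{1}\{x>w-r,\ x+y>w\}\,f(x)g(y)\,dx\,dy,
\]
where the indicator is of a convex set in $(x,y,w)$. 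Then $h/(1-H)=-(\log(1-H))'$ is nondecreasing, which is exactly what the lemma asserts.

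The paper itself simply defers to \cite{choi2018consumer}. The lemma needs increasing hazard rate, not log-concavity of $h$, and you should aim at that directly rather than through a stronger claim that fails here.
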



Also, the search cost has to be small enough so that buyers are willing to search in equilibrium. Specifically, \(\sigma\) has to be larger than the threshold \(\sigma^*\) defined in the following lemma:
\begin{lemma}\label{smallcost}
Under Assumption \ref{logconcave}, 
\begin{itemize}
\item either \(w \geq \frac{1-H(w)}{h(w)}\), \(\forall w \in [\underline{x}+\underline{y},\overline{x}+\overline{y}]\); in this case, let \(\sigma^*=\underline{x}+\underline{y}\);
\item or there exists an \(\sigma^*>0\) such that \(\frac{1-H(\sigma^*)}{h(\sigma^*)}=\sigma^*\).
\end{itemize}
\end{lemma}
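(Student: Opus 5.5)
Consider the function \(\phi(w)=w-\frac{1-H(w)}{h(w)}\) on the support of \(H\), which by the discussion above is the interval \([\underline{x}+\underline{y},\overline{x}+r]\); since \(r\le\overline{y}\), this lies inside \([\underline{x}+\underline{y},\overline{x}+\overline{y}]\). On this interval \(h>0\), so \(\phi\) is well defined. The plan is to show that \(\phi\) is continuous and strictly increasing, and then split into two cases according to its sign at the left endpoint.

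\emph{Monotonicity.} By Lemma \ref{inchazard}, \(h/(1-H)\) is nondecreasing, so \((1-H)/h\) is nonincreasing. Adding the strictly increasing identity makes \(\phi\) strictly increasing. Continuity of \(\phi\) follows from continuity of \(H\) and \(h\). For \(h\), I would invoke log-concavity: the convolution of log-concave densities is log-concave, and \(W=X+\min\{Y,r\}\) can be handled via its density on the interior. This step may need care because \(\min\{Y,r\}\) has an atom at \(r\). If continuity of \(h\) fails, monotonicity plus the intermediate value property along the relevant range should still suffice, or continuity can be taken as implicit in the paper's earlier claim that \(h\) is well defined and positive.

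\emph{Case analysis.} If \(\phi(\underline{x}+\underline{y})\ge 0\), then by monotonicity \(\phi\ge 0\) on the whole support, i.e.\ \(w\ge \frac{1-H(w)}{h(w)}\). This is the first bullet, with \(\sigma^*=\underline{x}+\underline{y}\). For \(w\) above \(\overline{x}+r\), where \(1-H(w)=0\), the inequality holds trivially when the ratio is read as \(0\). Otherwise \(\phi(\underline{x}+\underline{y})<0\). At the right endpoint \(\overline{w}=\overline{x}+r\) we have \(1-H(\overline{w})=0\), so \(\phi(\overline{w})=\overline{w}\). We need \(\overline{w}>0\). This should follow from \(\sigma<\overline{x}+r\) (which holds since \(c_X>0\)) together with the standing assumption \(\overline{x}+\overline{y}>0\). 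More precisely, if \(\overline{x}+r\le 0\), then \(\phi\le\overline{w}\le 0\) everywhere, and I would define \(\sigma^*\) as \(\overline{w}\) or argue the case is vacuous. This edge case is the main obstacle and may require interpreting the lemma as restricting attention to \(\sigma>0\).

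In the main case, the intermediate value theorem gives a unique \(\sigma^*\) with \(\phi(\sigma^*)=0\), and uniqueness follows from strict monotonicity. Positivity comes from the identity \(\sigma^*=(1-H(\sigma^*))/h(\sigma^*)>0\), since \(\sigma^*<\overline{w}\) forces \(H(\sigma^*)<1\).
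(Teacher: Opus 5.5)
Your approach matches the paper's: its whole proof is that $w$ increases while $\frac{1-H(w)}{h(w)}$ decreases by Lemma \ref{inchazard}, which is your $\phi$ plus a case split. The gap is in the step that produces the root. $\phi$ is not continuous, and your fallback does not help, because a strictly increasing function with an upward jump need not vanish anywhere. The atom of $\min\{Y,r\}$ at $r$ gives $h(w)=\int_{\underline{y}}^{r}f(w-y)g(y)\,dy+(1-G(r))f(w-r)$. Its last term jumps by $(1-G(r))f(\underline{x})$ at $w=\underline{x}+r$, and this jump is positive whenever $f(\underline{x})>0$. The paper's own uniform example in Section \ref{s51} shows it: $h$ jumps from $r$ to $1$ at $w=r$. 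There $\phi(r)=(\tfrac{3}{2}r^2-1)/r$, which is negative iff $r<\sqrt{2/3}$. On $(r,1]$ one has $\phi(w)=2w-1-r+\tfrac{1}{2}r^2$, whose limit at $r^{+}$ is positive iff $r>\sqrt{3}-1$. So for $r\in(\sqrt{3}-1,\sqrt{2/3})$, e.g.\ $r=0.77$ (i.e.\ $c_Y\approx 0.026$), the equation $\frac{1-H(w)}{h(w)}=w$ has no solution, short of redefining the density at the single point $w=r$.

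The first bullet fails too. In fact it always fails when $r>\underline{y}$, because $h(\underline{x}+\underline{y})=0$; this also contradicts your claim that $h>0$ on the closed interval. So the continuity step cannot be repaired: the lemma as literally stated is false, and the paper's one-line proof has the same hole. What your monotonicity argument does prove, and all that Proposition \ref{equi1} uses, is a crossing-point version. Set $\sigma^*:=\inf\{w:\phi(w)\ge 0\}$; then every $\sigma>\sigma^*$ satisfies $\sigma>\frac{1-H(\sigma)}{h(\sigma)}$. To get $\sigma^*>0$, run your positivity argument with the right limit $h(\sigma^{*+})$ in place of $h(\sigma^*)$ when $\sigma^*$ is a jump point.

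Your treatment of $\overline{x}+r\le 0$ is also wrong. The inequality $\sigma<\overline{x}+r$ bounds $\sigma$ from above; it gives no lower bound on $\overline{x}+r$. Nor does $\overline{x}+\overline{y}>0$ rule out $\overline{x}+r\le 0$, since $r$ falls far below $\overline{y}$ when $c_Y$ is large. In that case every $w$ in the support of $H$ is nonpositive, while $\frac{1-H(w)}{h(w)}>0$ below the top endpoint. Neither bullet holds, and setting $\sigma^*=\overline{x}+r$ violates $\sigma^*>0$. This is a second counterexample to the statement, not an edge case a proof can absorb. You need to add the hypothesis $\overline{x}+r>0$. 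It costs nothing in the application, because active search requires $\sigma>p^*>0$ and $\sigma<\overline{x}+r$.
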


Now it is ready to state the existence of the equilibrium:

\begin{proposition}\label{equi1}
Under Assumption \ref{logconcave}, if the search costs \(c_X\) and \(c_Y\) are such that \(\sigma>\sigma^*\), then under monopolistic competition, there exists a unique symmetric pure strategy  equilibrium with active consumer search, which is \(p^*=\frac{1-H(\sigma)}{h(\sigma)}\). 

\end{proposition}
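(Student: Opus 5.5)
We work in the monopolistic limit $n=\infty$, where the demand of a firm charging $p$ while all others charge $p^*$ reduces to
\begin{align*}
D(p,p^*)=\frac{L}{n}\,\frac{1-H(\sigma-p^*+p)}{1-H(\sigma)} .
\end{align*}
Only the first part of demand survives: a buyer who arrives at the deviating firm buys there exactly when its lowest index equals $\sigma-p^*$, i.e. when $W_i\ge \sigma-p^*+p$. Otherwise he leaves and never returns. This formula follows from the discrete choice representation delivered by Lemmas \ref{upperbound}--\ref{maxbound}, applied with the buyer's index computed under the belief that firm $i$ charges $p^*$. The normalization $\frac{1}{1-H(\sigma)}$ counts the expected number of arrivals per firm.

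Profit is then, up to a positive constant, $\pi(p)=p\,[1-H(\sigma-p^*+p)]$. We would proceed in four steps.

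\textbf{Necessity and uniqueness.} Any symmetric equilibrium with active search must satisfy the first-order condition at $p=p^*$, which gives
\begin{align*}
p^*=\frac{1-H(\sigma)}{h(\sigma)}.
\end{align*}
This is the unique candidate.

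\textbf{Global optimality against deviations.} By Lemma \ref{inchazard}, the hazard rate $h/(1-H)$ is increasing, so the inverse hazard $m(w)=\frac{1-H(w)}{h(w)}$ is decreasing. Then
\begin{align*}
\frac{\pi'(p)}{1-H(\sigma-p^*+p)} = 1-\frac{p}{m(\sigma-p^*+p)} .
\end{align*}
Here $p/m(\sigma-p^*+p)$ is strictly increasing in $p$ on the region where $H<1$. Hence $\pi$ is quasi-concave, and $p^*$, where this ratio equals one, is its global maximizer.

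\textbf{Boundary cases.} Deviations with $\sigma-p^*+p$ outside $[\underline x+\underline y,\overline x+r]$ need separate treatment. For $p$ below the support, $D$ is constant and increasing $p$ raises profit. For $p$ above the support, demand is zero. Neither boundary beats $p^*$.

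\textbf{Active search.} Buyers must be willing to search, which requires $\sigma-p^*>0$, equivalently $\sigma>m(\sigma)$. By Lemma \ref{smallcost} and the monotonicity of $m$, $w-m(w)$ is increasing and crosses zero at $\sigma^*$, or is always nonnegative in the first case. So $\sigma>\sigma^*$ is exactly the condition for $\sigma>p^*$. The same inequality gives $p^*$ in the interior, with $\sigma$ in the support of $H$, so that $h(\sigma)>0$.

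Individual buyer optimality along the equilibrium path is the index policy of Theorem \ref{opt}. The off-path belief stated in the setup, that after a deviation the buyer still believes other firms charge $p^*$, ensures the buyer's continuation behavior is unaffected.

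The main obstacle is rigorously justifying the $n=\infty$ demand formula from the finite-$n$ expression. One must argue that the second term vanishes and that $\frac{L}{n}\sum_j H(\sigma)^j$ converges to $s/(1-H(\sigma))$ with $s$ fixed. The index/discrete-choice results must also apply to an infinite tree; the paper's footnote asserts this, and we would take it as given. A secondary subtlety is that the index inside the buyer's stopping rule for firm $i$ uses the believed price $p^*$ at stage 1 but the true $p$ afterward. We would handle this by noting that the stage-2 index $X_i+r-p$ is computed after the price is revealed, so the formula for $\kappa_i$ in the text applies directly.
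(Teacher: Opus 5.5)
Your proposal is correct at the paper's level of rigor and follows essentially the same argument: the candidate comes from the $n=\infty$ first-order condition for the profit $p\,[1-H(\sigma-p^*+p)]$, log-concavity (through the increasing hazard rate of $H$ in Lemma \ref{inchazard}) makes this profit single-peaked so the condition is also sufficient, and $\sigma>\sigma^*$ is exactly $\sigma>p^*$, i.e.\ active search, with your write-up additionally spelling out the limiting demand, the boundary deviations, and why $\sigma$ lies inside the support of $H$. One caveat you share with the paper is that the uniqueness step presumes $H$ is differentiable at $\sigma$, but $h$ jumps up by $[1-G(r)]f(\underline{x})>0$ at $\underline{x}+r$, so at the knife-edge $\sigma=\underline{x}+r$ the first-order condition splits into two one-sided inequalities satisfied by a whole interval of prices (e.g.\ $X\sim U[0,1]$, $Y\sim U[10,11]$, $c_Y=1/8$, $c_X=19/48$ give $r=\sigma=10.5$ and an equilibrium at every $p\in[0.875,1.75]$).
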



\subsection{Price revealed after stage 2}\label{s43}
In this subsection, I study the case when the price is revealed together with \(Y\) at stage 2. Suppose all but one firm sets \(p^{\dagger}\) and the remaining firm, say firm \(i\), sets \(p\). Then 
\begin{align*}
    \kappa_i&=\min \{\sigma-p^{\dagger},X_i+r-p^{\dagger},X_i+Y_i-p\},\\
    \kappa_j&=\min \{\sigma-p^{\dagger},X_i+r-p^{\dagger},X_i+Y_i-p^{\dagger}\} \text{ for }j \neq i.
\end{align*}
Similar to the analysis before, the demand for firm \(i\) is
\begin{align*}
     D(p,p^{\dagger},n)=\Phi(\sigma,p) \frac{1}{n} \sum_{j=0}^{n-1}H(\sigma)^{j}+\int_{p^{\dagger}}^{\sigma}[\Phi(w,p)-\Phi(\sigma,p)](n-1)H(w)^{n-2}h(w)dw,
\end{align*}
where 
\begin{align*}
\Phi(w,p)&=\mathrm{Pr}(X+r \geq w \text{ and }X+Y \geq w-p^{\dagger}+p)\\
&=\int_{w-r}^{\infty} \int_{w-p^{\dagger}+p-x}^{\infty}dG(y) dF(x)\\
&=\int_{w-r}^{\infty} [1-G(w-p^{\dagger}+p-x)]dF(x).
\end{align*}

When the price is only revealed at stage 2, the demand responds less aggressively to a change in the price, and FOC gives a necessary condition for the optimal \(p\):

\begin{align*}
\frac{\partial \pi(p,p^{\dagger},n)}{\partial p}&=D_p(p,p^{\dagger},n)[p-(-\frac{D(p,p^{\dagger},n)}{D_p(p,p^{\dagger},n)})]\\
&=D_p(p,p^{\dagger},n)\left[p-\frac{\frac{1}{n}\Phi(\sigma,p)\frac{1-H(\sigma)^n}{1-H(\sigma)}+\int_{p^{\dagger}}^{\sigma}[\Phi(w,p)-\Phi(\sigma,p)](n-1)H(w)^{n-2}h(w)dw}{-\frac{1}{n}\frac{\partial \Phi(\sigma,p)}{\partial p}\frac{1-H(\sigma)^n}{1-H(\sigma)}+\int_{p^{\dagger}}^{\sigma}(-\frac{\partial \Phi(w,p)}{\partial p}+\frac{\partial \Phi(\sigma,p)}{\partial p})(n-1)H(w)^{n-2}h(w)dw}\right]\\
&=0.
\end{align*}
When \(n = \infty\), the \(p=p^{\dagger}\) that satisfies this condition is
\begin{align*}
p^{\dagger}&=-\frac{\Phi(\sigma,p^{\dagger})}{\frac{\partial \Phi(\sigma,p)}{\partial p}|_{p=p^{\dagger}}}\\
    &=\frac{\int_{\sigma-r}^{\infty}[1-G(\sigma-x)]dF(x)}{\int_{\sigma-r}^{\infty}g(\sigma-x)dF(x)}.
\end{align*}

\begin{proposition}\label{equi2}
    Under Assumption \ref{logconcave}, if the search costs \(c_X\) and \(c_Y\) are such that \(\sigma>\frac{1-G(r)}{g(r)}\), then under monopolistic competition, there exists a unique symmetric pure strategy equilibrium with active consumer search, which is \(p^{\dagger}=\frac{\int_{\sigma-r}^{\infty}[1-G(\sigma-x)]dF(x)}{\int_{\sigma-r}^{\infty}g(\sigma-x)dF(x)}\).
\end{proposition}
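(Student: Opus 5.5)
The argument has three parts: identify the candidate price from the first-order condition, check that deviating from it is unprofitable, and check that buyers search actively. Throughout take \(n=\infty\). When all other firms charge \(p^{\dagger}\), a buyer's expected payoff from another visit is fixed. By the discrete choice representation from Lemma \ref{maxbound}, a buyer who reaches firm \(i\) buys there exactly when \(\kappa_i\ge\sigma-p^{\dagger}\). With \(w=\sigma\), this event requires \(X_i+r\ge\sigma\) and \(X_i+Y_i-p\ge\sigma-p^{\dagger}\). The probability of the event is \(\Phi(\sigma,p)\), so demand is proportional to \(\Phi(\sigma,p)/(1-H(\sigma))\) and profit is proportional to \(p\,\Phi(\sigma,p)\). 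Buyers expect \(p^{\dagger}\) and never learn a deviation until stage 2, so the stage-1 threshold \(X+r\ge\sigma\) is the same for every \(p\). The deviation therefore only moves the stage-2 cutoff \(Y\ge \sigma-p^{\dagger}+p-X\). Uniqueness of the candidate then follows, since any symmetric equilibrium must satisfy the FOC derived above at \(p=p^{\dagger}\), and that condition is exactly the displayed formula.

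For sufficiency, write \(t=p-p^{\dagger}\) and \(\Phi(\sigma,p)=\Psi(\sigma+t)\), where \(\Psi(u)=\int_{\sigma-r}^{\infty}[1-G(u-x)]\,dF(x)\). The plan is to show that \(\Psi\) has an increasing hazard rate, i.e.\ that \(\Psi\) is log-concave. I would use Prékopa's theorem. The function \((x,u)\mapsto f(x)\mathbf{1}\{x\ge\sigma-r\}(1-G(u-x))\) is log-concave jointly, because \(f\) is log-concave, the indicator of a half-line is log-concave, and \(1-G\) is log-concave whenever \(g\) is. Integrating out \(x\) preserves log-concavity, so \(\Psi\) is log-concave. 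Hence \(-\Psi'/\Psi\) is increasing. Profit \(p\Psi(\sigma+p-p^{\dagger})\) then has derivative equal to \(\Psi\cdot[1-p\cdot(-\Psi'/\Psi)]\). The bracket is decreasing in \(p\) and crosses zero at most once, so profit is quasi-concave and the FOC point \(p=p^{\dagger}\) is the global maximizer.

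The main obstacle is active search, which requires showing that \(\sigma-p^{\dagger}\ge0\) (indeed \(>0\)) under the hypothesis \(\sigma>\frac{1-G(r)}{g(r)}\). Here I would rewrite \(p^{\dagger}\) as a weighted average of inverse hazard rates of \(G\). Conditional on \(X=x\ge\sigma-r\), the stage-2 cutoff is \(\sigma-x\le r\). We have
\begin{align*}
p^{\dagger}=\frac{\int_{\sigma-r}^{\infty}\frac{1-G(\sigma-x)}{g(\sigma-x)}\,g(\sigma-x)\,dF(x)}{\int_{\sigma-r}^{\infty}g(\sigma-x)\,dF(x)}.
\end{align*}
The inverse hazard rate \((1-G)/g\) is decreasing, and it is evaluated at points \(\sigma-x\le r\), so this weighted average is at least \(\frac{1-G(r)}{g(r)}\). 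That inequality points the wrong way for an upper bound. So I would instead compare \(\sigma\) against \(p^{\dagger}\) through a direct bound, which is the delicate step. Points with \(\sigma-x<\underline{y}\) contribute zero weight, and the effective cutoff lies in \([\underline{y},r]\). I would therefore split the integral at the conditional mass and try to show \(p^{\dagger}\le\frac{1-G(r)}{g(r)}+\) a term absorbed by the strict inequality. If that fails, I would reinterpret the proposition's condition as the one guaranteeing a positive equilibrium buyer payoff \(\sigma-p^{\dagger}>0\), verified through the index equation \(\mathbb{E}[(W-\sigma)^+]=c_X\). Finally, buyers' beliefs are correct on path and off-path deviations are handled by the fixed-belief assumption, which completes the equilibrium verification.
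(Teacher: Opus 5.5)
Your candidate derivation and your sufficiency argument are sound and match the paper in substance. The paper writes the FOC ratio as the inverse hazard rate of \(\tilde X+Y\), with \(\tilde X\) the truncation of \(X\) to \([\sigma-r,\infty)\), and uses that convolution preserves log-concavity; your Pr\'ekopa argument for the log-concavity of \(\Psi\) establishes the same fact.

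\textbf{The gap is the active-search step.} You leave it unresolved, and your hedged plan of bounding \(p^{\dagger}\) by \(\frac{1-G(r)}{g(r)}\) plus a term absorbed by the strict inequality cannot work. Your observation that \(p^{\dagger}\ge\frac{1-G(r)}{g(r)}\) is correct. On \([\underline{y},r]\) one has \(1-G(y)\ge\frac{1-G(r)}{g(r)}\,g(y)\), and points with \(\sigma-x<\underline{y}\) add only to the numerator. Moreover, the hypothesis genuinely fails to deliver \(\sigma>p^{\dagger}\). Here is a counterexample:
\begin{itemize}
\item Take \(F=G=U[0,1]\) and \(c_Y=1/8\), so \(r=1/2\) and \(\frac{1-G(r)}{g(r)}=1/2\).
\item For \(\sigma\in[1/2,1]\), a direct computation gives \(p^{\dagger}=1-\frac{r}{2}+\frac{1-\sigma}{r}=\frac{11}{4}-2\sigma\).
\item So for every \(\sigma\in(1/2,11/12)\) the hypothesis \(\sigma>\frac{1-G(r)}{g(r)}\) holds, yet the first-stage index \(\sigma-p^{\dagger}\) is negative. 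Each such \(\sigma\) is attainable with a suitable \(c_X>0\); for instance, \(\sigma=0.6\) needs \(c_X\approx 0.313\) and gives \(p^{\dagger}=1.55\).
\end{itemize}
In this case no buyer searches. Since \(p^{\dagger}\) is the only FOC candidate, no symmetric equilibrium with active search exists, even though Assumption \ref{logconcave} is satisfied.

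The paper's own proof handles this step with the one-line claim \(p^{\dagger}\le\frac{1-G(r)}{g(r)}<\sigma\). That is exactly the inequality you found to run the other way, so your suspicion was right: the statement as written cannot be proved. Your fallback of assuming \(\sigma>p^{\dagger}\) directly, which is a condition on \(c_X,c_Y\) through \(\sigma\) and \(r\), is the correct repair. But it changes the hypothesis rather than proving the proposition. The index equation \(\mathbb{E}[(W-\sigma)^+]=c_X\) alone carries no information that bounds \(p^{\dagger}\) below \(\sigma\). To make your proposal a complete argument, state the result under the strengthened condition \(\sigma>p^{\dagger}\) and drop the attempt to derive active search from \(\sigma>\frac{1-G(r)}{g(r)}\).
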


Moreover, the equilibrium price is higher than when price is revealed in stage 1, because an increase in price is not observed immediately, and consumers respond less aggressively since they have already incurred search costs in both stages.

\begin{proposition}\label{stage2higher}
    \(p^* \leq p^{\dagger}\).
\end{proposition}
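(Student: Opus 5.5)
The plan is to compare the two closed-form expressions directly: $p^*=\frac{1-H(\sigma)}{h(\sigma)}$ from Proposition \ref{equi1} and $p^{\dagger}=\frac{\int_{\sigma-r}^{\infty}[1-G(\sigma-x)]dF(x)}{\int_{\sigma-r}^{\infty}g(\sigma-x)dF(x)}$ from Proposition \ref{equi2}. I expect the two numerators to coincide and the denominator of $p^*$ to dominate that of $p^{\dagger}$, so no hazard-rate argument should be needed.

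\textbf{Step 1 (numerators agree).} Recall $W=X+\min\{Y,r\}$ with distribution $H$. For fixed $X=x$, the event $x+\min\{Y,r\}>\sigma$ holds if and only if both $x+r>\sigma$ and $x+Y>\sigma$. Conditioning on $X$ and using independence of $X$ and $Y$ then gives
\begin{align*}
1-H(\sigma)=\int_{\sigma-r}^{\infty}[1-G(\sigma-x)]\,dF(x),
\end{align*}
which is exactly $\Phi(\sigma,p^{\dagger})$ evaluated at $p=p^{\dagger}$. Hence the two prices share the same numerator.

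\textbf{Step 2 (densities).} I differentiate the representation $H(w)=1-\int_{w-r}^{\infty}[1-G(w-x)]dF(x)$ in $w$ using Leibniz's rule. This yields
\begin{align*}
h(\sigma)=\int_{\sigma-r}^{\infty}g(\sigma-x)\,dF(x)+[1-G(r)]\,f(\sigma-r).
\end{align*}
The boundary term has a clear interpretation: it is the density contributed by the event $\{Y\ge r\}$, on which $W=X+r$. Since the boundary term is nonnegative, $h(\sigma)\ge \int_{\sigma-r}^{\infty}g(\sigma-x)dF(x)>0$.

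The denominator of $p^{\dagger}$ must be strictly positive, which needs justification. If $\sigma$ lies in the interior of the support of $H$, then the range of $x$ over which $g(\sigma-x)f(x)>0$ has positive measure. At the boundary of the support one can argue by the hypotheses of Propositions \ref{equi1}–\ref{equi2}, or note that $p^{\dagger}$ is defined there only when this denominator is positive. This edge case, together with verifying that the Leibniz differentiation is legitimate (with $f$ and $g$ being densities on intervals, and $f$ possibly discontinuous at the endpoints of its support), is the only delicate point. I would handle it by working with one-sided derivatives or by noting that $\sigma$ lies in the interior of the support under the active-search conditions.

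\textbf{Step 3 (conclusion).} With equal nonnegative numerators and $h(\sigma)\ge \int_{\sigma-r}^{\infty}g(\sigma-x)dF(x)$, it follows that $p^*\le p^{\dagger}$. The inequality is strict whenever $[1-G(r)]f(\sigma-r)>0$.

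This decomposition also matches the economic intuition given before the statement. At stage 1, a price increase additionally drives away the mass $[1-G(r)]f(\sigma-r)$ of marginal buyers who would have stopped after seeing $X$ and the price. When the price is hidden until stage 2, that margin is absent, so demand is less elastic and the equilibrium price is higher.
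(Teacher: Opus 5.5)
Your proposal is correct and follows the paper's proof essentially step for step. The paper also identifies \(1-H(\sigma)\) with the numerator of \(p^{\dagger}\) via \(\{W>\sigma\}=\{X+Y>\sigma\}\cap\{X+r>\sigma\}\), differentiates to obtain \(h(\sigma)=\int_{\sigma-r}^{\infty}g(\sigma-x)dF(x)+[1-G(r)]f(\sigma-r)\), and drops the nonnegative boundary term; your checks on positivity of the denominator and on the Leibniz step go beyond what the paper verifies but do not change the argument.
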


\subsection{Comparison with the benchmark}\label{s44}
In this subsection, I compare the equilibrium price and consumer surplus under monopolistic competition to the benchmark case without gradual resolution of uncertainty. To make a fair comparison, I assume the search cost in the benchmark is \(c_X+c_Y\). Similar analysis gives the equilibrium price in the benchmark, which I denote using \(\hat{p}\). Let \(K\) be the distribution of \(Z\) and \(k\) its density, \(\hat{p}=\frac{1-K(\hat{\sigma})}{k(\hat{\sigma})}\), where \(\mathbb{E}[(X+Y-\hat{\sigma})^+]=c_X+c_Y\).

\subsubsection{\(p^*\) v.s. \(\hat{p}\)}
I identify two market forces that lead to \(p^* \leq \hat{p}\), each detailed in Lemma \ref{option} and Lemma \ref{elasticity}. First, recall that \(\sigma\) defined for the first stage of inspection roughly captures the potential benefit of sampling another product. With gradual resolution of uncertainty, a buyer can always stop after seeing an undesirable realization of \(X\) and save on \(c_Y\). This option value makes it more worthwhile for the buyer to sample new products, which is formally stated as follows:

\begin{lemma}\label{option}
    \(\sigma \geq \hat{\sigma}\).
\end{lemma}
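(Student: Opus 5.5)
Throughout, the plan works with \(\Psi(s)=\mathbb{E}[(W-s)^+]\), where \(W=X+\min\{Y,r\}\). By definition \(\Psi(\sigma)=c_X\). Similarly \(\hat{\sigma}\) solves \(\hat\Psi(s):=\mathbb{E}[(X+Y-s)^+]=c_X+c_Y\). Both functions are continuous and nonincreasing in \(s\) (strictly decreasing where positive), so it suffices to show \(\Psi(\hat\sigma)\ge c_X\). If this holds, monotonicity forces \(\sigma\ge\hat\sigma\). The problem thus becomes a pointwise inequality at \(s=\hat\sigma\).

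The key step is the elementary inequality
\[
(x+y-s)^+\le (x+\min\{y,r\}-s)^+ +(y-r)^+,
\]
valid for all reals \(x,y,s,r\). I would check it by cases.
\begin{itemize}
\item If \(y\le r\), the two sides coincide up to the nonnegative term \((y-r)^+=0\).
\item If \(y>r\), the left side equals \((x+r-s+(y-r))^+\). This is at most \((x+r-s)^+ + (y-r)\) by subadditivity of \((\cdot)^+\), using \((a+b)^+\le a^++b^+\).
\end{itemize}
Taking expectations at \(s=\hat\sigma\), and using independence only to identify \(\mathbb{E}[(Y-r)^+]=c_Y\) (no independence is actually needed), gives
\[
c_X+c_Y=\hat\Psi(\hat\sigma)\le \Psi(\hat\sigma)+\mathbb{E}[(Y-r)^+]=\Psi(\hat\sigma)+c_Y .
\]
Hence \(\Psi(\hat\sigma)\ge c_X=\Psi(\sigma)\), and monotonicity of \(\Psi\) yields \(\hat\sigma\le\sigma\).

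The only delicate point is the final monotonicity step. \(\Psi\) is strictly decreasing on the region where \(\Psi>0\), i.e.\ for \(s<\overline{x}+r\). Since \(c_X>0\), \(\sigma\) lies in this region. If \(\hat\sigma>\sigma\), then \(\Psi(\hat\sigma)<c_X\) (or \(\Psi(\hat\sigma)=0<c_X\)), a contradiction. I would also remark on the economic reading, that the option to abandon after seeing \(X\) is worth weakly more than committing to pay both costs. Equivalently, this follows from Lemma \ref{upperbound}/Theorem \ref{opt}: the agent facing a one-box tree with outside option \(\hat\sigma\) weakly gains from the two-stage structure. The direct inequality argument above suffices, so I expect no real obstacle beyond handling the case split cleanly.
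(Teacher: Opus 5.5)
Your proof is correct and follows the paper's argument: the same pointwise inequality \((x+y-s)^+\le (x+\min\{y,r\}-s)^+ + (y-r)^+\), integrated and then combined with monotonicity. The only differences are cosmetic. The paper evaluates at \(s=\sigma\) and uses monotonicity of \(s\mapsto\mathbb{E}[(X+Y-s)^+]\), whereas you evaluate at \(s=\hat{\sigma}\) and use monotonicity of \(\Psi\); you also spell out the strict-monotonicity step that the paper leaves implicit.
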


Second, when the price is revealed with \(X\), while \(Y\) is only revealed at the second stage, demand responds more aggressively than the benchmark. This is because after completing the first stage of inspection for some firm \(i\), the buyer's decision on whether to continue with firm \(i\) only depends on \(X\) and the price. If this price is too high that the buyer would rather sample another firm, he will never return to firm \(i\). This is different than the benchmark case in which everything is revealed at the same time and buyers with a high enough realization of \(Y\) will buy from firm \(i\) even given its high price. Simple calculation shows that the elasticity at price \(p\)  are \(\epsilon=\frac{ph(p)}{1-H(p)}\) and \(\hat{\epsilon}=\frac{pk(p)}{1-K(p)}\), which is captured  by the hazard rates of the distributions. Therefore, the intuition on demand elasticity can be equivalently stated as a comparison of  hazard rates:

\begin{lemma}\label{elasticity}
    \(H\) has a higher hazard rate than \(K\).
\end{lemma}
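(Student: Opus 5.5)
The plan is to recast Lemma \ref{elasticity} as a hazard rate order between two sums that share the independent summand \(X\). Write \(W=X+\min\{Y,r\}\) and \(Z=X+Y\). The claim is that
\[
\frac{h(w)}{1-H(w)}\geq \frac{k(w)}{1-K(w)}
\]
wherever both sides are defined; on \([\overline{x}+r,\overline{x}+\overline{y}]\), where \(1-H=0\), we read the left-hand hazard as \(+\infty\). Equivalently, the ratio \(\frac{1-K(w)}{1-H(w)}\) should be nondecreasing in \(w\), i.e.\ \(W\leq_{hr}Z\) in the usual hazard rate order. Since \(X\) enters both variables identically and independently of \(Y\), I will prove a hazard rate comparison between \(\min\{Y,r\}\) and \(Y\), and then show that adding the independent log-concave variable \(X\) preserves it.

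\emph{Step 1 (truncation).} Let \(\bar G(y)=1-G(y)\) and \(\bar G_r(y)=\Pr(\min\{Y,r\}>y)\). For \(y<r\) we have \(\bar G_r(y)=\bar G(y)\), and for \(y\geq r\) we have \(\bar G_r(y)=0\). Hence \(\bar G(y)/\bar G_r(y)\) equals \(1\) on \(y<r\) and is \(+\infty\) beyond, so it is nondecreasing. This gives \(\min\{Y,r\}\leq_{hr}Y\). Note that \(r<\overline{y}\), because \(c_Y>0\) forces \(\mathbb{E}[(Y-r)^+]>0\); this matters only in that the truncation is nontrivial.

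\emph{Step 2 (convolution with a log-concave density).} The survival functions are
\[
1-K(w)=\int \bar G(w-x)f(x)\,dx,\qquad 1-H(w)=\int \bar G_r(w-x)f(x)\,dx .
\]
By Assumption \ref{logconcave}, \(f\) is log-concave, so the kernel \((w,u)\mapsto f(w-u)\) is \(TP_2\). Step 1 says that the pair \((\bar G_r(u),\bar G(u))\), indexed by \(i\in\{1,2\}\), is \(TP_2\) in \((i,u)\), because the ratio is monotone in \(u\). The basic composition formula of Karlin then shows that \((1-H(w),1-K(w))\) is \(TP_2\) in \((i,w)\). That is exactly the monotonicity of \(\frac{1-K}{1-H}\), and this is the standard closure result that \(\leq_{hr}\) is preserved under convolution with an independent ILR variable (Shaked and Shanthikumar, Thm.\ 1.B.4). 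Differentiating the monotone ratio then gives the pointwise hazard inequality on the interior of the support of \(H\). Differentiability holds because \(H\) has a density \(h\), as noted before Lemma \ref{inchazard}, and \(K\) has density \(k\).

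\emph{Main obstacle.} The delicate point is Step 2. First, \(\min\{Y,r\}\) has an atom at \(r\), so the composition argument must be run with survival functions (integrals against \(f\)) rather than with densities. Second, the ratio is infinite beyond \(\overline{x}+r\), which needs care. If I want to avoid invoking total positivity, I would instead verify the \(TP_2\) inequality directly. For \(w_1<w_2\) the goal is
\[
\Bigl(\int \bar G_r(w_1-x)f(x)\,dx\Bigr)\Bigl(\int \bar G(w_2-x)f(x)\,dx\Bigr)\leq \Bigl(\int \bar G(w_1-x)f(x)\,dx\Bigr)\Bigl(\int \bar G_r(w_2-x)f(x)\,dx\Bigr).
\]
I would substitute \(u=w-x\), write each side as a double integral over \((u_1,u_2)\), symmetrize over \(u_1\leftrightarrow u_2\), and check that the integrand is nonnegative. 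That nonnegativity is the product of two \(TP_2\) determinants: the one for \(f(w-u)\), which holds by log-concavity, and the one for \((\bar G_r,\bar G)\), which holds by Step 1.
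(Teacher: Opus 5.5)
Your argument is correct, but it takes a different route from the paper's. The paper never uses the hazard-rate order or total positivity. It splits the density and survival integrals of \(W\) and \(Z\) at \(y=r\), writing \(\lambda_W(t)=\frac{A+a}{B+b}\) and \(\lambda_Z(t)=\frac{A+c}{B+d}\). Here \(A,B\) are the common contributions from \(y<r\); \(a=f(t-r)(1-G(r))\) and \(b=S(t-r)(1-G(r))\) come from the atom of \(\min\{Y,r\}\); and \(c,d\) come from the part \(y>r\) of \(Y\). Using only that \(F\) has an increasing hazard rate, the paper gets \(c/d\le a/b\le A/B\). It notes \(b\le d\) because \(S\) is decreasing, and a short algebraic rearrangement then gives \(\lambda_W(t)-\lambda_Z(t)\ge 0\).

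That proof is pointwise, with no differentiation step. It needs only that \(X\) is IFR, which is weaker than log-concavity of \(f\), because it exploits the special fact that \(\min\{Y,r\}\) and \(Y\) coincide below \(r\).

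Your proof shows \(\min\{Y,r\}\le_{\mathrm{hr}}Y\), then preserves \(\le_{\mathrm{hr}}\) under convolution with an independent log-concave density via the basic composition formula. This is more conceptual and more general: it applies verbatim whenever \(Y\) is replaced by any variable smaller in the hazard-rate order. Working with survival functions also handles the atom at \(r\) cleanly. The costs are the full log-concavity of \(f\), which Assumption \ref{logconcave} supplies, and the passage from a monotone ratio to a pointwise hazard inequality. Since \(h\) can jump (e.g.\ at \(\underline{x}+r\) when \(f\) jumps at \(\underline{x}\), as in the uniform example of Section \ref{s51}), read that step with one-sided derivatives. Doing so gives the inequality for either one-sided version of \(h\).

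One slip to fix: the inequality you display as the target of the direct verification is reversed. Monotonicity of \(\frac{1-K}{1-H}\) means \((1-H(w_1))(1-K(w_2))\ge(1-K(w_1))(1-H(w_2))\) for \(w_1<w_2\). As written, your display fails, for instance, when \(w_1<\overline{x}+r\le w_2<\overline{x}+\overline{y}\): there its left side is positive and its right side is zero. With the direction corrected, your symmetrization into a product of two \(TP_2\) determinants goes through.
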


The two lemmas combine to give the following theorem, implying that gradual resolution of uncertainty helps reduce the market price.
\begin{theorem}\label{lowerprice}
    \(p^* \leq \hat{p}\).
\end{theorem}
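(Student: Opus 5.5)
We want \(p^* = \frac{1-H(\sigma)}{h(\sigma)} \leq \hat p = \frac{1-K(\hat\sigma)}{k(\hat\sigma)}\). The plan is to chain the two lemmas through an intermediate quantity:
\begin{align*}
\frac{1-H(\sigma)}{h(\sigma)} \leq \frac{1-H(\hat\sigma)}{h(\hat\sigma)} \leq \frac{1-K(\hat\sigma)}{k(\hat\sigma)}.
\end{align*}

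For the first inequality, Lemma \ref{option} gives \(\sigma \geq \hat\sigma\). Lemma \ref{inchazard} says that under Assumption \ref{logconcave}, \(H\) has increasing hazard rate, so the inverse hazard rate \((1-H)/h\) is nonincreasing on the support of \(H\). This yields the first inequality whenever both \(\sigma\) and \(\hat\sigma\) lie in \([\underline{x}+\underline{y},\overline{x}+r]\). The second inequality is exactly Lemma \ref{elasticity}, read as \(h(w)/(1-H(w)) \geq k(w)/(1-K(w))\) and evaluated at \(w=\hat\sigma\).

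The main obstacle is making sure both evaluations are legitimate, that is, checking the support and boundary cases. The support of \(H\) is \([\underline{x}+\underline{y},\overline{x}+r]\) with \(r<\overline{y}\), while \(K\) is supported on \([\underline{x}+\underline{y},\overline{x}+\overline{y}]\). I will argue as follows.
\begin{itemize}
\item \(\sigma \le \overline{x}+r\). This holds because \(\mathbb{E}[(W-\sigma)^+]=c_X>0\) requires \(\sigma\) to lie below the top of the support of \(W\).
\item \(\hat\sigma \le \sigma \le \overline{x}+r\). This follows from Lemma \ref{option}, so \(\hat\sigma\) is also in the region where \(h>0\).
\item Lower end. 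If \(\hat\sigma\) falls below \(\underline{x}+\underline{y}\), then both \(H(\hat\sigma)\) and \(K(\hat\sigma)\) vanish, and the inverse hazard rates degenerate. Here I will either extend the hazard rates by the convention \(1/h\) on the left, or rely on the equilibrium hypotheses \(\sigma>\sigma^*\) (Proposition \ref{equi1}) and the analogous benchmark condition to stay in the interior.
\end{itemize}

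I also need to confirm that Lemma \ref{elasticity} holds pointwise on the common interior of the supports, not just as an integrated statement. If it were proved only as a likelihood-ratio-type comparison, I would re-derive the pointwise hazard inequality at \(\hat\sigma\) directly from \(W=X+\min\{Y,r\} \le X+Y\).
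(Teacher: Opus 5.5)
Your proof is correct and uses the same ingredients as the paper, but chains them in the opposite order. The paper writes \(p^*=1/\lambda_W(\sigma)\le 1/\lambda_Z(\sigma)\le 1/\lambda_Z(\hat\sigma)=\hat p\): it applies Lemma \ref{elasticity} at \(\sigma\) and then uses that \(K\) has increasing hazard rate, which it invokes tacitly via log-concavity of \(X+Y\). You instead use the increasing hazard rate of \(H\) from Lemma \ref{inchazard} on \([\hat\sigma,\sigma]\) and then apply Lemma \ref{elasticity} at \(\hat\sigma\). Your order has the small merit of resting only on lemmas the paper actually states, and your support checks (\(\hat\sigma\le\sigma<\overline{x}+r\), lower endpoint handled via \(\sigma>\sigma^*\)) are more careful than the paper's one-line proof.

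One caution about the contingency in your last paragraph. Deriving hazard-rate dominance ``directly from \(W=X+\min\{Y,r\}\le X+Y\)'' would not work, since almost-sure ordering gives only first-order stochastic dominance, which is strictly weaker than hazard-rate dominance. The fallback is not needed, however, because the paper proves Lemma \ref{elasticity} pointwise for every \(t\in[\underline{x}+\underline{y},\overline{x}+r]\), which covers \(\hat\sigma\).
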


Moreover, because consumers also benefit from the option value, and the equilibrium price is lower, consumer surplus is higher than in the benchmark.

\begin{corollary}\label{highercs}
When the price is revealed at the first stage of inspection, gradual resolution of uncertainty gives a higher consumer surplus.
\end{corollary}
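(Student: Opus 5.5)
We need to prove Corollary \ref{highercs}: consumer surplus with two-stage inspection (price revealed at stage 1) is at least that of the benchmark.

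The plan is to write consumer surplus in both regimes through the discrete choice representation delivered by Lemmas \ref{upperbound}--\ref{maxbound}. Under monopolistic competition a buyer samples firms sequentially and, with probability one, eventually stops at some firm without ever recalling an earlier one. By Lemma \ref{nonexposure} his expected payoff equals the expected capped value of the firm he ends up with. Because every firm's first index is the same number $\sigma-p^*$, the buyer keeps sampling until he finds a firm with $\kappa=\sigma-p^*$, that is, $W\ge\sigma$. He then completes inspection there and realizes capped value $\sigma-p^*$. Hence consumer surplus in the two-stage market is $CS=\sigma-p^*$. The same argument, applied to the one-stage benchmark as a special case of the tree with search cost $c_X+c_Y$, gives $\widehat{CS}=\hat\sigma-\hat p$. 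I will verify this stationarity carefully. The event that every sampled firm falls short has probability zero when $n=\infty$, so the ex ante expected payoff equals the common first-stage capped value net of price. This uses the same reasoning as the one used to drop the second demand term in deriving $p^*$.

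With these formulas, the corollary reduces to $\sigma-p^*\ge\hat\sigma-\hat p$. Two inequalities then suffice:
\begin{itemize}
\item $\sigma\ge\hat\sigma$, which is Lemma \ref{option};
\item $p^*\le\hat p$, which is Theorem \ref{lowerprice}.
\end{itemize}
Adding them gives the result. A slightly stronger route also works: because $H$ has an increasing hazard rate (Lemma \ref{inchazard}), $p^*=\frac{1-H(\sigma)}{h(\sigma)}$ is nonincreasing in $\sigma$. This would let me separate the option-value effect from the hazard-rate effect of Lemma \ref{elasticity}. It is not needed, however, since Theorem \ref{lowerprice} is already available.

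The main obstacle is the first step: justifying rigorously that $CS=\sigma-p^*$ in the infinite-firm limit. Lemmas \ref{upperbound}--\ref{nonexposure} are stated for finite trees, so I will compute $CS_n$ for $n$ firms via the discrete choice formula $\mathbb{E}[\max_j\kappa_j]$, with a zero outside option whose relevance vanishes. I will then show $CS_n\to\sigma-p^*$. Let $q=1-H(\sigma)>0$, which holds because $\sigma>\sigma^*$ lies inside the support. The maximum of $n$ i.i.d.\ capped values equals $\sigma-p^*$ with probability $1-(1-q)^n\to1$, and the values are bounded, so dominated convergence yields the limit. The benchmark limit follows identically. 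A remaining check is that $\sigma-p^*\ge 0$, which holds because $\sigma>\sigma^*$ together with Lemma \ref{smallcost} ensures participation.
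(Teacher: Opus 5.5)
Your proposal is correct and follows the paper's proof. Both arguments use the discrete choice representation from Lemmas \ref{upperbound}--\ref{maxbound} and let $n\to\infty$ (the paper via the sandwich bound $[1-K(\hat\sigma)^n](\hat\sigma-\hat p)\le\cdot\le\hat\sigma-\hat p$, you via dominated convergence) to obtain $\sigma-p^*$ and $\hat\sigma-\hat p$, and then combine Lemma \ref{option} with Theorem \ref{lowerprice}. One small correction: $q=1-H(\sigma)>0$ follows from $c_X>0$, which forces $\sigma$ strictly below the top $\overline{x}+r$ of the support of $W$, not from $\sigma>\sigma^*$, which only places $\sigma$ above the bottom of the support.
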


\begin{remark}
    Note that the above analysis holds true only for \(n=\infty\), i.e., the case of monopolistic competition. When \(n\) is small, even if the buyer is temporarily turned away by a high price or a bad realization of the match value at firm \(i\), it is still possible for him to return and purchase from firm \(i\) when he gets a bad match value at every other firm in the market. When \(n=\infty\), this event happens with probability close to \(0\). I discuss how Theorem \ref{lowerprice} fails in small markets in Section \ref{s51}.
\end{remark}

\subsubsection{\(p^{\dagger}\) v.s. \(\hat{p}\)}
Notice that,
\begin{align*}
    \hat{p}=\frac{1-K(\hat{\sigma})}{k(\hat{\sigma})}=\frac{\int_{-\infty}^{\infty}[1-G(\hat{\sigma}-x)]dF(x)}{\int_{-\infty}^{\infty}g(\hat{\sigma}-x)dF(x)},
\end{align*}
and even under Assumption \ref{logconcave}, comparisons of prices and consumer surplus are ambiguous. Intuitively, there are two forces that affect equilibrium price. The first force is similar to the previous case: gradual resolution of uncertainty intensifies competition as the option value makes it more beneficial to sample another product. The second force is, however, dramatically different from before. Since in this case price is revealed only at the second stage, raising the price will never deter consumer search. Moreover, given that the buyer chooses to complete the second stage of inspection, the firm knows that the realization of \(X\) must be high enough so that he chooses to continue rather than sample another firm. Thus, the firm has an incentive to raise its price. The two forces go in opposite directions, and whether equilibrium price goes up or down depends on the search environment. Consumer surplus can fall when the increase in price brought about by locking the consumer in from stage 1 fully offsets the option value.

\subsection{When to reveal the price?}\label{s45}
In the preceding analysis, I fix the same price revelation policy across firms and examine the resulting market equilibrium price. I also implicitly assume that firms can commit to the price set in the initial stage. Two natural challenges arise from this setup. First, firms typically have significant control over their price revelation strategies, which should be determined endogenously as part of the equilibrium. Second, firms may lack commitment power and choose to set a different price in the second stage.

To address the first challenge, I consider a modified pricing game in which firms choose not only the price to set, but also the timing of its revelation. Specifically, each firm \(i\) can either reveal its price \(p_{2,i}\) only at stage 2, or reveal an initial price \(p_{1,i}\) at stage 1 followed by a potentially different price \(p_{2,i}\) at stage 2. The price revelation policy is unknown to the buyer at the outset and it is discovered along search. To address the second challenge, I analyze two scenarios separately: (i) firms cannot commit to the price set in the initial stage; and (ii) firms are prohibited from raising their price at stage 2. Case (ii) reflects a regulatory environment that bans drip pricing, while case (i) corresponds to a market without such regulations. Still, I focus on monopolistic competition.

Before delving into the detail,  I outline the equilibrium concept I will use for the analysis. An equilibrium consists of firms' strategies, the buyer's belief and his search policy, satisfying the following conditions:
\begin{enumerate}
    \item At every history, the buyer adopts a search policy that maximizes his expected payoff, given his belief.
    \item For each firm \(i\), the strategy chosen maximizes its expected profit, given the buyer’s search behavior.
    \item The buyer’s belief is updated according to Bayes’ rule whenever applicable. At off-path histories (when a deviation is detected at stage 1), belief is updated according to the following assumptions:
\end{enumerate}
\begin{assumption}\label{beliefunreg}
    In an unregulated market, upon any off-path deviation at stage 1, the consumer believes the price charged at stage 2 is \(\infty\). 
\end{assumption}
This assumption says that consumers become extremely skeptical after seeing a deviation at stage 1 and will believe the price at stage 2 is as high as possible. It is a very restrictive refinement and not necessary for any of the results developed in this subsection. In Section \ref{s52}, I show that all results remain true under much weaker restrictions on belief refinement.
\begin{assumption}\label{beliefreg}
    In a regulated market, upon any off-path deviation at stage 1, the consumer believes the price charged at stage 2 is exactly the price charged at stage 1.
\end{assumption}
In a regulated market, consumers still become pessimistic about the price when observing a deviation, but due to the regulation, they know there is an upperbound on what can be charged at stage 2 and will believe that price is exactly the same as the stage 1 price.

Equipped with the equilibrium concept, I characterize the equilibrium price in both unregulated and regulated markets. I show that the unique equilibrium price in the unregulated (regulated) market coincides with the one derived in Section \ref{s43} (\ref{s42}), respectively.

\subsubsection{Unregulated markets}
Suppose that firm \(i\) playing mixed strategy \(\rho_i \in \Delta((\varnothing \cup \mathbb{R}_{+}) \times \mathbb{R}_{+})\) constitutes an equilibrium, where \(\varnothing\) means the firm chooses not to reveal any price at stage 1. Intuitively, since firms have no commitment power, prices revealed at stage 1 are only cheap talk messages, and the buyer might as well just ignore them. Thus, a natural guess of equilibrium in such a market is the one obtained in Section \ref{s43}, where no price is revealed at stage 1. However, this intuition is incomplete. Given a putative equilibrium, price revealed at stage 1 can be informative of the price to be revealed at stage 2 and thus coordinate consumer belief, whereas in Section \ref{s43}, there is no such thing that can change the buyer's belief before stage 2. Thus one cannot confidently rule out mixed strategy equilibria.

To pin down the equilibrium, I first show that in any equilibrium, regardless of the price revealed at stage 1 (including \(\varnothing\)), the price set at stage 2 is unique, as stated in the following lemma:

\begin{lemma}\label{nomix2}
In any equilibrium, for any price revealed at stage 1, the price set at stage 2 must be unique.
\end{lemma}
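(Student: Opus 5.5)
Fix an equilibrium and a stage-1 message $m \in \varnothing \cup \mathbb{R}_+$ that some firm sends with positive probability (or density) in equilibrium. The plan is to argue by contradiction. Suppose that, conditional on $m$, the equilibrium strategy $\rho_i$ puts mass on two distinct stage-2 prices $p_2 < p_2'$. The key observation is that the stage-2 price is chosen after the buyer has already committed to stage 2. At that point the buyer's search behavior up to the revelation of $(Y_i,p_{2,i})$ depends only on his belief given $m$, and not on which $p_2$ is actually charged. The stage-2 price is therefore never observed before the buyer pays $c_Y$, exactly as in Section~\ref{s43}. So, holding $m$ fixed, the firm's profit from a stage-2 price $p$ is $p\cdot D_m(p)$. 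Here $D_m(p)$ is the demand from buyers who reached stage 2 at firm $i$ after seeing $m$.

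First I will write $D_m(p)$ explicitly for $n=\infty$. Let $\tilde p_m$ denote the buyer's expected stage-2 price given $m$; this is the only channel through which the mixing matters. The buyer continues to stage 2 at firm $i$ iff $X_i+r-\tilde p_m$ exceeds the continuation value $\sigma - \bar p$ of sampling a fresh firm, where $\bar p$ is the appropriate equilibrium expected price. Given continuation, he buys iff $X_i+Y_i-p \ge \sigma-\bar p$, since with infinitely many firms he never returns. Using the discrete choice formulation from Lemma~\ref{maxbound}, this gives
\[
D_m(p) \propto \Pr\bigl(X_i \ge \sigma-\bar p - r+\tilde p_m,\; X_i+Y_i \ge \sigma-\bar p + p\bigr)
= \int_{a_m}^{\infty}\bigl[1-G(b+p-x)\bigr]\,dF(x),
\]
where $a_m$ and $b$ do not depend on $p$. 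This has the same form as $\Phi(\cdot,p)$ in Section~\ref{s43}.

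Second, I will show that $p\mapsto p\,D_m(p)$ has a unique maximizer. Under Assumption~\ref{logconcave}, $1-G$ is log-concave and $f$ is log-concave. Hence the integrand $[1-G(b+p-x)]f(x)\mathbf{1}\{x\ge a_m\}$ is jointly log-concave in $(x,p)$, and by Prékopa's theorem $D_m$ is log-concave in $p$. Then $\log p + \log D_m(p)$ is strictly concave on $p>0$, so the maximizer is unique. Where $D_m>0$ somewhere, it is also interior, since profit is zero at $p=0$ and as $D_m\to 0$. Because every price in the support of the conditional mixed strategy must maximize $p\,D_m(p)$, the support is a singleton, contradicting $p_2<p_2'$.

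The main obstacle is that $D_m$ depends on beliefs, namely $\tilde p_m$, $\bar p$ and $\sigma$, which are themselves generated by the possibly mixed strategy. The step to verify carefully is that none of these objects move with the firm's actual stage-2 choice. For an individual firm this holds because the choice is unobserved before stage 2 and the buyer cannot condition on it. So the argument treats beliefs as fixed parameters and takes the best response pointwise in $m$. Two degenerate cases also need handling: messages $m$ after which no buyer continues ($D_m\equiv 0$), and off-path $m$. For an off-path $m$, Assumption~\ref{beliefunreg} makes the belief $\infty$, so no buyer continues and the stage-2 price after $m$ is payoff-irrelevant. There I will adopt the convention of selecting the maximizer of $p\,D_m(p)$ computed with $\tilde p_m$ set to the on-path value, or else note that uniqueness is claimed only on histories that are reached.
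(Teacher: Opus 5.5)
Your argument is correct and is essentially the paper's. With the stage-1 message fixed, demand is a constant times $\int_{a}^{\infty}[1-G(b+p-x)]\,dF(x)$, with thresholds that do not move with the realized stage-2 price, and log-concavity gives a unique profit maximizer. The paper states this as the FOC ratio (the inverse hazard rate of truncated $X$ plus $Y$) being decreasing in $p'$, which is the same fact as your Pr\'ekopa-based log-concavity of $D_m$; your handling of the degenerate $D_m\equiv 0$ and off-path cases is an addition the paper omits.

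One minor imprecision, which does not affect the proof: under a nondegenerate belief the stage-2 index is $X_i+r_m$ with $\mathbb{E}[(Y-p_2-r_m)^+\mid m]=c_Y$, not $X_i+r-\mathbb{E}[p_2\mid m]$. You only use that the resulting threshold is independent of the firm's actual choice, so the argument goes through.
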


To understand this lemma, note that given a putative equilibrium, for any  price revealed at stage 1, the buyer will form a belief about the price to be revealed later, and the firm will always set a price at stage 2 to maximize its profit given this belief. Assumption \ref{logconcave} ensures that the distribution is well-behaved so that there is a unique maximizer. 

Lemma \ref{nomix2} states only that, conditional on the price revealed at stage 1, the firm does not mix over different prices at stage 2. However, it does not rule out mixed strategies altogether; in particular, the firm may still randomize over different pairs of prices. The following lemma establishes that such randomization over prices at stage 2 cannot be an equilibrium phenomenon:

\begin{lemma}\label{uniquestage2}
    In any equilibrium, a unique price is revealed at stage 2.
\end{lemma}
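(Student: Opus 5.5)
Suppose, for contradiction, that in some equilibrium the stage-2 price is not degenerate. By Lemma \ref{nomix2}, each stage-1 message $m \in \varnothing \cup \mathbb{R}_+$ on the support of $\rho_i$ pins down a unique stage-2 price $p_2(m)$. Non-uniqueness of the stage-2 price therefore means there are two on-path messages $m, m'$ with $p_2(m) < p_2(m')$. I will show that a firm sending $m'$ strictly prefers to deviate to $m$ and then charge its best stage-2 response. This contradicts equilibrium.

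The plan has three steps.

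\emph{Step 1: demand after a stage-1 message.} Under monopolistic competition, the buyer's outside continuation value is determined by the equilibrium distribution of other firms' messages. After observing $X_i$ and message $m$, he expects price $p_2(m)$ and continues to stage 2 iff $X_i + r - p_2(m)$ exceeds his continuation threshold. This threshold is $\sigma - \bar p$ for the appropriate expected price $\bar p$, obtained via the index characterization of Theorem \ref{opt} and Lemma \ref{maxbound}, exactly as in Section \ref{s43}. At stage 2 he buys iff $X_i + Y_i - p$ is at least that threshold.

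The profit from message $m$ followed by actual price $p$ is therefore $\pi(p; m) = p\,\Phi_m(p)$, where $\Phi_m$ has the form of $\Phi(\sigma, p)$ with $p^{\dagger}$ replaced by the believed price $p_2(m)$. In equilibrium, $p_2(m) = \arg\max_p \pi(p; m)$; this is the fixed point behind Lemma \ref{nomix2}.

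\emph{Step 2: lower beliefs raise profit.} I want to show that the value $V(m) = \max_p \pi(p; m)$ is strictly decreasing in the believed price $p_2(m)$. A lower believed price raises the set of $X_i$ for which the buyer enters stage 2: the cutoff is $\sigma + p_2(m) - r$ relative to the outside threshold. The stage-2 purchase rule depends only on the actual $p$, not on beliefs. Hence for every fixed $p$, $\Phi_m(p)$ is weakly larger when $p_2(m)$ is lower, and strictly larger given $f > 0$ on the interval. It follows that $V(m) > V(m')$ whenever $p_2(m) < p_2(m')$. Moreover, the buyer's beliefs conditional on $m$ are unaffected by the deviation, since he observes only the message at stage 1. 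The cost of the message is zero, so it is cheap talk.

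\emph{Step 3: conclude.} Both messages lie in the support of a mixed strategy, so indifference requires $V(m) = V(m')$. Step 2 gives strict inequality, a contradiction. Hence all on-path messages induce the same $p_2$, which gives a unique stage-2 price. Messages off the support are irrelevant by Assumption \ref{beliefunreg}: they induce belief $\infty$, so no buyer continues, and they are never profitable.

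\emph{Main obstacle.} The delicate part is Step 2. The threshold the buyer compares against is endogenous: it is the index of sampling a fresh firm, which depends on the whole equilibrium mixture of messages and the prices they induce. I must check that this threshold is common to both messages, so that it cancels in the comparison. I must also check that a positive mass of buyers reaches firm $i$, so the strict inequality holds. For the strictness, I would use the full support of $F$ and $G$ together with the condition that active search occurs in equilibrium, i.e.\ the threshold lies in the interior of the support of $X + r$.
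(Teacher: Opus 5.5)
Your argument is correct in substance, but it takes a different route from the paper.

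\emph{The paper's route.} The paper never uses indifference across the support. It fixes the common fresh-firm threshold $\sigma-\tilde p$. It then notes that whatever stage-2 price $p$ a stage-1 message induces must be a best response to the buyer's belief that the price is $p$. Via the FOC, this self-confirming condition reads $p=R(\sigma-\tilde p+p)$, with $R(t)=\frac{\int_{-\infty}^r[1-G(y)]f(t-y)\,dy}{\int_{-\infty}^r g(y)f(t-y)\,dy}$. Lemma \ref{Karlin} shows $R$ is non-increasing: log-concavity makes $f(t-y)$ a $TP_2$ kernel and $1-G-cg$ single-crossing, so Karlin's variation-diminishing property applies. Hence the fixed point is unique, and every on-path message, whichever firm sends it, induces the same stage-2 price.

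\emph{Your route.} You replace this with a revealed-preference argument. A lower believed price only enlarges the set of $X_i$ that proceed to stage 2, and it leaves the stage-2 purchase rule unchanged. So the message inducing the lower belief is strictly more profitable, which is incompatible with mixing.

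\emph{What each buys.} Your route is more elementary and needs no variation-diminishing argument beyond what Lemma \ref{nomix2} already uses. The paper's route gives a stronger statement: it pins down the induced stage-2 price from the threshold alone. Yours relies on indifference and on the deviation $m'\to m$ being undetectable, so it only controls messages in the support of a single firm's strategy.

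\emph{One step to tighten.} The claim in Step 2 that demand is strictly higher after $m$ at \emph{every} $p$ is false in general. Also, active search does not place the continuation cutoff in the interior of the support, so it cannot supply the strictness you want. You only need strictness at $p=p_2(m')$, through $V(m)\ge\pi(p_2(m');m)>\pi(p_2(m');m')=V(m')$. Let $a<a'$ be the two cutoffs $\sigma-\tilde p+p_2(\cdot)-r$ on $X_i$. Two cases matter.
\begin{itemize}
\item If $a'\le\underline x$, every buyer proceeds under both messages. The two profit functions then coincide, so their maximizers coincide, which contradicts $p_2(m)<p_2(m')$ directly.
\item If $\underline x<a'\le\overline x$, the extra demand $\int_a^{a'}[1-G(\sigma-\tilde p+p_2(m')-x)]\,dF(x)$ is strictly positive. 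For $x$ just below $a'$, the argument of $G$ is just above $r$, and $r<\overline y$ because $c_Y>0$. In addition, $p_2(m')>p_2(m)\ge 0$.
\end{itemize}
The remaining case $a'>\overline x$ gives zero demand after $m'$. It is equally degenerate for the paper's argument, since $R$ is then $0/0$.
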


In order for some price \(p\) (revealed at stage 2) to be part of an equilibrium, it must be that, given the buyer's belief that the firm will charge \(p\), it is still optimal for the firm to do so. Under Assumption \ref{logconcave}, by leveraging the variation diminishing property in \cite{karlin1968total}, I show that,  given any index for the first stage of inspection, there is a unique \(p\) such that the firm has no incentive to deviate at stage 2, thus the firm will not randomize over prices at stage 2. 

Based on Lemma \ref{nomix2} and \ref{uniquestage2}, the search for equilibria reduces to the analysis in Section \ref{s43}, and one can easily verify that it is indeed an equilibrium price in an unregulated market with the belief updating rule. Thus, one has the following proposition:

\begin{proposition}\label{unique2}
Under Assumption \ref{logconcave} and \ref{beliefunreg}, in an unregulated market, the unique equilibrium price under monopolistic competition is exactly \(p^{\dagger}\).    
\end{proposition}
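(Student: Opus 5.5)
The plan has two parts. First, I will use Lemma \ref{nomix2} and Lemma \ref{uniquestage2} to show that any equilibrium must feature the single stage-2 price \(p^{\dagger}\). Second, I will verify that \(p^{\dagger}\), together with Assumption \ref{beliefunreg}, is indeed an equilibrium.

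\textbf{Uniqueness.} Take any equilibrium. By Lemma \ref{uniquestage2}, there is a unique price \(\bar p\) that is ever revealed at stage 2. So on path, every firm charges \(\bar p\) at stage 2, whatever message it sent at stage 1.

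Because stage-1 messages carry no commitment, on path they are uninformative about \(\bar p\). The buyer's first-stage index is therefore \(\sigma-\bar p\) at every firm, and his continuation rule at a firm depends only on \(X_i\). This is exactly the structure of Section \ref{s43}.

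Next I consider a firm that deviates only in its stage-2 price to some \(p\), keeping an on-path stage-1 message. Such a deviation is undetectable before stage 2, so the buyer's behaviour up to stage 2 is unchanged. The firm's demand is then \(D(p,\bar p,\infty)\) from Section \ref{s43}, which is proportional to \(\Phi(\sigma,p)\). No profitable deviation requires \(p=\bar p\) to maximize \(p\,\Phi(\sigma,p)\). Under Assumption \ref{logconcave}, the first-order condition characterizes this maximizer (the log-concavity argument behind Proposition \ref{equi2}), and the fixed point is exactly \(\bar p=p^{\dagger}\). I also need active search, which requires \(\sigma-p^{\dagger}>0\) under the condition of Proposition \ref{equi2}, since a prohibitive price would be excluded by the active-search requirement.

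\textbf{Existence.} Let every firm reveal nothing (\(\varnothing\)) at stage 1 and charge \(p^{\dagger}\) at stage 2. I will check that no deviation is profitable.

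1. A deviation only in the stage-2 price is covered by Proposition \ref{equi2}.
2. A deviation that reveals some price \(p_1\) at stage 1 is off path. By Assumption \ref{beliefunreg}, the buyer then believes the stage-2 price is \(\infty\), so his index at that firm falls to \(-\infty\). Under monopolistic competition he leaves immediately and never returns, and the deviator earns zero profit. This is weakly worse than its equilibrium profit \(p^{\dagger}D>0\).

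Hence \(p^{\dagger}\) is an equilibrium price, and by the uniqueness part it is the only one.

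\textbf{Main obstacle.} The delicate step is the uniqueness argument's claim that stage-1 messages cannot coordinate beliefs in a way that shifts the stage-2 price away from \(p^{\dagger}\). The worry is that different messages could induce different conditional beliefs, leading to different \(\sigma\)-thresholds.

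My approach is to lean on Lemma \ref{uniquestage2}, which already pins down a single stage-2 price independent of the message. Once that holds, conditional beliefs coincide across messages, and the fixed-point equation reduces to the one in Section \ref{s43}. If that lemma turns out to give less than I need, I would argue directly as follows. For each on-path message, the firm's stage-2 best response to the belief \(\bar p\) is the unique maximizer of \(p\,\Phi(\sigma,p)\). By the variation-diminishing argument, the fixed point of this map is unique, so every message leads to the same price \(p^{\dagger}\).
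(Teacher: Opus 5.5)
Your proposal is correct and follows essentially the same route as the paper. Lemmas \ref{nomix2} and \ref{uniquestage2} reduce any equilibrium to the Section \ref{s43} fixed point, which pins down \(p^{\dagger}\). Existence follows because stage-2 deviations are unprofitable by the Section \ref{s43} analysis, while any stage-1 deviation, under Assumption \ref{beliefunreg}, deters search and yields zero profit.
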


\subsubsection{Regulated markets}
I follow similar procedures in analyzing the equilibrium in regulated markets. Similar to Lemma \ref{nomix2}, Assumption \ref{logconcave} ensures the profit function is single-peaked in stage 2 price, and thus there is a unique maximizer under the regulation, as stated in the following lemma:

\begin{lemma}\label{nomix1}
    In any equilibrium, for any price revealed at stage 1, the price set at stage 2 must be unique.
\end{lemma}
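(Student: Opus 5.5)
Fix an equilibrium and a stage-1 message $m\in\varnothing\cup\mathbb{R}_+$ revealed by firm $i$ with positive probability, whether on or off path. I would show that the stage-2 continuation profit, as a function of the stage-2 price $p_2\le m$ (or of unrestricted $p_2$ when $m=\varnothing$), has a unique maximizer. Since the firm's equilibrium strategy must put only optimal stage-2 prices after $m$, uniqueness of the maximizer gives uniqueness of the stage-2 price.

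\textbf{Step 1: demand after the message.} After $m$, the buyer holds a belief about the stage-2 price. By Assumption \ref{beliefreg} this belief is $m$ off path; on path it is a Bayes posterior $\mu_m$ supported on $(-\infty,m]$ because of the regulation. The buyer's index for continuing at firm $i$ is then a fixed number $r_m$ determined by $\mu_m$: the buyer proceeds to stage 2 iff $X_i+r_m$ exceeds the continuation value $\sigma-p^*_{\text{eq}}$ of sampling a fresh firm. Under monopolistic competition the buyer never returns, so the fresh-firm value is a constant. The firm's choice of $p_2$ does not affect who enters stage 2; it only affects purchase among those who did. Purchase then requires $X_i+Y_i-p_2\ge w$ for the relevant threshold $w$, as in the computation of $\Phi$ in Section \ref{s43}. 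Stage-2 demand therefore takes the form
\begin{align*}
\tilde D(p_2)=\int_{A_m}\big[1-G(\tau-x+p_2)\big]\,dF(x)
\end{align*}
for some constant $\tau$ and some half-line $A_m$ of $x$-values. This is the same structure as $\Phi(\sigma,p)$, with $r$ replaced by the belief-dependent cutoff.

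\textbf{Step 2: log-concavity of demand.} The integrand $1-G(\tau-x+p_2)$ is log-concave jointly in $(x,p_2)$, because $g$ log-concave implies $1-G$ log-concave. The restricted density $f\cdot\mathbf 1_{A_m}$ is log-concave, since $f$ is log-concave and $A_m$ is an interval. By Prékopa's theorem, $\tilde D$ is log-concave in $p_2$, so $-\tilde D/\tilde D'$ is decreasing. The profit $p_2\tilde D(p_2)$ then has first-order condition $p_2=-\tilde D(p_2)/\tilde D'(p_2)$, whose left side is strictly increasing and whose right side is weakly decreasing, so there is at most one crossing. Hence the profit is strictly quasi-concave, i.e.\ single-peaked, on the region where $\tilde D>0$.

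\textbf{Step 3: adding the regulation constraint.} Maximizing a single-peaked function over $p_2\le m$ yields a unique maximizer: the unconstrained peak if it lies below $m$, and $m$ otherwise. When $\tilde D\equiv0$ (no buyer continues), the message is payoff-irrelevant. In that case I would note that the stage-2 price can be taken unique without loss, or is determined by the zero-measure convention, so the statement holds in outcome terms.

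The main obstacle is Step 1: justifying that, given equilibrium beliefs which may be mixtures, stage-2 demand depends on $p_2$ only through the displayed form, with the continuation cutoff independent of the realized $p_2$. I would rely on the fact that $p_2$ is unobserved until stage 2 and that $n=\infty$ rules out returns. Both facts fix the set of $x$-values reaching stage 2 before $p_2$ is seen.
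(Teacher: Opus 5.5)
Your proposal is correct and follows essentially the same route as the paper: stage-2 demand is (a constant times) \(\int_{A_m}[1-G(\tau-x+p_2)]\,dF(x)\), and its log-concavity in \(p_2\) makes the stage-2 profit single-peaked (your Pr\'ekopa step is the same fact the paper obtains from log-concavity of the truncated \(\tilde X+Y\)), so the regulation cap \(p_2\le m\) yields the unique maximizer \(\min\{\text{peak},m\}\). Two of your additions are slightly more careful than the paper: the belief-dependent cutoff \(r_m\) handles a possibly mixed on-path posterior, whereas the paper treats the stage-2 belief as a point mass; and the paper does not address the zero-demand case you flag.
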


Based on Lemma \ref{nomix1}, conditional on the price revealed at stage 1, the firm does not mix over different prices at stage 2. But this does not rule out mixed strategy equilibrium altogether: it is possible for the firm to mix over multiple pairs of prices. In Lemma \ref{uniquestage}, I show that this is not an equilibrium phenomenon. Intuitively, if the firm mixed over different price pairs, they must give the same profit. Since the profit function is single-peaked, there can be at most 2 price pairs, and they must fall on opposite sides of the peak. But the firm can profitably deviate by setting the price at both stages at the peak.

\begin{lemma}\label{uniquestage}
    In any equilibrium, a unique price is revealed at stage 2.
\end{lemma}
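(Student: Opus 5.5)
The plan is to reduce every on-path price pair to a single-variable ``commitment'' profit function, and then use single-peakedness together with the off-path belief in Assumption \ref{beliefreg} to rule out more than one stage-2 price.

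\emph{Step 1: fix the environment.} Fix an equilibrium in which firm $i$ randomizes over pairs $(p_1,p_2)$ with $p_1\in\varnothing\cup\mathbb{R}_+$ and, by the regulation, $p_2\le p_1$ whenever $p_1\neq\varnothing$. Under monopolistic competition a single firm's choice does not move the buyer's continuation value from sampling a fresh firm. Call this value $s$; it is pinned down by the equilibrium stage-1 index and is taken as given by firm $i$. Because prices are unobservable before search, every buyer arrives at firm $i$ with the same prior.

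By Lemma \ref{nomix1}, each stage-1 message $m$ (including $\varnothing$) on path is followed by a unique stage-2 price $q(m)$. The buyer therefore correctly anticipates $q(m)$ after seeing $m$. He proceeds to stage 2 iff $X+r-q(m)\ge s$, and then buys iff $X+Y-q(m)\ge s$. Both conditions together are equivalent to $W-q(m)\ge s$ with $W=X+\min\{Y,r\}$. Hence the on-path profit from message $m$ is $\pi(q(m))$, where
\[
\pi(q)=q\,[1-H(s+q)].
\]

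\emph{Step 2: single-peakedness.} By Lemma \ref{inchazard}, $H$ has increasing hazard rate, so $\pi$ is strictly single-peaked with a unique maximizer $p^m$.

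\emph{Step 3: at most two stage-2 prices.} Indifference across the support forces $\pi(q)=\pi^*$ for every on-path stage-2 price $q$. By Step 2 there are at most two such values, and if there are two they lie on opposite sides of $p^m$. Suppose, for contradiction, that two distinct stage-2 prices are used. Then at least one of them differs from $p^m$, and $\pi^*<\pi(p^m)$.

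\emph{Step 4: profitable deviation.} Consider the deviation that reveals $p_1=p^m$ at stage 1 and charges $p_2=p^m$ at stage 2.
\begin{itemize}
\item If $p^m$ is an off-path stage-1 message, Assumption \ref{beliefreg} makes the buyer believe the stage-2 price is $p^m$. The regulation makes this belief correct, so the deviation earns $\pi(p^m)>\pi^*$, a contradiction.
\item If $p^m$ is an on-path message, then its continuation price $q(p^m)\le p^m$ is one of the (at most two) equilibrium stage-2 prices. Because $q(p^m)\neq p^m$ here, it lies strictly below the peak.
\end{itemize}

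\emph{Main obstacle: the on-path case in Step 4.} Here I would argue in one of two ways.
\begin{itemize}
\item \emph{Perturbation.} Since the set of on-path stage-1 messages has measure-zero complement issues only up to a null set, choose an off-path message $p_1=p^m+\varepsilon$ and charge $p^m$ at stage 2. The buyer then believes the stage-2 price is $p^m+\varepsilon$, and the firm's profit is
\[
p^m\cdot \Pr\!\left(X+r-p^m-\varepsilon\ge s,\; X+Y-p^m\ge s\right),
\]
which is continuous in $\varepsilon$ and tends to $\pi(p^m)$. A small enough $\varepsilon$ therefore still beats $\pi^*$. This needs some off-path message arbitrarily close to $p^m$. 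If instead all messages in a neighborhood of $p^m$ are on path, their continuation prices take at most two values, and I would run the same continuity argument using a message whose continuation price is the larger of the two equilibrium prices.
\item \emph{Best response below a cap.} Alternatively, show directly that after message $p^m$ with anticipated price $q<p^m$, the firm's optimal stage-2 price under the cap $p^m$ is not $q$. Given the belief $q$, the stage-1 continuation set is fixed, and raising the price toward $p^m$ increases profit by the hazard-rate argument of Lemma \ref{nomix1}. This contradicts $q(p^m)=q$.
\end{itemize}

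Either route excludes a second stage-2 price, so the stage-2 price is unique (and equals $p^m$). Establishing the on-path case rigorously is the step I expect to be hardest.
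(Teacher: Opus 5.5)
Your Steps 1--4 (off-path case) are the paper's argument. The commitment profit $q[1-H(s+q)]$ is single-peaked, so two equal-profit stage-2 prices $q_L<p^m<q_H$ must straddle the peak. Revealing $p^m$ at both stages is then a profitable deviation, because Assumption \ref{beliefreg} makes the buyer expect $p^m$ at stage 2. The paper's proof treats revealing $p^m$ at stage 1 as an off-path deviation and never considers the case you flag, where $p^m$ is already an on-path stage-1 message. That part of your proposal goes beyond the paper.

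For that case, keep only your second route.
\begin{itemize}
\item The perturbation route works only if off-path messages accumulate at $p^m$, and its fallback fails. Under the regulation, any message $m$ near $p^m$ has continuation price at most $m<q_H$, so it must carry $q_L$, never the larger price.
\item The second route is correct, but single-peakedness from Lemma \ref{nomix1} does not by itself say that the peak of the stage-2 profit lies above the anticipated price $q$.
\end{itemize}

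The missing step is as follows. Fix the belief $q$. The derivative of $p_2\int_{s+q-r}^{\infty}[1-G(s+p_2-x)]dF(x)$ at $p_2=q$ equals $1-H(s+q)-q\int_{s+q-r}^{\infty}g(s+q-x)dF(x)$. By the inequality $h(t)\ge\int_{t-r}^{\infty}g(t-x)dF(x)$ from the proof of Proposition \ref{stage2higher}, this is at least $1-H(s+q)-q\,h(s+q)$. That quantity is strictly positive for $q<p^m$, because $(1-H)/h$ is decreasing (Lemma \ref{inchazard}) and equals $p^m$ at $s+p^m$.

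Hence the pair $(p^m,q+\delta)$ with small $\delta>0$ is allowed by the regulation. Since $p^m$ is on path, the buyer's stage-1 behavior is unchanged, and profit strictly rises. So $q(p^m)=p^m$, which forces $\pi^*=\pi(p^m)$ and contradicts your Step 3.
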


Based on Lemma \ref{nomix1} and \ref{uniquestage}, the search for equilibria reduces to the analysis in Section \ref{s42}, and one can easily verify that it is indeed an equilibrium price in a regulated market with the belief updating rule. Thus, one has the following proposition:

\begin{proposition}\label{unique1}
    Under Assumption \ref{logconcave} and \ref{beliefreg}, in a regulated market, the unique equilibrium price under monopolistic competition is exactly \(p^*\).
\end{proposition}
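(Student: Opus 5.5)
Given Lemma \ref{nomix1} and Lemma \ref{uniquestage}, I would reduce the regulated game to a setting with a single stage-2 price and then show that the price equals \(p^*\). The plan has three steps: pin down the stage-1 message, pin down the price level, and verify that the candidate is an equilibrium.

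\emph{Step 1: the stage-1 message.} By Lemma \ref{uniquestage}, every equilibrium has a common stage-2 price, call it \(\bar p\), possibly after firm-specific randomization over stage-1 messages. I would first argue that on path every firm effectively announces \(\bar p\) at stage 1. If a firm instead announces \(\varnothing\) or some \(p_1>\bar p\), the regulation does not bind at \(\bar p\). The firm is then in the position analyzed in Section \ref{s43}: buyers' continuation decisions after seeing \(X\) do not respond to its actual stage-2 price.

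Now let the firm deviate to announcing \(p_1=\bar p-\varepsilon\) and charging it at both stages. Assumption \ref{beliefreg} makes this deviation credible, since buyers believe the stage-2 price equals \(p_1\). Demand is then \([1-H(\sigma-\bar p+p_1)]\cdot(\text{mass of arrivals})\), exactly as in Section \ref{s42}. The firm's profit as a function of the committed price is \(p(1-H(\sigma-\bar p+p))\). By Lemma \ref{inchazard}, this function is single-peaked with peak solving \(p=\frac{1-H(\sigma-\bar p+p)}{h(\sigma-\bar p+p)}\).

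\emph{Step 2: the price level.} In equilibrium, the committed-price deviation must not be profitable. Since \(\bar p\) is itself feasible as a committed price, \(\bar p\) must be the peak evaluated at \(p=\bar p\). This gives \(\bar p=\frac{1-H(\sigma)}{h(\sigma)}=p^*\).

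I also need to rule out the case \(\bar p>p^*\) sustained through uninformative stage-1 messages. In that case, committing to \(p^*\) (or slightly below \(\bar p\)) strictly raises profit. The reason is that the effective stage-2 demand under the Section \ref{s43} structure yields at most the profit from charging \(\bar p\) with demand \(\Phi(\sigma,\cdot)\). Meanwhile, the peak of the committed profit function dominates any profit obtained at a price the firm would also be allowed to commit to.

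The case \(\bar p<p^*\) is excluded by the first-order condition: raising the committed price slightly is profitable. So the unique candidate is \(p^*\).

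\emph{Step 3: verification.} I would check that all firms announcing and charging \(p^*\) at both stages, with beliefs given by Assumption \ref{beliefreg}, is an equilibrium. The condition \(\sigma>\sigma^*\) guarantees active search, as in Proposition \ref{equi1}. The regulation makes any stage-2 increase infeasible. A stage-2 decrease after announcing \(p^*\) lowers revenue without affecting the buyer's continuation decision, because the buyer already evaluated continuation using the announced price, and a cut only matters for buyers who would have bought anyway. Stage-1 deviations reduce to the Section \ref{s42} deviation analysis, where Proposition \ref{equi1} shows \(p^*\) is optimal.

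The main obstacle is the comparison in Step 2. There I must show that a firm in a putative equilibrium with stage-2 price \(\bar p\neq p^*\) strictly gains by committing at stage 1. This means comparing the profit under the \(\Phi\)-demand of Section \ref{s43} with the profit under the \(H\)-demand of Section \ref{s42}. I would attempt it through the inequality \(\Phi(\sigma,\bar p)\le 1-H(\sigma)\) evaluated at the deviation price, together with single-peakedness from Lemma \ref{inchazard}.
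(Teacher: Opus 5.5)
Your uniqueness argument (Steps 1--2) is essentially the paper's Part 1. A committed-price deviation earns $p[1-H(\sigma-\bar p+p)]$ per arrival, so $\bar p$ must sit at the peak of this single-peaked function, which gives $\bar p=\frac{1-H(\sigma)}{h(\sigma)}=p^*$.

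The comparison you call the main obstacle is actually immediate. When the believed and charged stage-2 prices coincide, $\Phi(\sigma,\bar p)=\int_{\sigma-r}^{\infty}[1-G(\sigma-x)]dF(x)=1-H(\sigma)$ exactly; this is the first display in the proof of Proposition \ref{stage2higher}. So on-path profit is $\bar p[1-H(\sigma)]$ whatever stage-1 message is sent. A committed deviation to a slightly lower price is then strictly profitable whenever $\bar p>p^*$. That deviation is necessarily off path, since on-path messages are $\varnothing$ or at least $\bar p$.

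The genuine gap is in Step 3, where you dismiss a stage-2 price cut because it ``only matters for buyers who would have bought anyway.'' That is false.

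\begin{itemize}
\item The cut cannot affect the decision to pay $c_Y$, but it does affect the purchase decision. A buyer who has entered stage 2 (so $X_i\ge\sigma-r$) and faces $p'<p^*$ buys iff $X_i+Y_i\ge\sigma-(p^*-p')$. The cut therefore adds the buyers with $\sigma-(p^*-p')\le X_i+Y_i<\sigma$.
\item The deviation is a real trade-off, with profit $p'\int_{\sigma-r}^{\infty}[1-G(\sigma-p^*+p'-x)]dF(x)$. You must show this is increasing on $[0,p^*]$.
\item The paper does this by noting that the first-order-condition ratio
\[
\frac{\int_{\sigma-r}^{\infty}[1-G(\sigma-p^*+p'-x)]dF(x)}{\int_{\sigma-r}^{\infty}g(\sigma-p^*+p'-x)dF(x)}
\]
is decreasing in $p'$ (by log-concavity, as in the proof of Proposition \ref{equi2}) and equals $p^{\dagger}$ at $p'=p^*$.
\item Since $p^{\dagger}\ge p^*$ by Proposition \ref{stage2higher}, the unconstrained stage-2 optimum is at least $p^*$. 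The regulatory cap therefore binds, and no cut pays.
\end{itemize}

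Your proposal never uses $p^*\le p^{\dagger}$, and the verification cannot go through without it. If the stage-2 inverse hazard at $\sigma$ were below $p^*$, a small cut would be strictly profitable, and all firms charging $p^*$ would not be an equilibrium.
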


In both markets, buyers benefit from the option value; however, regulation leads to a lower market price, thereby increasing consumer surplus, as stated below:

\begin{corollary}\label{effregulation}
Under Assumption \ref{logconcave}, \ref{beliefunreg} and \ref{beliefreg}, the regulation on hidden fees increases consumer surplus.    
\end{corollary}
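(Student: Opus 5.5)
The plan is to compare consumer surplus in the two equilibria identified in Proposition \ref{unique2} and Proposition \ref{unique1}. In both markets buyers search through the same two-stage tree under monopolistic competition. The only difference is the equilibrium price: it is \(p^{\dagger}\) in the unregulated market and \(p^*\) in the regulated one. In the unregulated market the price is learned only at stage 2, whereas in the regulated market it is learned, and binding, at stage 1. So I would write the buyer's equilibrium surplus in each case using the discrete choice formulation implied by Lemma \ref{upperbound}--\ref{maxbound}. With \(n=\infty\), the buyer's expected payoff equals the expected value of the capped value of the first child node, which is the limit of \(\max_j \kappa_j\). Because a new firm can always be sampled, this payoff equals the index of an unvisited firm net of the price the buyer expects.

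Concretely, in the regulated market the index of a fresh firm is \(\sigma - p^*\). In the unregulated market the buyer expects \(p^{\dagger}\) at stage 2 with certainty, by Lemma \ref{uniquestage2}. The indices are therefore \(\kappa=\min\{\sigma-p^{\dagger},X+r-p^{\dagger},X+Y-p^{\dagger}\}\), and the index of a fresh firm is \(\sigma-p^{\dagger}\). The key point is that \(\sigma\) and \(r\) do not depend on the price: a fully anticipated price shifts every prize by the same constant, because the two-stage index equations in Definition \ref{cap} are translation-equivariant. Hence consumer surplus is \(\sigma-p^*\) in the regulated market and \(\sigma-p^{\dagger}\) in the unregulated one, provided active search occurs, which holds under the cost conditions of Propositions \ref{equi1}--\ref{equi2}. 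If the outside option binds, I would instead use \(\max\{\sigma-p,0\}\).

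The conclusion then follows from Proposition \ref{stage2higher}: \(p^*\le p^{\dagger}\) implies \(\sigma-p^*\ge \sigma-p^{\dagger}\). Since \(x\mapsto \max\{x,0\}\) is monotone, the same inequality holds for the version with the outside option.

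The main obstacle is justifying the surplus formula \(\mathbb{E}[\lim \max_j\kappa_j]=\sigma-p\) with infinitely many firms. Lemma \ref{nonexposure} is stated for finite trees, so I would argue as follows. Take a finite market with \(n\) firms, all pricing \(p\). Its surplus is \(\mathbb{E}[\max\{\max_{j\le n}\kappa_j,0\}]\). Each \(\kappa_j\) equals \(\sigma-p\) with probability \(1-H(\sigma)>0\), because the \(\kappa_j\) are i.i.d. and capped above by \(\sigma-p\). Hence the maximum converges to \(\sigma-p\) almost surely. Bounded convergence then gives surplus \(\max\{\sigma-p,0\}\) in the limit, which the paper's footnote on infinite trees also licenses.
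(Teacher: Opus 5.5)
Your proposal is correct and follows essentially the paper's own (largely implicit) argument: consumer surplus equals $\sigma-p$ in each regime, as in the proof of Corollary~\ref{highercs}; the equilibrium prices are $p^{\dagger}$ without regulation and $p^*$ with it, by Propositions~\ref{unique2} and~\ref{unique1}; and Proposition~\ref{stage2higher} gives $p^*\le p^{\dagger}$. Your almost-sure/bounded-convergence limit argument, together with your explicit check that $\sigma$ and $r$ do not depend on the anticipated price, is simply a more careful version of the paper's sandwich bound in that proof.
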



\section{Discussion}\label{s5}


\subsection{Small markets}\label{s51}
As noted earlier, Theorem \ref{lowerprice} relies on monopolistic competition (i.e., \(n=\infty\)). Below, I provide a duopoly example in which the option to return to a previously visited firm overturns the conclusion of Theorem \ref{lowerprice}.

Let \(n=2\), \(L=1\) and \(F=G=U[0,1]\). Using the FOC in Section \ref{s42}, a candidate equilibrium price with gradual resolution of uncertainty is determined by the following equation:
\begin{align*}
    p^*=\frac{\frac{1}{2}[1-H(\sigma)^2]+\int_{p^*}^{\sigma}H(w)h(w)dw}{\frac{1}{2}h(\sigma)[1+H(\sigma)]+\int_{p^*}^{\sigma}H(w)h'(w)dw}.
\end{align*}
And a candidate equilibrium price in the benchmark satisfies
\begin{align*}
    \hat{p}=\frac{\frac{1}{2}[1-K(\hat{\sigma})^2]+\int_{\hat{p}}^{\hat{\sigma}}K(z)k(z)dz}{\frac{1}{2}k(\hat{\sigma})[1+K(\hat{\sigma})]+\int_{\hat{p}}^{\hat{\sigma}}K(z)k'(z)dz}.
\end{align*}
It can be calculated that
\begin{align*}
    H(w)=
\begin{cases}
0, & w<0,\\
\dfrac{w^2}{2}, & 0\le w\le r,\\
w-r+\dfrac{r^2}{2}, & r< w \le 1,\\
-\frac{1}{2}w^2+2w+\frac{1}{2}r^2-r-\frac{1}{2}, & 1< w \le 1+r,\\
1, & w>1+r.
\end{cases}
&
\quad
h(w) =
\begin{cases}
0, & w < 0, \\
w, & 0 \le w \le r, \\
1, & r < w \le 1, \\
2 - w, & 1 < w \le 1+r, \\
0, & w > 1+r,
\end{cases}
\end{align*}
and
\begin{align*}
    K(z)=
\begin{cases}
0, & z < 0, \\
\dfrac{z^2}{2}, & 0 \le z \le 1, \\
1 - \dfrac{(2-z)^2}{2}, & 1 < z \le 2, \\
1, & z > 2.
\end{cases}
&
\quad
k(z) =
\begin{cases}
z, & 0 \le z \le 1, \\
2-z, & 1 < z \le 2, \\
0, & \text{otherwise}.
\end{cases}
\end{align*}

Let \(c_X=0.05\) and \(c_Y=0.1\), we get \(r=1-\frac{1}{\sqrt{5}}\approx 0.5528\), \(\sigma \approx 1.1393\) and \(\hat{\sigma}\approx 1.0345\). Plugging these values back in the FOCs and solve for the fixed points, I get \(p^*\approx 0.6989\) and \(\hat{p}\approx 0.5671\). Clearly, \(p^* > \hat{p}\), which contradicts the result in Theorem \ref{lowerprice}.

\subsection{Equilibrium refinement}\label{s52}

In Section \ref{s45}, I assume that, in an unregulated market, upon a deviation at stage 1, the consumer assigns probability one to an infinite final price. While this assumption simplifies the analysis of first stage deviation, it is admittedly extreme. In particular, it may allow some equilibria to survive only because deviations are punished excessively under this belief. To address this concern, I replace it with a weaker assumption that imposes a more plausible restriction on the consumer’s beliefs following a deviation.
 It is stated as follows:
\begin{assumption}\label{weaker}
Given a putative equilibrium with a first stage index \(\sigma-\tilde{p}\), upon a deviation at stage 1, the consumer believes the price charged at stage 2 is weakly higher than \(\tilde{p}\).
\end{assumption}
This assumption says that, upon a deviation at stage 1, the consumer becomes skeptical and expects the final price to be weakly higher than before. It nests Assumption \ref{beliefunreg} as a special case. I illustrate the robustness of previous results by showing that they remain true under this weaker assumption:

\begin{proposition}\label{unique2weaker}
Under Assumption \ref{logconcave} and \ref{weaker}, in an unregulated market, the unique equilibrium price under monopolistic competition is exactly \(p^{\dagger}\).
\end{proposition}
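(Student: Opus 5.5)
The plan is to follow the structure used for Proposition \ref{unique2}, checking that each step goes through when Assumption \ref{beliefunreg} is replaced by Assumption \ref{weaker}. The argument has three parts.

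\textbf{Step 1: stage-2 uniqueness.} Lemma \ref{nomix2} and Lemma \ref{uniquestage2} should carry over verbatim. Their proofs use only two facts: at every stage-1 message, on or off path, the firm best responds at stage 2 to the buyer's belief, and Assumption \ref{logconcave} makes that best response unique. Neither fact depends on how off-path beliefs are formed. So every equilibrium has a single stage-2 price $\tilde p$, and the buyer's equilibrium first-stage index is $\sigma-\tilde p$. The stage-2 price is not observed before the buyer sinks $c_X$, so the no-deviation condition at stage 2 is exactly the first-order condition of Section \ref{s43}. By the variation-diminishing argument behind Lemma \ref{uniquestage2}, this forces $\tilde p=p^{\dagger}$.

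\textbf{Step 2: stage-1 deviations are unprofitable.} I must show that a firm cannot gain by announcing some $p_1\neq p^{\dagger}$ at stage 1, or by changing whether it announces at all. Under Assumption \ref{weaker}, a buyer who sees such a deviation believes the stage-2 price is some $q\ge p^{\dagger}$. His continuation index at that firm is then $X_i+r-q\le X_i+r-p^{\dagger}$, so weakly fewer buyers proceed to stage 2 than on path. Once a buyer is at stage 2, the firm's best response to being believed to charge $q$ yields at most the stage-2 profit it would earn when believed to charge $p^{\dagger}$. The reason is monotonicity: with $n=\infty$, the demand a firm faces at stage 2 given belief $q$ is $\Pr(X+r\ge \sigma-p^{\dagger}+q,\ X+Y\ge \sigma-p^{\dagger}+p)$. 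This is decreasing in $q$ for every actual price $p$, so the maximized profit $\max_p p\,\Phi_q(\sigma,p)$ is weakly decreasing in $q$. Hence no stage-1 deviation beats the equilibrium payoff, which is itself the maximum attained at $q=p^{\dagger}$.

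\textbf{Step 3: existence.} Putting the steps together, the equilibrium in which every firm reveals nothing at stage 1 (or cheap-talks $p^{\dagger}$) and charges $p^{\dagger}$ at stage 2 survives. Any other equilibrium must have stage-2 price $p^{\dagger}$ by Step 1.

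\textbf{Main obstacle.} The delicate point is Step 1 when some stage-1 messages are on path and off-path beliefs are only bounded below. Lemma \ref{uniquestage2} rules out randomization over stage-2 prices across on-path messages, and I expect it to be robust, because each on-path message has its own Bayes-consistent belief and the uniqueness argument applies message by message. What still needs checking is whether a weaker off-path punishment could allow a different on-path price $\tilde p\neq p^{\dagger}$ to be sustained. It cannot, because the stage-2 deviation after the on-path message is unobservable at stage 1, so the first-order condition pins down $\tilde p=p^{\dagger}$ regardless of off-path beliefs. I would verify this carefully first.
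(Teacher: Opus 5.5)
Your proposal is correct and follows essentially the same route as the paper. The paper also reuses Lemmas \ref{nomix2}, \ref{uniquestage2} and the first-order condition of Section \ref{s43}, none of which depend on off-path beliefs, and then checks stage-1 deviations exactly as in your Step 2: a belief \(\overline{p}\ge p^{\dagger}\) raises the lower limit of \(\int_{\sigma-p^{\dagger}+\overline{p}-r}^{\infty}[1-G(\sigma-p^{\dagger}+\tilde{p}-x)]\,dF(x)\) above \(\sigma-r\), so the deviation profit is pointwise, and hence in its maximum, no larger than the equilibrium profit.
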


This proposition shows that the equilibrium characterization in unregulated markets, and thus the effect of drip pricing regulation does not depend on the extreme off-path belief. As long as a stage 1 deviation makes the consumer weakly more pessimistic about the final price, the equilibrium price and welfare conclusions remain unchanged. In this sense, the earlier results are robust to a broad class of reasonable belief restrictions. The argument larges follows earlier analysis. By ruling out profitable stage 2 deviations, there is a unique candidate equilibrium in this case. It then suffices to check that no profitable deviation exists at stage 1 from this candidate. Under the weaker belief restriction, a stage 1 deviation makes the buyer weakly more pessimistic about the final price, which shifts the deviating firm’s demand curve downward. As a result, even when the firm best responds to the buyer’s belief, its deviation profit is lower.

\begin{remark}
    Note that Assumption \ref{beliefreg} cannot be relaxed in the same way; requiring only a weakly more pessimistic belief is too weak to deter profitable deviation at stage 1. The reason is that charging \(p^*\) is a best response only under the regulation; absent the price restriction, the firm strictly prefers a higher price. Thus, by deviating to a higher stage 1 price, the firm can make a higher profit as long as the consumer does not become too skeptical, because the deviation  allows the firm to charge a higher final price.
\end{remark}

\section{Conclusion}\label{s6}
This paper studies a search problem represented as a tree that features both gradual resolution of uncertainty and hierarchical similarities among prizes. The problem admits significant tractability in that an index solution achieves the optimum. It is also practically relevant, capturing decision problems that arise in settings such as R\&D and e-commerce.

This paper opens several avenues for future investigation, both theoretical and empirical. One could examine the optimal policy under looser assumptions, such as nonobligatory inspection, relaxed ordering constraints, or random tree structures. The framework may enable the study of environments that are difficult to accommodate with traditional search models. I leave these extensions to future research.

\newpage
\bibliography{references}

\newpage
\appendix
\setcounter{secnumdepth}{0}
\setcounter{equation}{0}
\renewcommand{\theequation}{A\arabic{equation}}
\section{Appendix}\label{sa}
\noindent\textbf{Proof of Lemma \ref{upperbound}}: I first state a useful lemma and present its proof here. Similar result is proved in Lemma 7 in \cite{bowers2024matching}.
\begin{lemma}\label{condind}
For any \(i_1,\cdots,i_k\),  \(\mathbb{E}\Biggl[\mathbb{I}_{i_1,\cdots,i_{k}}c_{i_1,\cdots,i_k}\Biggr]=\mathbb{E}\left[\mathbb{I}_{i_1,\cdots,i_k}\Biggl(\max_{1 \leq l \leq N_{i_1,\cdots,i_k}} \kappa_{i_1,\cdots,i_k,l}-\sigma_{i_1,\cdots,i_k}\Biggr)^{+}\right]\).   
\end{lemma}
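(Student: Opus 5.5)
The plan is to condition on the information available at the moment the agent decides whether to inspect \(S_{i_1,\cdots,i_k}\), and then use the defining equation of the index in Definition \ref{cap}. The key point is that \(\mathbb{I}_{i_1,\cdots,i_k}\) is determined before the realizations in the subtree \(\mathcal{D}_{i_1,\cdots,i_k}\) are observed. By contrast, \(\max_l \kappa_{i_1,\cdots,i_k,l}\) is a function only of the ancestors' realizations \(X_{i_1},\cdots,X_{i_1,\cdots,i_{k-1}}\) and of the random variables in \(\mathcal{D}_{i_1,\cdots,i_k}\). By the recursive construction, every \(\sigma\) and \(\kappa\) for a node in the subtree depends only on its own ancestors and its own subtree. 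The two are therefore conditionally independent given the ancestors.

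\textbf{Step 1: measurability of \(\mathbb{I}_{i_1,\cdots,i_k}\).} Let \(\mathcal{G}\) denote the sigma-algebra generated by \(X_{i_1},\cdots,X_{i_1,\cdots,i_{k-1}}\), together with all random variables attached to nodes outside \(\mathcal{D}_{i_1,\cdots,i_k}\) and any independent randomization used by the policy (e.g.\ tie-breaking). Because of the precedence constraint, the decision to inspect \(S_{i_1,\cdots,i_k}\) is made at a history in which no node of \(\mathcal{D}_{i_1,\cdots,i_k}\) has been observed. Formally, I would write \(\mathbb{I}_{i_1,\cdots,i_k}=\sum_t \mathbb{1}\{\text{at step } t \text{ the policy inspects } S_{i_1,\cdots,i_k}\}\). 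Each summand is measurable with respect to the history up to step \(t\), and that history contains only variables outside the subtree. I would run an induction over steps to confirm that the whole path of decisions before the first entry into the subtree is \(\mathcal{G}\)-measurable. Hence \(\mathbb{I}_{i_1,\cdots,i_k}\) is \(\mathcal{G}\)-measurable.

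\textbf{Step 2: conditional independence.} Next I show that, conditional on \(X_{i_1},\cdots,X_{i_1,\cdots,i_{k-1}}\), the vector of subtree variables is independent of the remaining variables in \(\mathcal{G}\). This follows from Assumption \ref{independence}, applied node by node down the subtree and chaining conditional independences. Combined with the previous paragraph, this gives
\[
\mathbb{E}\Bigl[\Bigl(\max_{l}\kappa_{i_1,\cdots,i_k,l}-\sigma_{i_1,\cdots,i_k}\Bigr)^{+}\Big|\mathcal{G}\Bigr]=\mathbb{E}\Bigl[\Bigl(\max_{l}\kappa_{i_1,\cdots,i_k,l}-\sigma_{i_1,\cdots,i_k}\Bigr)^{+}\Big|X_{i_1},\cdots,X_{i_1,\cdots,i_{k-1}}\Bigr]=c_{i_1,\cdots,i_k},
\]
where the last equality is the definition of \(\sigma_{i_1,\cdots,i_k}\). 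Note that \(\sigma_{i_1,\cdots,i_k}\) is itself a function of the ancestors only.

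\textbf{Step 3: tower property.} By the tower property, \(\mathbb{E}[\mathbb{I}(\max_l\kappa-\sigma)^+]=\mathbb{E}[\mathbb{I}\,\mathbb{E}[(\max_l\kappa-\sigma)^+|\mathcal{G}]]=\mathbb{E}[\mathbb{I}\,c_{i_1,\cdots,i_k}]\), which is the claim. For terminal nodes, the convention \(N=1\) with \(\kappa_{i_1,\cdots,i_n,1}=v^{i_1,\cdots,i_n}\) makes the identical argument apply.

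The main obstacle is Step 2: deriving the joint conditional independence of the entire subtree from the rest of the tree out of the one-node statement in Assumption \ref{independence}. I would handle it by induction on depth within \(\mathcal{D}_{i_1,\cdots,i_k}\), factoring the joint conditional law as a product of each node's conditional law given its ancestors.
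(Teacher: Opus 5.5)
Your proof is correct and follows the same route as the paper. Like the paper, you condition on the ancestors' realizations, replace $c_{i_1,\cdots,i_k}$ by the conditional expectation that defines $\sigma_{i_1,\cdots,i_k}$, and combine the conditional independence from Assumption \ref{independence} with the tower property; the differences are that you condition on the larger $\sigma$-algebra $\mathcal{G}$ and spell out two points the paper leaves implicit, namely that $\mathbb{I}_{i_1,\cdots,i_k}$ is determined by information outside $\mathcal{D}_{i_1,\cdots,i_k}$ and that the node-by-node assumption extends to joint conditional independence of the whole subtree.
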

\noindent\textbf{Proof of Lemma \ref{condind}}:
\begin{align*}   &\mathbb{E}\Biggl[\mathbb{I}_{i_1,\cdots,i_{k}}c_{i_1,\cdots,i_k}\Biggr]\\
    =&\mathbb{E}\Biggl[\mathbb{E}\left[\mathbb{I}_{i_1,\cdots,i_{k}}\Biggl|X_{i_1},\cdots,X_{i_1,\cdots,i_{k-1}}\right]c_{i_1,\cdots,i_{k}}\Biggr]\\
    =&\mathbb{E}\Biggl[\mathbb{E}\left[\mathbb{I}_{i_1,\cdots,i_k}\Biggl|X_{i_1},\cdots,X_{i_1,\cdots,i_{k-1}}\right]\mathbb{E}\left[\Biggl(\max_{1 \leq l \leq N_{i_1,\cdots,i_k}} \kappa_{i_1,\cdots,i_k,l}-\sigma_{i_1,\cdots,i_k}\Biggr)^{+}\Biggl|X_{i_1},\cdots,X_{i_1,\cdots,i_{k-1}}\right]\Biggr]\\
    =&\mathbb{E}\Biggl[\mathbb{E}\left[\mathbb{I}_{i_1,\cdots,i_k}\Biggl(\max_{1 \leq l \leq N_{i_1,\cdots,i_k}} \kappa_{i_1,\cdots,i_k,l}-\sigma_{i_1,\cdots,i_k}\Biggr)^{+}\Biggl|X_{i_1},\cdots,X_{i_1,\cdots,i_{k-1}}\right]\Biggr]\\
    =&\mathbb{E}\left[\mathbb{I}_{i_1,\cdots,i_k}\Biggl(\max_{1 \leq l \leq N_{i_1,\cdots,i_k}} \kappa_{i_1,\cdots,i_k,l}-\sigma_{i_1,\cdots,i_k}\Biggr)^{+}\right].
\end{align*}
The first and last equality are due to the law of total expectation. The second equality comes from the definition of \(\kappa\) and \(\sigma\). The third equality is due to Assumption \ref{independence}. \qed

\bigskip

The proof of the lemma follows from induction. Formally, I prove that the agent's expected payoff from the subtree rooted at any node \(S_{i_1,\cdots,i_k} \notin \mathcal{T}\cup\{R\}\) is bounded by
\begin{align*}
    \mathbb{E}\left[\left(\sum_{S_{j_1,\cdots,j_n}\in \mathcal{T} \cap \mathcal{D}_{i_1,\cdots,i_k}}\mathbb{A}_{j_1,\cdots,j_n}\right)\kappa_{i_1,\cdots,i_k}\right].
\end{align*}

\noindent\textbf{Step 1:} Consider any \(S_{i_1,\cdots,i_n} \in \mathcal{T}\), then the expected payoff from this subtree can be bounded as follows:
\begin{align*}
    &\mathbb{E}\Biggl[\mathbb{A}_{i_1,\cdots,i_{n}}v^{i_1,\cdots,i_{n}}-\mathbb{I}_{i_1,\cdots,i_{n}}c_{i_1,\cdots,i_{n}}\Biggr]\\
    =&\mathbb{E}\Biggl[\mathbb{A}_{i_1,\cdots,i_{n}}v^{i_1,\cdots,i_{n}}-\mathbb{I}_{i_1,\cdots,i_{n}}\Biggl(\kappa_{i_1,\cdots,i_{n},1}-\sigma_{i_1,\cdots,i_{n}}\Biggr)^{+}\Biggr]\\
    \leq&\mathbb{E}\Biggl[\mathbb{A}_{i_1,\cdots,i_{n}}v^{i_1,\cdots,i_{n}}-\mathbb{A}_{i_1,\cdots,i_{n}}\Biggl(\kappa_{i_1,\cdots,i_{n},1}-\sigma_{i_1,\cdots,i_{n}}\Biggr)^{+}\Biggr]\\
    =&\mathbb{E}\Biggl[\mathbb{A}_{i_1,\cdots,i_{n}}\min \Biggl\{\kappa_{i_1,\cdots,i_{n},1},\sigma_{i_1,\cdots,i_{n}}\Biggr\}\Biggr]\\
    =&\mathbb{E}\Biggl[\mathbb{A}_{i_1,\cdots,i_{n}}\kappa_{i_1,\cdots,i_{n}}\Biggr],
\end{align*}
where the first equality comes from Lemma \ref{condind}, the inequality comes from \(\mathbb{A}_{i_1,\cdots,i_n} \leq \mathbb{I}_{i_1,\cdots,i_n}\) since inspection is obligatory. The remaining comes from definition of \(\kappa\) and \(\sigma\).

\noindent\textbf{Step 2:} Suppose the bound is valid for the subtree rooted at \(S_{i_1,\cdots,i_k,l}\) with \(l=1,\cdots,N_{i_1,\cdots,i_k}\). I show that the bound is valid for \(S_{i_1,\cdots,i_k}\). Note that, the expected payoff from the subtree rooted at \(S_{i_1,\cdots,i_k}\) is the sum of the expected payoff from the subtrees rooted at \(S_{i_1,\cdots,i_k,l}\) minus the search cost spent on \(e_{i_1,\cdots,i_k}\). Therefore, the expected payoff can be bounded as follows:
\begin{align*}
    &\mathbb{E}\Biggl[\sum_{l=1}^{N_{i_1,\cdots,i_k}}\left(\sum_{S_{j_1,\cdots,j_n} \in \mathcal{T}\cap\mathcal{D}_{i_1,\cdots,i_k,l}}\mathbb{A}_{j_1,\cdots,j_n}\right)\kappa_{i_1,\cdots,i_k,l} \Biggr]-\mathbb{E}\Biggl[\mathbb{I}_{i_1,\cdots,i_k}c_{i_1,\cdots,i_k}\Biggr]\\
    =&\mathbb{E}\Biggl[\sum_{l=1}^{N_{i_1,\cdots,i_k}}\left(\sum_{S_{j_1,\cdots,j_n} \in \mathcal{T}\cap\mathcal{D}_{i_1,\cdots,i_k,l}}\mathbb{A}_{j_1,\cdots,j_n}\right)\kappa_{i_1,\cdots,i_k,l}-\mathbb{I}_{i_1,\cdots,i_k}\left(\max_{1 \leq l \leq N_{i_1,\cdots,i_k}} \kappa_{i_1,\cdots,i_k,l}-\sigma_{i_1,\cdots,i_k}\right)^{+}\Biggr]\\
    \leq&\mathbb{E}\Biggl[\left(\sum_{S_{j_1,\cdots,j_n} \in \mathcal{T} \cap \mathcal{D}_{i_1,\cdots,i_k}} \mathbb{A}_{j_1,\cdots,j_n}\right)\max_{1 \leq l \leq N_{i_1,\cdots,i_k}}\kappa_{i_1,\cdots,i_k,l} -\mathbb{I}_{i_1,\cdots,i_k}\left(\max_{1 \leq l \leq N_{i_1,\cdots,i_k}} \kappa_{i_1,\cdots,i_k,l}-\sigma_{i_1,\cdots,i_k}\right)^{+}\Biggr]\\
    \leq&\mathbb{E}\Biggl[\left(\sum_{S_{j_1,\cdots,j_n} \in \mathcal{T} \cap \mathcal{D}_{i_1,\cdots,i_k}} \mathbb{A}_{j_1,\cdots,j_n}\right)\left[\max_{1 \leq l \leq N_{i_1,\cdots,i_k}}\kappa_{i_1,\cdots,i_k,l}-\left(\max_{1 \leq l \leq N_{i_1,\cdots,i_k}} \kappa_{i_1,\cdots,i_k,l}-\sigma_{i_1,\cdots,i_k}\right)^{+}\right]\Biggr]\\
    =&\mathbb{E}\Biggl[\left(\sum_{S_{j_1,\cdots,j_n} \in \mathcal{T} \cap \mathcal{D}_{i_1,\cdots,i_k}} \mathbb{A}_{j_1,\cdots,j_n}\right) \min \Biggl\{\max_{1 \leq l \leq N_{i_1,\cdots,i_k}}\kappa_{i_1,\cdots,i_k,l},\sigma_{i_1,\cdots,i_k}\Biggr\}\Biggr]\\
    =&\mathbb{E}\Biggl[\left(\sum_{S_{j_1,\cdots,j_n} \in \mathcal{T} \cap \mathcal{D}_{i_1,\cdots,i_k}} \mathbb{A}_{j_1,\cdots,j_n}\right) \kappa_{i_1,\cdots,i_k}\Biggr],
\end{align*}
where the second inequality comes from obligatory inspection and unit demand.

\noindent\textbf{Step 3: }By induction, the bound is also valid for any subtree rooted at the child node of the root. The expected payoff for the entire tree is the sum of the expected payoff from these trees, so the bound on the total expected payoff is \(\mathbb{E}\left[\sum_{l=1}^{N_0}\left(\sum_{S_{i_1,\cdots,i_n} \in \mathcal{T} \cap \mathcal{D}_l}\mathbb{A}_{i_1,\cdots,i_n}\right)\kappa_{l}\right]\). This finishes the proof.\qed

\bigskip

\noindent\textbf{Proof of Lemma \ref{nonexposure}}: In order to prove the bound is tight, I show all the inequalities in the proof of Lemma \ref{upperbound} are indeed equalities. 

\noindent\textbf{Part 1: }For \(S_{i_1,\cdots,i_n} \in \mathcal{T}\), under the index policy, if \(\mathbb{I}_{i_1,\cdots,i_n}=1\) but \(\mathbb{A}_{i_1,\cdots,i_n}=0\), then it must be that \(\kappa_{i_1,\cdots,i_n,1} \leq \sigma_{i_1,\cdots,i_n}\). Suppose to the contrary \(\kappa_{i_1,\cdots,i_n,1} > \sigma_{i_1,\cdots,i_n}\). Since \(\mathbb{I}_{i_1,\cdots,i_n}=1\), it means \(\sigma_{i_1,\cdots,i_n}\) is weakly higher than any other indexes. As \(v^{i_1,\cdots,i_n}(x_{i_1},\cdots,x_{i_1,\cdots,i_n})=\kappa_{i_1,\cdots,i_n,1} > \sigma_{i_1,\cdots,i_n}\), the final prize of \(S_{i_1,\cdots,i_n}\) is strictly higher than any other indexes, then according to the index policy, the agent should stop search and take the prize. This contradicts \(\mathbb{A}_{i_1,\cdots,i_n}=0\). For \(S_{i_1,\cdots,i_k} \notin \mathcal{T}\), under the index policy, if \(\mathbb{I}_{i_1,\cdots,i_k}=1\) but \(\sum_{S_{j_1,\cdots,j_n} \in \mathcal{T} \cap \mathcal{D}_{i_1,\cdots,i_k}}\mathbb{A}_{j_1,\cdots,j_n}=0\), then it must be \(\max_{1\leq l \leq N_{i_1,\cdots,i_k}} \kappa_{i_1,\cdots,i_k,l} \leq \sigma_{i_1,\cdots,i_k}\). Suppose to the contrary \(\max_{1\leq l \leq N_{i_1,\cdots,i_k}} \kappa_{i_1,\cdots,i_k,l} > \sigma_{i_1,\cdots,i_k}\). By definition, \(\kappa_{i-1,\cdots,i_k,l} \geq \sigma_{i_1,\cdots,i_k,l}\), therefore under the index policy the agent will inspect at least one of its child nodes. By induction, there exists at least one terminal node such that the indexes of all the edges connecting it to \(S_{i_1,\cdots,i_k}\) and the final prize are  strictly greater than \(\sigma_{i_1,\cdots,i_k}\). Under the index policy, the agent will claim a prize from \(\mathcal{T}\cap\mathcal{D}_{i_1,\cdots,i_k}\), which contradicts \(\sum_{S_{j_1,\cdots,j_n} \in \mathcal{T} \cap \mathcal{D}_{i_1,\cdots,i_k}}\mathbb{A}_{j_1,\cdots,j_n}=0\).

\noindent\textbf{Part 2: }I show that under the index policy, the following holds:
\begin{align*}
    &\mathbb{E}\Biggl[\sum_{l=1}^{N_{i_1,\cdots,i_k}}\left(\sum_{S_{j_1,\cdots,j_n} \in \mathcal{T}\cap\mathcal{D}_{i_1,\cdots,i_k,l}}\mathbb{A}_{j_1,\cdots,j_n}\right)\kappa_{i_1,\cdots,i_k,l}\Biggr]\\
    =&\mathbb{E}\Biggl[\left(\sum_{l=1}^{N_{i_1,\cdots,i_k}}\sum_{S_{j_1,\cdots,j_n} \in \mathcal{T}\cap\mathcal{D}_{i_1,\cdots,i_k,l}}\mathbb{A}_{j_1,\cdots,j_n}\right) \max_{1 \leq l \leq N_{i_1,\cdots,i_k}} \kappa_{i_1,\cdots,i_k,l}\Biggr].
\end{align*}
To proceed, I first state and prove the following lemma:
\begin{lemma}\label{maxminpath}
    The following relation holds true:
    \begin{align*}
        \min \Biggl\{\sigma_{i_1},\sigma_{i_1,i_2},\cdots,\sigma_{i_1,\cdots,i_{k-1}},\kappa_{i_1,\cdots,i_k}\Biggr\}=\max_{S_{j_1,\cdots,j_n} \in \mathcal{T}\cap\mathcal{D}_{i_1,\cdots,i_k}} \min \left\{\sigma_{j_1},\sigma_{j_2},\cdots,\sigma_{j_1,\cdots,j_n},\sigma_{j_1,\cdots,j_n,1}\right\}.
    \end{align*}
\end{lemma}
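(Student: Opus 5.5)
We prove the identity in Lemma \ref{maxminpath} by backward induction on the height of the subtree rooted at \(S_{i_1,\cdots,i_k}\). Write \(m_{k-1}=\min\{\sigma_{i_1},\cdots,\sigma_{i_1,\cdots,i_{k-1}}\}\), with the convention \(m_0=+\infty\). The statement then reads \(\min\{m_{k-1},\kappa_{i_1,\cdots,i_k}\}=\max_{\text{terminal } S_j \in \mathcal{D}_{i_1,\cdots,i_k}}\min\{\text{indexes along the path to }S_j\}\).

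The only algebraic fact needed is that \(\min\) distributes over \(\max\) on the real line:
\begin{align*}
\min\{a,\max_l b_l\}=\max_l \min\{a,b_l\}.
\end{align*}
This holds because \(\max\) and \(\min\) are the lattice operations of a totally ordered set.

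\textbf{Base case.} Let \(S_{i_1,\cdots,i_n}\in\mathcal{T}\) be terminal. Its only descendant terminal node is itself, and by Definition \ref{cap}
\begin{align*}
\kappa_{i_1,\cdots,i_n}=\min\{\sigma_{i_1,\cdots,i_n},\kappa_{i_1,\cdots,i_n,1}\},
\end{align*}
where \(\kappa_{i_1,\cdots,i_n,1}=\sigma_{i_1,\cdots,i_n,1}=v^{i_1,\cdots,i_n}\). Therefore
\begin{align*}
\min\{m_{n-1},\kappa_{i_1,\cdots,i_n}\}=\min\{\sigma_{i_1},\cdots,\sigma_{i_1,\cdots,i_n},\sigma_{i_1,\cdots,i_n,1}\},
\end{align*}
which is exactly the right-hand side, since the maximum runs over a single element.

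\textbf{Inductive step.} Let \(S_{i_1,\cdots,i_k}\) be non-terminal, and assume the identity holds for each child \(S_{i_1,\cdots,i_k,l}\), whose prefix minimum is \(m_k=\min\{m_{k-1},\sigma_{i_1,\cdots,i_k}\}\). Using \(\kappa_{i_1,\cdots,i_k}=\min\{\sigma_{i_1,\cdots,i_k},\max_l\kappa_{i_1,\cdots,i_k,l}\}\) and then distributivity,
\begin{align*}
\min\{m_{k-1},\kappa_{i_1,\cdots,i_k}\}=\min\{m_k,\max_l \kappa_{i_1,\cdots,i_k,l}\}=\max_l \min\{m_k,\kappa_{i_1,\cdots,i_k,l}\}.
\end{align*}
By the induction hypothesis, each term \(\min\{m_k,\kappa_{i_1,\cdots,i_k,l}\}\) equals the maximum, over terminal nodes in \(\mathcal{D}_{i_1,\cdots,i_k,l}\), of the path minima. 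The terminal descendants of \(S_{i_1,\cdots,i_k}\) are the disjoint union of the terminal descendants of its children, so a maximum of maxima is the maximum over this union. This gives the claim.

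The only point needing care is bookkeeping rather than a real obstacle. Every terminal node has exactly one fictitious child with \(\sigma_{\cdot,1}=\kappa_{\cdot,1}=v\), as fixed by the convention \(N=1\) in Definition \ref{cap}. As a result, the right-hand side's list of indexes runs through \(\sigma_{j_1,\cdots,j_n,1}\), and these indexes are all realized functions of the \(X\)'s along the path. The identity is therefore pointwise in the realizations, and no probabilistic argument is needed.
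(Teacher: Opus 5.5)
Your proof is correct and follows essentially the paper's own argument: induction upward from the terminal nodes, with the base case read off from the definition of $\kappa$ (using the fictitious child $\kappa_{\cdot,1}=\sigma_{\cdot,1}=v$), and the inductive step splitting the terminal descendants across the children and then exchanging $\min$ and $\max$. The only difference is cosmetic: the paper does that exchange by writing $\min\{a,b\}=a-(a-b)^{+}$ and using that $(a-b)^{+}$ is nonincreasing in $b$, whereas you cite distributivity of $\min$ over $\max$ directly, and your convention $m_0=+\infty$ handles the case $k=1$ explicitly where the paper leaves it implicit.
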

\noindent\textbf{Proof of Lemma \ref{maxminpath}}:
I prove by induction.

\noindent\textbf{Step 1: }When \(S_{i_1,\cdots,i_k} \in \mathcal{T}\), the equality holds true by definition of \(\kappa\).

\noindent\textbf{Step 2: }Consider any \(S_{i_1,\cdots,i_{k},l}\) with \(l=1,\cdots,N_{i_1,\cdots,i_k}\). Suppose the equality in Lemma \ref{maxminpath} holds true for any such \(l\). I show below that the equality also holds true for \(S_{i_1,\cdots,i_k}\). 
\begin{align*}
    &\max_{S_{j_1,\cdots,j_n} \in \mathcal{T}\cap\mathcal{D}_{i_1,\cdots,i_k}} \min \left\{\sigma_{j_1},\sigma_{j_2},\cdots,\sigma_{j_1,\cdots,j_n},\sigma_{j_1,\cdots,j_n,1}\right\}\\
    =& \max_{1 \leq l \leq N_{i_1,\cdots,i_k}} \max_{S_{j_1,\cdots,j_n \in \mathcal{T}\cap \mathcal{D}_{i_1,\cdots,i_k,l}}} \min \left\{\sigma_{j_1},\sigma_{j_2},\cdots,\sigma_{j_1,\cdots,j_n},\sigma_{j_1,\cdots,j_n,1}\right\}\\
    =&\max_{1 \leq l \leq N_{i_1,\cdots,i_k}} \min \left\{\sigma_{i_1},\cdots,\sigma_{i_1,\cdots,i_k},\kappa_{i_1,\cdots,i_k,l} \right\} \\
    =&\max_{1 \leq l \leq N_{i_1,\cdots,i_k}} \left[\min \left\{\sigma_{i_1},\cdots,\sigma_{i_1,\cdots,i_k}\right\}-\Biggl(\min \left\{\sigma_{i_1},\cdots,\sigma_{i_1,\cdots,i_k}\right\}-\kappa_{i_1,\cdots,i_k,l}\Biggr)^{+}\right]\\
    =&\min \left\{\sigma_{i_1},\cdots,\sigma_{i_1,\cdots,i_k}\right\}-\Biggl(\min \left\{\sigma_{i_1},\cdots,\sigma_{i_1,\cdots,i_k}\right\}-\max_{1 \leq l \leq N_{i_1,\cdots,i_k}} \kappa_{i_1,\cdots,i_k,l}\Biggr)^{+}\\
    =&\min \left\{\sigma_{i_1},\cdots,\sigma_{i_1,\cdots,i_k},\max_{1 \leq l \leq N_{i_1,\cdots,i_k}} \kappa_{i_1,\cdots,i_k,l}\right\}\\
    =&\min \left\{\sigma_{i_1},\cdots,\sigma_{i_1,\cdots,i_{k-1}},\kappa_{i_1,\cdots,i_k}\right\}.
\end{align*}
\noindent\textbf{Step 3: }By induction, the equality holds true for any \(i_1,\cdots,i_k\). This finishes the proof. \qed

\bigskip

In order for the equality to fail, there must exist \(i_1,\cdots,i_k\), \(l'\) and \(l''\) such that \(\kappa_{i_1,\cdots,i_k,l'} > \kappa_{i_1,\cdots,i_k,l''}\), \(\sum_{S_{j_1,\cdots,j_n} \in \mathcal{T}\cap\mathcal{D}_{i_1,\cdots,i_k,l'}}\mathbb{A}_{j_1,\cdots,j_n}=0\) but \(\sum_{S_{j_1,\cdots,j_n} \in \mathcal{T}\cap\mathcal{D}_{i_1,\cdots,i_k,l''}}\mathbb{A}_{j_1,\cdots,j_n}=1\). Suppose \(\mathbb{A}_{j^*_1,\cdots,j^*_{n}}=1\), where \(S_{j^*_1,\cdots,j^*_{n}} \in \mathcal{T}\cap\mathcal{D}_{i_1,\cdots,i_k,l''}\). This implies
\begin{align*}
\min \{\sigma_{j^*_1},\cdots,\sigma_{j^*_1,\cdots,j^*_n},\sigma_{j^*_1,\cdots,j^*_n,1}\} \geq \max_{S_{j_1,\cdots,j_n} \in \mathcal{T}\cap\mathcal{D}_{i_1,\cdots,i_k,l'}} \min \left\{\sigma_{j_1},\sigma_{j_2},\cdots,\sigma_{j_1,\cdots,j_n},\sigma_{j_1,\cdots,j_n,1}\right\}.  
\end{align*}
Furthermore, by Lemma \ref{maxminpath},
\begin{align*}
    &\kappa_{i_1,\cdots,i_k,l''} \\
    \geq &\min \{\sigma_{j^*_1},\cdots,\sigma_{j^*_1,\cdots,j^*_n},\sigma_{j^*_1,\cdots,j^*_n,1}\} \\
    \geq &\max_{S_{j_1,\cdots,j_n} \in \mathcal{T}\cap\mathcal{D}_{i_1,\cdots,i_k,l'}} \min \left\{\sigma_{j_1},\sigma_{j_2},\cdots,\sigma_{j_1,\cdots,j_n},\sigma_{j_1,\cdots,j_n,1}\right\}\\
    =&\kappa_{i_1,\cdots,i_k,l'},
\end{align*}
a contradiction. This finishes the proof.\qed

\bigskip

\noindent\textbf{Proof of Lemma \ref{maxbound}}: To show the index policy maximizes the upper-bound, I only need to show that 
\begin{align*}
    \kappa_{i'_1}>\kappa_{i''_1} \implies \sum_{S_{j_1,\cdots,j_n} \in \mathcal{T}\cap\mathcal{D}_{i'_1}} \mathbb{A}_{j_1,\cdots,j_n} \geq \sum_{S_{j_1,\cdots,j_n} \in \mathcal{T}\cap\mathcal{D}_{i''_1}} \mathbb{A}_{j_1,\cdots,j_n}.
\end{align*}
Suppose to the contrary that \(\kappa_{i'_1}>\kappa_{i''_2}\), \(\sum_{S_{j_1,\cdots,j_n} \in \mathcal{T}\cap\mathcal{D}_{i'_1}} \mathbb{A}_{j_1,\cdots,j_n}=0\) but \(\sum_{S_{j_1,\cdots,j_n} \in \mathcal{T}\cap\mathcal{D}_{i''_1}} \mathbb{A}_{j_1,\cdots,j_n}=1\). Apply Lemma \ref{maxminpath} to the child nodes of the root, then 
\begin{align*}
    \kappa_{i_1}=\max_{S_{j_1,\cdots,j_n} \in \mathcal{T} \cap \mathcal{D}_{i_1}} \min \left\{\sigma_{j_1},\cdots,\sigma_{j_1,\cdots,j_n},\sigma_{j_1,\cdots,j_n,1}\right\}.
\end{align*}
Since \(\sum_{S_{j_1,\cdots,j_n} \in \mathcal{T}\cap\mathcal{D}_{i''_1}} \mathbb{A}_{j_1,\cdots,j_n}=1\), there must exist \(S_{j^*_1,\cdots,j^*_n} \in \mathcal{T}\cap\mathcal{D}_{i''_1}\) such that \(\mathbb{A}_{j^*_1,\cdots,j^*_n}=1\). This further implies 
\begin{align*}
\min \left\{\sigma_{j^*_1},\cdots,\sigma_{j^*_1,\cdots,j^*_n},\sigma_{j^*_1,\cdots,j^*_n,1}\right\} \geq \max_{S_{j_1,\cdots,j_n} \in \mathcal{T} \cap \mathcal{D}_{i'_1}} \min \left\{\sigma_{j_1},\cdots,\sigma_{j_1,\cdots,j_n},\sigma_{j_1,\cdots,j_n,1}\right\}.
\end{align*}
However, this means \(\kappa_{i''_1} \geq \kappa_{i'_1}\), a contradiction. This finishes the proof.\qed

\bigskip

\noindent\textbf{Proof of Lemma \ref{inchazard}: }See the proof of the first part of Proposition 2 in \cite{choi2018consumer}. \qed

\bigskip

\noindent\textbf{Proof of Lemma \ref{smallcost}: }This is easily obtained by using the monotonicity of \(w\) and \(\frac{1-H(w)}{h(w)}\). \qed

\bigskip

\noindent\textbf{Proof of Proposition \ref{equi1}: } The equilibrium candidate is calculated in the main text. It remains to verify that \(p^*\) is indeed an equilibrium with active consumer search. Assumption \ref{logconcave} makes sure that the profit function first decreases in \(p\) and then increases in \(p\), and thus FOC is sufficient for optimality. Also, \(\sigma>\sigma^*\) implies \(\sigma>p^*=\frac{1-H(\sigma)}{h(\sigma)}\), so there is active search.  \qed

\bigskip

\noindent\textbf{Proof of Proposition \ref{equi2}: }The equilibrium candidate is calculated in the main text. It remains to verify that \(p^{\dagger}\) is indeed an equilibrium with active consumer search. I show that \(\frac{\int_{\sigma-r}^{\infty}[1-G(w-p^{\dagger}+p-x)]dF(x)}{\int_{\sigma-r}^{\infty}g(w-p^{\dagger}+p-x)dF(x)}\) is decreasing in \(p\), and thus FOC is sufficient for optimality. To see this, define a random variable \(\tilde{X}\) supported over \([\sigma-r,\infty)\) as follows: Let its distribution \(\tilde{F}(.):=\frac{F(.)}{1-F(\sigma-r)}\). Then one can see that
\begin{align*}
    \frac{\int_{\sigma-r}^{\infty}[1-G(\sigma-p^{\dagger}+p-x)]dF(x)}{\int_{\sigma-r}^{\infty}g(\sigma-p^{\dagger}+p-x)dF(x)}=\frac{\int_{-\infty}^{\infty}[1-G(\sigma-p^{\dagger}+p-x)]d\tilde{F}(x)}{\int_{-\infty}^{\infty}g(\sigma-p^{\dagger}+p-x)d\tilde{F}(x)},
\end{align*}
which is the inverse hazard rate of the random variable \(\tilde{X}+Y\), evaluated at \(\sigma-p^{\dagger}+p\). Since \(X\) and \(Y\)  both have log concave distributions, the distribution of \(\tilde{X}+Y\) is also log concave, and thus the inverse hazard rate is decreasing. Then by increasing \(p\), the whole term decreases. Additionally, \(p^{\dagger} \leq \frac{1-G(r)}{g(r)} <\sigma\), so there is active consumer search.
\qed

\bigskip

\noindent\textbf{Proof of Proposition \ref{stage2higher}: }Note that, \(W=X+\min \{Y,r\}> \sigma\) is equivalent to \(X+Y > \sigma\) and \(X+r>\sigma\). Since \(X\) and \(Y\) are continuous random variables, 
\begin{align*}
\mathrm{Pr}(W > \sigma)&=1-H(\sigma)\\
&=\int_{\sigma-r}^{\infty} [1-G(\sigma-x)]dF(x),
\end{align*}
which is exactly the numerator of \(p^{\dagger}\). Furthermore,
\begin{align*}
h(\sigma)&=-\frac{d\int_{\sigma-r}^{\infty} [1-G(\sigma-x)]dF(x)}{d\sigma}\\
&=\int_{\sigma-r}^{\infty} g(\sigma-x)dF(x)+[1-G(r)]f(\sigma-r)\\
&\geq \int_{\sigma-r}^{\infty} g(\sigma-x)dF(x),
\end{align*}
with equality when \(\underline{x}+r \geq \sigma\), i.e., there is no option value.
This finishes the proof. \qed

\bigskip

\noindent\textbf{Proof of Lemma \ref{option}: } Observe that
\begin{align*}
    \mathbb{E}[(X+\min \{Y,r\}-\sigma)^++(Y-r)^+]=c_X+c_Y
\end{align*}
and
\begin{align*}
    (X+\min \{Y,r\}-\sigma)^++(Y-r)^+ \geq (X+Y-\sigma)^+.
\end{align*}
Therefore
\begin{align*}
    \mathbb{E}[(X+Y-\sigma)^+] \leq c_X+c_Y,
\end{align*}
which gives \(\sigma \geq \hat{\sigma}\).  \qed

\bigskip

\noindent\textbf{Proof of Lemma \ref{elasticity}: } Let \(\lambda\) denote the hazard rate. Define \(S(t)=1-F(t)\).  For \(\underline{x}+\underline{y} \leq t \leq \overline{x}+r\),
\begin{align*}
    &\lambda_W(t)=\frac{\int_{-\infty}^{\infty} f(t-\min \{y,r\})g(y)dy}{\int_{-\infty}^{\infty} S(t-\min\{y,r\})g(y)dy}=\frac{\int_{-{\infty}}^rf(t-y)g(y)dy+\int_r^{\infty}f(t-r)g(y)dy}{\int_{-\infty}^rS(t-y)g(y)dy+\int_r^{\infty}S(t-r)g(y)dy}=\frac{A+a}{B+b},\\
    &\lambda_Z(t)=\frac{\int_{-\infty}^{\infty} f(t-y)g(y)dy}{\int_{-\infty}^{\infty} S(t-y)g(y)dy}=\frac{\int_{-{\infty}}^rf(t-y)g(y)dy+\int_r^{\infty}f(t-y)g(y)dy}{\int_{-\infty}^rS(t-y)g(y)dy+\int_r^{\infty}S(t-y)g(y)dy}=\frac{A+c}{B+d}.
\end{align*}
Notice that when \(t \leq \underline{x}+r\), \(a=c\) and \(b=d\).
Note that \(\frac{f(x)}{S(x)}\) is increasing for \(-\infty <x <\overline{x}\), but I need to deal with the end points of the integrals since \(\frac{f(x)}{S(x)}\) is not well defined for \(x \geq \overline{x}\).

Since \(F\) has increasing hazard rate,
\(\frac{c}{d} \leq \frac{f(t-r)}{S(t-r)}= \frac{a}{b} \leq \frac{A}{B}\). Also, \(b \leq d\) since \(S\) is decreasing. Let \(t_1=\frac{c}{d}, t_2=\frac{a}{b},t_3=\frac{A}{B}\). Then,
\begin{align*}
    \lambda_W(t)-\lambda_Z(t)&=\frac{A+a}{B+b}-\frac{A+c}{B+d}\\
    &=\frac{1}{(B+b)(B+d)}(AB+Ad+aB+ad-AB-Ab-Bc-bc)\\
    &=\frac{1}{(B+b)(B+d)}(t_3 Bd+t_2Bb+t_2bd-t_3 Bb-t_1Bd-t_1bd)\\
    &=\frac{1}{(B+b)(B+d)}[(t_3-t_1)Bd+(t_2-t_1)bd+(t_2-t_3)Bb]\\
    &\geq \frac{1}{(B+b)(B+d)}[(t_3-t_1)Bb+(t_2-t_1)bd+(t_2-t_3)Bb]\\
    &=\frac{1}{(B+b)(B+d)}[(t_2-t_1)(bd+Bb)]\\
    &\geq 0.
\end{align*} 
This finishes the proof. \qed

\bigskip

\noindent\textbf{Proof of Theorem \ref{lowerprice}: } By Lemma \ref{option} and \ref{elasticity}, \(p^*=\frac{1}{\lambda_W(\sigma)} \leq \frac{1}{\lambda_Z(\sigma)} \leq \frac{1}{\lambda_Z(\hat{\sigma})}=\hat{p}\).\qed

\bigskip

\noindent\textbf{Proof of Corollary \ref{highercs}: } According to Lemma \ref{upperbound}-\ref{maxbound}, the buyer's expected payoff in the benchmark case is
\begin{align*}
    \mathbb{E}\left[\sum_{i=1}^n \mathbb{A}_i \min \{\hat{\sigma}-\hat{p},X_i+Y_i-\hat{p}\}\right].
\end{align*}
Using the sandwich theorem,
\begin{align*}
    \lim_{n \to \infty} [1-K(\hat{\sigma})^n](\hat{\sigma}-\hat{p})\leq \lim_{n \to \infty} \mathbb{E}\left[\sum_{i=1}^n \mathbb{A}_i \min \{\hat{\sigma}-\hat{p},X_i+Y_i-\hat{p}\}\right] \leq \hat{\sigma}-\hat{p},
\end{align*}
which implies \(\mathbb{E}\left[\sum_{i=1}^n \mathbb{A}_i \min \{\hat{\sigma}-\hat{p},X_i+Y_i-\hat{p}\}\right]=\hat{\sigma}-\hat{p}\). Similarly, the buyer's expected payoff under gradual resolution of uncertainty is \(\sigma-p^*\). Since \(\sigma \geq \hat{\sigma}\) and \(p^* \leq \hat{p}\), the consumer surplus is higher.
\qed

\bigskip

\noindent\textbf{Proof of Lemma \ref{nomix2}: }Given any equilibrium, take an arbitrary \(p_1\) set by firm \(i\). I show that there is a unique stage 2 price that maximizes the firm's profit. To see this, let \(\sigma-\tilde{p}\) be the stage 1 index. Having inspected \(X_i\), buyers see \(p_1\) and believe that the stage 2 price will be drawn according to the putative equilibrium and will use the stage 2 index \(X_i+r-p\) to guide their search. If firm \(i\) sets \(p'\), \(\kappa_i=\min \{\sigma-\tilde{p},X_i+r-p,X_i+Y_i-p'\}\). Firm \(i\)'s demand under monopolistic competition is thus a constant times \(\mathrm{Pr}(\kappa_i \geq \sigma-\tilde{p})=\int_{\sigma-\tilde{p}+p-r}^{\infty}[1-G(\sigma-\tilde{p}+p'-x)]dF(x)\). The optimal \(p'\) should solve the FOC:
\begin{align*}
    \frac{d\mathrm{Pr}(\kappa_i \geq \sigma-\tilde{p})p'}{dp'}=p'\int_{\sigma-\tilde{p}+p-r}^{\infty}-g(\sigma-\tilde{p}+p'-x)]dF(x)+\int_{\sigma-\tilde{p}+p-r}^{\infty}[1-G(\sigma-\tilde{p}+p'-x)]dF(x)=0,
\end{align*}
which gives 
\begin{align*}
    p'=\frac{\int_{\sigma-\tilde{p}+p-r}^{\infty}[1-G(\sigma-\tilde{p}+p'-x)]dF(x)}{\int_{\sigma-\tilde{p}+p-r}^{\infty}g(\sigma-\tilde{p}+p'-x)]dF(x)}.
\end{align*}
Since the RHS decreases with \(p'\), the FOC is sufficient and there is a unique \(p'\) that solves it. This finishes the proof. \qed

\bigskip

\noindent\textbf{Proof of Lemma \ref{uniquestage2}: }Suppose that in an equilibrium, firm \(i\) sets \(p\) at stage 2. Then given \(\kappa_i=\min \{\sigma-\tilde{p},X_i+r-p,X_i+Y_i-p'\}\), \(p\) has to be such that setting \(p'=p\) is optimal. Building on the last proof, this requires 
\begin{align*}
    p&=\frac{\int_{\sigma-\tilde{p}+p-r}^{\infty}[1-G(\sigma-\tilde{p}+p-x)]dF(x)}{\int_{\sigma-\tilde{p}+p-r}^{\infty}g(\sigma-\tilde{p}+p-x)]dF(x)}\\
    &=\frac{\int_{-\infty}^r [1-G(y)]f(\sigma-\tilde{p}+p-y)dy}{\int_{-\infty}^r g(y)f(\sigma-\tilde{p}+p-y)dy}.
\end{align*}
To prove this lemma, I show the RHS is decreasing, which gives the uniqueness of \(p\). This is equivalent to proving the following lemma:
\begin{lemma}\label{Karlin}
    \(\frac{\int_{-\infty}^r [1-G(y)]f(t-y)dy}{\int_{-\infty}^r g(y)f(t-y)dy}\) is decreasing in \(t\).
\end{lemma}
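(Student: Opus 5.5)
The plan is to recognise the ratio in Lemma \ref{Karlin} as a conditional expectation of a monotone function under a family of densities that is ordered in $t$ by the monotone likelihood ratio property, and then appeal to the standard fact (a special case of the variation diminishing property of \cite{karlin1968total}) that such expectations are monotone. First I define, for each $t$ at which the denominator is positive, the probability density on $(-\infty,r]$
\begin{align*}
    \mu_t(y)=\frac{g(y)f(t-y)\mathbf{1}\{y\le r\}}{\int_{-\infty}^r g(u)f(t-u)du},
\end{align*}
and the function $\phi(y)=\frac{1-G(y)}{g(y)}$ on the support of $G$. Then the ratio equals $\int \phi(y)\mu_t(y)dy=\mathbb{E}_{\mu_t}[\phi(Y)]$. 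So the claim reduces to two facts: $\phi$ is decreasing in $y$, and $\mu_t$ increases in $t$ in the likelihood ratio order, i.e.\ puts relatively more mass on larger $y$.

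For the first fact I use Assumption \ref{logconcave}. Log-concavity of $g$ implies $G$ has an increasing hazard rate, the same fact used for $H$ in Lemma \ref{inchazard} and for $F$ in the proof of Lemma \ref{elasticity}. Hence $\phi=1/\lambda_G$ is decreasing on $[\underline{y},\overline{y})$.

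For the second fact, take $t'>t$. The likelihood ratio is $\mu_{t'}(y)/\mu_t(y)\propto f(t'-y)/f(t-y)$ on $y\le r$. Log-concavity of $f$ means $\log f(t'-y)-\log f(t-y)$ is nondecreasing in $y$, because $f$ is PF$_2$ (its translation kernel $f(t-y)$ is TP$_2$ in $(t,y)$). So $\mu_{t'}\succeq_{LR}\mu_t$, and therefore $\mu_{t'}$ first-order stochastically dominates $\mu_t$. Since $\phi$ is decreasing, $\mathbb{E}_{\mu_{t'}}[\phi]\le \mathbb{E}_{\mu_t}[\phi]$, which is the claim. Equivalently, I could show the sign of $\partial_t$ of the ratio is the sign of $\mathrm{Cov}_{\mu_t}\bigl(\phi(Y),\,-(\log f)'(t-Y)\bigr)$. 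Here $\phi$ is decreasing and $-(\log f)'(t-y)$ is increasing in $y$, so the covariance is nonpositive by Chebyshev's association inequality. I would present the likelihood-ratio version since it avoids differentiability of $f$.

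The main obstacle is the boundary and support bookkeeping rather than the monotonicity argument itself. The density $f$ vanishes outside $[\underline{x},\overline{x}]$, so the ratio $f(t'-y)/f(t-y)$ can be $0/0$ or $c/0$, and the effective integration range $[\max\{\underline{y},t-\overline{x}\},\min\{r,t-\underline{x}\}]$ moves with $t$. I would handle this by working with the TP$_2$ inequality $f(t'-y')f(t-y)\ge f(t'-y)f(t-y')$ for $y'>y$, $t'>t$. This inequality holds including zeros, because a log-concave density on an interval is PF$_2$, and it suffices for the likelihood ratio order. I would restrict attention to $t$ for which the denominator is positive, which is exactly the relevant range $t=\sigma-\tilde p+p$ in the proof of Lemma \ref{uniquestage2}. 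Finally, $\phi$ need only be decreasing on $[\underline{y},r]\subset[\underline{y},\overline{y})$, where $g>0$, so $\phi$ is well defined there.
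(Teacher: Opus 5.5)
Your proof has a genuine gap: the identity $\text{ratio}=\mathbb{E}_{\mu_t}[\phi(Y)]$ fails for a relevant range of $t$. Writing $1-G(y)=\phi(y)g(y)$ is valid only where $g>0$. The numerator, however, integrates $[1-G(y)]f(t-y)$ over all $y\le r$, and for $y<\underline{y}$ you have $g(y)=0$ but $1-G(y)=1$. Whenever $t<\overline{x}+\underline{y}$, the set $[t-\overline{x},\underline{y})$ is nonempty and $f(t-y)>0$ on it. The numerator therefore contains the extra mass $\int_{t-\overline{x}}^{\underline{y}}f(t-y)\,dy=1-F(t-\underline{y})>0$, which $\mu_t$ cannot see because it carries the factor $g$. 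Your bookkeeping paragraph starts the effective range at $\max\{\underline{y},t-\overline{x}\}$, but that is the range of the denominator only.

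For a concrete check, take $F=G=U[0,1]$, $r=1/2$, $t=0.8$. The ratio is $0.575/0.5=1.15$, while $\mathbb{E}_{\mu_t}[\phi(Y)]$ is the mean of $1-Y$ with $Y$ uniform on $[0,1/2]$, namely $0.75$. This range matters for the application. There $t=\sigma-\tilde p+p$ varies with $p$, and $\sigma$ itself falls below $\overline{x}+\underline{y}$ once $c_X$ is moderately large. As written, your argument, including the covariance variant, establishes the lemma only for $t\ge\overline{x}+\underline{y}$.

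The missing idea is to avoid dividing by $g$. The paper's proof fixes a constant $c$ and uses that the ratio exceeds $c$ iff $N_c(t)=\int_{-\infty}^r[1-G(y)-cg(y)]f(t-y)\,dy>0$. The factor $1-G-cg$ equals $1$ below $\underline{y}$ and has the sign of $\phi-c$ on $[\underline{y},r]$. It therefore crosses zero at most once, from $+$ to $-$, on all of $(-\infty,r]$. Since $f(t-y)$ is TP$_2$, the variation diminishing property makes $N_c$ cross at most once from $+$ to $-$ in $t$, for every $c$, and that is exactly monotonicity of the ratio.

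Your likelihood-ratio idea fits this argument if you make two changes. Put the LR order on $\nu_t\propto f(t-y)\mathbf{1}\{y\le r\}$, with Lebesgue base and no factor $g$. Then replace ``expectation of a decreasing function'' by ``sign of the expectation of a single-crossing function.''

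Alternatively, keep your decomposition and add the boundary term explicitly. Let $D(t)=\int_{\underline{y}}^r g(y)f(t-y)\,dy$, so that the ratio equals $\mathbb{E}_{\mu_t}[\phi(Y)]+[1-F(t-\underline{y})]/D(t)$. The second term is nonincreasing because
$D(t)/[1-F(t-\underline{y})]=\int_{\underline{y}}^r g(y)\,\frac{f(t-y)}{1-F(t-y)}\,\frac{1-F(t-y)}{1-F(t-\underline{y})}\,dy$. For $y\ge\underline{y}$, both factors in the integrand are nondecreasing in $t$ by the increasing hazard rate of $F$. The term also vanishes for $t\ge\overline{x}+\underline{y}$, so it is nonincreasing on the whole range.
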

\noindent\textbf{Proof of Lemma \ref{Karlin}: } Take an arbitrary \(c\) and define \(N_c(t)=\int_{-\infty}^r \phi_c(t)K(t,y)dy\), where \(K(t,y)=f(t-y)\) and \(\phi_c(t)=1-G(y)-cg(y)\). Since \(F\) and \(G\) are log-concave distributions, \(K\) is \(TP_2\), \(\phi_c\) is single-crossing and it goes from positive to negative. By Theorem 3.1 in \cite{karlin1968total}, \(N_c\) is also single-crossing and it exhibits the same sequence of sign changes as \(\phi_c\). Note also that,
\begin{align*}
    N_c(t)=\left[\frac{\int_{-\infty}^r [1-G(y)]f(t-y)dy}{\int_{-\infty}^r g(y)f(t-y)dy}-c\right]\int_{-\infty}^r g(y)f(t-y)dy,
\end{align*}
so \(\frac{\int_{-\infty}^r [1-G(y)]f(t-y)dy}{\int_{-\infty}^r g(y)f(t-y)dy}-c\) has the same sign as \(N_c(t)\). This establishes the lemma.
\qed

\bigskip

Thus, given any \(\tilde{p}\), there exists a unique \(p\) under which the firm has no incentive to deviate. This finishes the proof. \qed

\bigskip

\noindent\textbf{Proof of Proposition \ref{unique2}: }I first show that \(p^{\dagger}\) is the only candidate for equilibrium, then I show that it is indeed an equilibrium.

\noindent\textbf{Part 1: }According to Lemma \ref{nomix2} and \ref{uniquestage2}, \(\kappa_i=\{\sigma-p,X_i+r-p,X_i+Y_i-p\}\). Analysis in Section \ref{s43} shows that the only candidate is \(p=p^{\dagger}\).

\noindent\textbf{Part 2: }To show that \(p=p^{\dagger}\) is indeed an equilibrium, note that only deviating at stage 2 is not profitable from Section \ref{s43}. For deviation at stage 1, it deters all consumer search, and thus cannot be profitable. \qed

\bigskip

\noindent\textbf{Proof of Lemma \ref{nomix1}: }Similar to the proof of Lemma \ref{nomix2}, firm \(i\)'s demand is a constant times \(\mathrm{Pr}(\kappa_i \geq \sigma-\tilde{p})=\int_{\sigma-\tilde{p}+p'-r}^{\infty}[1-G(\sigma-\tilde{p}+p-x)]dF(x)\), thus taking the derivative of the profit (ignoring this constant) yields
\begin{align*}
     &p\int_{\sigma-\tilde{p}+p'-r}^{\infty}-g(\sigma-\tilde{p}+p-x)]dF(x)+\int_{\sigma-\tilde{p}+p'-r}^{\infty}[1-G(\sigma-\tilde{p}+p-x)]dF(x)\\
     =&\left[\frac{\int_{\sigma-\tilde{p}+p'-r}^{\infty}[1-G(\sigma-\tilde{p}+p-x)]dF(x)}{\int_{\sigma-\tilde{p}+p'-r}^{\infty}g(\sigma-\tilde{p}+p-x)]dF(x)}-p\right]\int_{\sigma-\tilde{p}+p'-r}^{\infty}g(\sigma-\tilde{p}+p-x)]dF(x),
\end{align*}
so the profit first increases in \(p\); after the peak, it decreases in \(p\). Thus, there is a unique \(p\) that maximizes the firm's profit under the constraint that the price cannot exceed the price at stage 1. This finishes the proof. \qed

\bigskip

\noindent\textbf{Proof of Lemma \ref{uniquestage}: }Given a putative equilibrium, suppose that there exists some stage 1 price such that buyers believe the price revealed at stage 2 will be \(p\) upon seeing this price. Note that there can be a range of \(p\) such that the firm has no profitable deviation under the regulation. Under such \(p\), the optimal \(p'\) without regulation is  greater than \(p\), so under the regulation the optimal price at stage 2 is \(p\) itself. Such \(p\) is determined by the following inequality:
\begin{align*}
    p-\frac{\int_{\sigma-\tilde{p}+p-r}^{\infty}[1-G(\sigma-\tilde{p}+p-x)]dF(x)}{\int_{\sigma-\tilde{p}+p-r}^{\infty}g(\sigma-\tilde{p}+p-x)]dF(x)} \leq 0.
\end{align*}
Since the LHS is strictly increasing, given \(\tilde{p}\), this range of \(p\) forms an interval. Next, consider \(\kappa_i=\min \{\sigma-\tilde{p},X_i+r-p,X_i+Y_i-p\}\). The profit for firm \(i\) at \(p\) is 
\begin{align*}
    p[1-H(\sigma-\tilde{p}+p)],
\end{align*}
and taking its derivative yields
\begin{align*}
    \left[\frac{1-H(\sigma-\tilde{p}+p)}{h(\sigma-\tilde{p}+p)}-p\right]h(\sigma-\tilde{p}+p),
\end{align*}
so the profit first increases in \(p\); after the peak, it decreases in \(p\). Therefore, if an equilibrium reveals multiple distinct prices at stage 2, these prices must fall on opposite sides of the peak. However, this cannot be an equilibrium phenomenon as the firm can profitably deviate by setting prices at both stages at the peak. Observing such a deviation, the buyer will think the price stays the same at stage 2 and will follow the optimal search policy given this belief. The firm's optimal price given buyers' search strategy is exactly the same price, which gives a strictly higher profit. This finishes the proof.
\qed

\bigskip

\noindent\textbf{Proof of Proposition \ref{unique1}: }I first show that \(p^*\) is the only candidate for equilibrium, then I show that it is indeed an equilibrium.

\noindent\textbf{Part 1: }According to Lemma \ref{nomix1} and \ref{uniquestage}, a unique price is revealed at stage 2, the price at stage 1 exactly equals this price and it is exactly at the peak. This gives
\begin{align*}
    p=\frac{1-H(\sigma)}{h(\sigma)},
\end{align*}
which is exactly \(p^*\).

\noindent\textbf{Part 2: }I show that there is no profitable deviation with \(p_1=p_2=p^*\). There are two types of deviation: revealing \(p^*\) at stage 1 and another price at stage 2, or revealing a different price at stage 1. For the first type of deviation, I show that under \(p^*\), the optimal price at stage 2 without the regulation is greater than \(p^*\). Since \(\kappa_i=\min \{\sigma-p^*,X_i+r-p^*,X_i+Y_i-p'\}\), the profit is \(p'\int_{\sigma-r}^{\infty}[1-G(\sigma-p^*+p'-x)]dF(x)\), and the FOC is 
\begin{align*}
    p'=\frac{\int_{\sigma-r}^{\infty}[1-G(\sigma-p^*+p'-x)]dF(x)}{\int_{\sigma-r}^{\infty}g(\sigma-p^*+p'-x)dF(x)}.
\end{align*}
When \(p'=p^*\), the RHS is exactly \(p^{\dagger}\), which is greater than \(p^*\) by Proposition \ref{stage2higher}, thus the solution \(p' \geq p^*\), implying there is no profitable deviation. For the second type of deviation, suppose the firm sets stage 1 price \(\overline{p} \neq p^*\). Since this deviation is off-path, according to the belief updating rule, buyers believe that \(\overline{p}\) will also be the price charged at stage 2. 
However, in this case the firm’s profit coincides with that from deviating to \(p=\overline{p}\) in Section \ref{s42}. And since \(p^*\) is indeed an equilibrium there, this deviation cannot be profitable. This establishes the result.
\qed

\bigskip
\noindent\textbf{Proof of Proposition \ref{unique2weaker}: }   Following the analysis in the main text, I only check there is no profitable stage 1 deviation. Suppose firm \(i\) deviates at stage 1. By Assumption \ref{weaker}, consumers believe the final price will be higher than \(p^{\dagger}\), say it is \(\overline{p}\). Firm \(i\)'s profit then satisfies
\begin{align*}
    &\max_{\tilde{p}}\tilde{p}\int_{\sigma-p^{\dagger}+\overline{p}-r}^{\infty} [1-G(\sigma-p^{\dagger}+\tilde{p}-x)]dF(x)\\
    \leq &\max_{\tilde{p}}\tilde{p}\int_{\sigma-r}^{\infty} [1-G(\sigma-p^{\dagger}+\tilde{p}-x)]dF(x),
\end{align*}
which is the profit before deviation. This finishes the proof.
\qed

\end{document}